\newtheorem{thm}{Theorem}[section]
 \newtheorem{lem}[thm]{Lemma}
 \newtheorem{prop}[thm]{Proposition}
 \theoremstyle{definition}
 \theoremstyle{remark}
 \newtheorem{rem}[thm]{Remark}
\journal{Journal of \LaTeX\ Templates}
\makeatletter \@addtoreset{equation}{section}
\renewcommand{\theequation}{\arabic{section}.\arabic{equation}}
\begin{document}

\begin{frontmatter}
\title{Inverse scattering transform and soliton solutions for the
modified matrix Korteweg-de Vries equation with nonzero boundary conditions}
\tnotetext[mytitlenote]{
Corresponding author.\\
\hspace*{3ex}\emph{E-mail addresses}: jinjieyang@cumt.edu.cn (J.J. Yang), sftian@cumt.edu.cn,
shoufu2006@126.com (S. F. Tian) and
zqli@cumt.edu.cn (Z. Q. Li)}

\author{Jin-Jie Yang, Shou-Fu Tian$^{*}$ and Zhi-Qiang Li}
\address{
School of Mathematics and Institute of Mathematical Physics, China University of Mining and Technology,\\ Xuzhou 221116, People's Republic of China\\
}

\begin{abstract}
The theory of inverse scattering is developed to study the initial-value problem for the modified matrix Korteweg-de Vries (mmKdV) equation with the $2m\times2m$ $(m\geq 1)$ Lax pairs under the nonzero boundary conditions at infinity. In the direct problem, by introducing a suitable uniform transformation we establish the proper complex $z$-plane in order to discuss the Jost eigenfunctions, scattering matrix and their analyticity and symmetry of the equation. Moreover the asymptotic behavior of the Jost functions and scattering matrix needed in the inverse problem are analyzed via  Wentzel-Kramers-Brillouin
expansion. In the inverse problem, the generalized Riemann-Hilbert problem of the mmKdV equation is first established by using the analyticity of the modified eigenfunctions and scattering coefficients. The reconstruction formula of potential function with reflection-less case is derived by solving this Riemann-Hilbert problem and using the scattering data. In addition the dynamic behavior of the solutions for the focusing mmKdV equation including one- and two- soliton solutions are presented in detail under the the condition that the potential is scalar and the $2\times2$ symmetric matrix. Finally, we provide some detailed proofs and weak version of trace formulas to show that the asymptotic phase of the potential and the scattering data.

\end{abstract}

\begin{keyword}
The modified matrix Korteweg-de Vries equation \sep The generalized Riemann-Hilbert problem \sep  Nonzero boundary conditions  \sep Soliton solutions \sep  Breather wave solutions.
\end{keyword}

\end{frontmatter}


\tableofcontents

\section{Introduction}
The theory of nonlinear dynamics has aroused considerable interest and has established a connection with some directions in the field of soliton theory. It is well known that the Korteweg-de Vries (KdV) equation, the Sasa-Satsuma equation, the
nonlinear Schr\"{o}dinger (NLS) equation are the important typical and fully studied nonlinear integrable equation, which can describe a series of nonlinear wave phenomena in dispersive physical structures. One of the examples is the modified KdV equation
\begin{align}\label{Q1}
q_{t}+q_{xxx}-6\epsilon q^{2}q_{x}=0,
\end{align}
where the subscripts denote the corresponding partial derivatives, $q$ is the real scalar function, $(x,t)\in R^{2}$, and $\epsilon=-1,1$ denote the focusing and defocusing modified KdV equation, respectively.

The Eq.\eqref{Q1} can be applied to many fields, including Alfv\'{e}n waves in collision-less plasmas \cite{Khater-1998}, hyperbolic surfaces \cite{Schief-1995}, and thin elastic rods \cite{Matsutani-1991}  etc.  There are also lots of results about the focusing or defocusing modified KdV Eq.\eqref{Q1} \cite{Wadati-1973,Wadati-1982, Yeung-1988,Yeung-1984, He-2005, Baldwin-2013,Miura-1968,Hirota-1972,Wadati-1998} due to its simple expression and rich physical application. In addition, the more extensive form of KdV equation has also been studied in detail by some authors, such as coupled modified KdV \cite{Tian-JPA}, multi-component form \cite{Yajima-1975,Sasa-1991} and matrix form \cite{Athorne-1987}.

In this work, we consider the modified matrix KdV (mmKdV) equation read as \cite{Wadati-1998}
\begin{align}\label{T1}
Q_{t}+Q_{xxx}-3\epsilon\left(Q_{x}Q^{\dagger}Q+QQ^{\dagger}Q_{x}\right)=0,
\end{align}
where the potential function $Q(x,t)$ is a $p\times q$ matrix function, the superscript $\dagger$ represents the Hermitian conjugate, and the symbol  $\epsilon=-1,1$  denote  the focusing and defocusing mmKdV, respectively.
The multi-soliton solutions of the Eq.\eqref{T1} have been derived in \cite{Wadati-1998} via the inverse scattering method (ISM) proposed by Gardner, Greene, Kruskal and Miura \cite{GGKM-1967}. In Ref.\cite{Ablowitz-2004}, they have studied in detail the Eq.\eqref{T1} with  sufficient smooth potential by the ISM and Riemann-Hilbert (RH) problem. Since the ISM was proposed, it has become a powerful tool for the analysis of nonlinear partial differential equations (PDEs). Many valuable results  have been obtained by using the ISM, such as the general coupled NLS equation \cite{Tian-jde}, coupled mKdV system \cite{Ma-jgp}, the mixed coupled NLS equation \cite{Tian-pra}, the Fokas-Lenells equation \cite{Fan-jde,YFan-2019}, the Sasa-Satsuma equation \cite{Lfan-2019,Geng-jde}, Gerdjikov-Ivanov type of derivative NLS equation \cite{Tian-pamc}, the quartic NLS equation \cite{Liu-ND}, the Hirota equation \cite{YanZhang-2020}, three-component coupled NLS equation \cite{PTian-2019}, the Kundu-Eckhaus equation \cite{Wang-jde,Guo-jmp,FanL-2019,jianxu,YTian-2019} etc.

When the  matrix potential function $Q(x,t)$ is a $2\times2$ and symmetric matrix:
\begin{align}\label{Q}
 Q(x,t)=\left(\begin{array}{cc}
 q_{1}(x,t) & q_{0}(x,t)\\
 q_{0}(x,t) & q_{2}(x,t)\\
 \end{array}\right),
 \end{align}
then the Eq.\eqref{T1} can be written in the form of the following components
\begin{align*}
q_{1,t}+q_{1,xxx}-6\epsilon\left[q_{1,x}\left(|q_{1}|^{2}+|q_{0}|^{2}\right)+
q_{0,x}\left(q_{1}q_{0}^{*}+q_{0}q_{2}^{*}\right)\right]=0,\\
q_{0,t}+q_{0,xxx}-3\epsilon\left[q_{0,x}\left(|q_{1}|^{2}+2|q_{0}|^{2}+|q_{2}|^{2}
\right)+q_{1,x}\left(q_{0}q_{1}^{*}+q_{2}q_{0}^{*}\right)+q_{2,x}
\left(q_{1}q_{0}^{*}+q_{0}q_{2}^{*}\right)\right]=0,\\
q_{2,t}+q_{2,xxx}-6\epsilon\left[q_{2,x}\left(|q_{2}|^{2}+|q_{0}|^{2}\right)+
q_{0,x}\left(q_{0}q_{1}^{*}+q_{2}q_{0}^{*}\right)\right]=0,
\end{align*}
where the asterisk `*' represents complex conjugation.  The coupled modified KdV has been studied by Geng \cite{Geng-2017} with a rapidly decaying potential $Q(x,t)$. The non-zero boundary conditions (NZBCs) of the modified KdV equation has been given by Yan \cite{Yan-2018}. However, there are no work to study the NZBCs in the multi-component case of the modified KdV equation. For multi-component nonlinear equations, Biondini, Kraus, Ieda et al. have established the frame of NZBCs for the Schr\"{o}dinger equation \cite{Ortiz-2019, Ieda-2007, Demontis-2019, Prinari-2006, Kraus-2015, Biondini-2016}. Inspired by this, the purpose of this work is to establish a fundamental frame of NZBCs for the mmKdV equation with a general case $Q(x,t)$ is an $m\times m$  symmetric matrix $(m\geq 1)$. It's noted that we consider the system Eq.\eqref{T1} under the NZBCs as $x\rightarrow \pm\infty$
\begin{align}\label{T2}
Q(x,t)\rightarrow Q_{\pm},
\end{align}
and assume  the constraints
\begin{align}\label{T3}
Q_{\pm}Q_{\pm}^{\dagger}=Q_{\pm}^{\dagger}Q_{\pm}=k_{0}^{2}I_{m},
\end{align}
where $k_{0}$ is a real, positive constant, and the $I_{m}$ denotes the unit matrix of order $m$. For the special case Eq.\eqref{Q}, the Eq.\eqref{T3} implies
\begin{align*}
|q_{0,\pm}|^2+|q_{1,\pm}|^2=|q_{0,\pm}|^2+|q_{2,\pm}|^2=k_{0}^{2},\quad
q_{1,\pm}^{*}q_{0,\pm}+q_{0,\pm}^{*}q_{2,\pm}=0,\quad
|q_{1,\pm}|^2=|q_{2,\pm}|^2.
\end{align*}

The outline of the work is arranged as: In section 2, we consider the direct scattering problem of the spectrum problem of   Eq.\eqref{T1}, including the analytical, asymptotic and symmetric properties of the Jost eigenfunctions and scattering matrix, and analyze the discrete spectrum and residue conditions. In section 3, a generalized RH problem is established based on the modified eigenfunctions, from which the potential can be reconstructed. In section 4, we discuss two special cases in combination with the focusing mmKdV equation, one is that the potential function is a scalar, and the other is that the potential function is a $2\times2$ symmetric matrix, and the propagation behavior corresponding to the solution of one- and two-soliton solutions are given by the appropriate parameters. In section 5, some detailed proofs are presented in the appendix, and the relationship between potential function and scattering data is analyzed. Finally some conclusions and discussions are presented in the last section.

\section{Direct scattering problem with NZBCs}
\subsection{Lax pairs}
In this section, we give the Lax pairs of the system Eq.\eqref{T1} and analyze what form the Lax pairs will become under the condition of NZBCs Eq.\eqref{T2}, which provides convenience for constructing the Jost functions later.

From the Ref.\cite{Wadati-1998}, the Lax pairs of Eq.\eqref{T1} read
\begin{align}\label{Lax}
\left\{\begin{aligned}
\varphi_{x}=M\varphi,\\
\varphi_{x}=N\varphi,
\end{aligned}\right.
\end{align}
where
\begin{align}\label{T4}
&M=-ik\underline{\sigma}_{3}+\underline{Q},\quad\underline{\sigma}_{3}=\left(\begin{array}{cc}
I_{m} & 0_{m}\\
0_{m} & I_{m}
\end{array}\right),\quad
\underline{Q}=\left(\begin{array}{cc}
0_{m} & Q\\
\epsilon Q^{\dagger} & 0_{m}
\end{array}\right),\\
&N=-4ik^{3}\underline{\sigma}_{3}+4k^{2}\underline{Q}-2ik(\underline{Q}^{2}
+\underline{Q}_{x})\underline{\sigma}_{3}-\underline{Q}_{xx}+2\underline{Q}^{3}
+\underline{Q}_{x}\underline{Q}-\underline{Q}\underline{Q}_{x},
\end{align}
and $0_{m}$ represents the $m\times m$ zero matrix. The Eq.\eqref{Lax}  with $x$ derivative is called scattering problem, and the Eq.\eqref{Lax} with $t$ derivative is called time-dependent problem. It is easy to verify that Lax pairs satisfy the compatibility condition $M_{t}-N_{x}+[M,N]=0$, here $[A,B]=AB-BA$.

Under nonzero boundary conditions Eq.\eqref{T2}, the Eq.\eqref{Lax} is transformed into
\begin{align}\label{T5}
\left\{\begin{aligned}
\varphi_{x}&=M_{\pm}\varphi=(-ik\underline{\sigma}_{3}+\underline{Q}_{\pm})\varphi,\\
\varphi_{t}&=N_{\pm}\varphi=(4k^{2}+2k_{0}^{2})M_{\pm}\varphi,\end{aligned}\right.
\end{align}
which means that there is a reversible matrix to diagonalize $M_{\pm}$ and $N_{\pm}$. It's noted that
\begin{align}\label{T6}
Q_{\pm}Q_{\pm}^{\dagger}=Q_{\pm}^{\dagger}Q_{\pm}=k_{0}^{2}I_{m}\Leftrightarrow
\underline{Q}_{\pm}\underline{Q}_{\pm}^{\dagger}=
\underline{Q}_{\pm}^{\dagger}\underline{Q}_{\pm}=k_{0}^{2}I_{m},
\end{align}
with $\underline{Q}_{\pm}=\mathop{\lim}\limits_{x\rightarrow\pm\infty}\underline{Q}
=\left(\begin{array}{cc}
0_{m} & Q_{\pm}\\
\epsilon Q^{\dagger}_{\pm} & 0_{m}
\end{array}\right)$.

\subsection{Riemann surface and uniformization coordinate}
In order to discuss the analytic region of Jost functions, we need to discuss it on the proper complex plane. Note that the eigenvalues of scattering problem $M_{\pm}$ are $\pm i\sqrt{k^{2}-\epsilon k_{0}^{2}}$, each with multiplicity $2m$. A two-sheeted Riemann surface is introduced to handle the branching of the eigenvalues, namely
\begin{align}\label{T7}
\lambda^{2}=k^{2}-\epsilon k_{0}^{2},
\end{align}
the branch points can be easily derived by $k^2-k_{0}^{2}=0$, i.e., $k=\pm\sqrt{\epsilon}k_{0}$. Note that $\epsilon=-1,1$ present focusing and defocusing  case, respectively.

For the focusing case ($\epsilon=-1$), the Riemann surface determined by the equation $\lambda^{2}=k^{2}+k_{0}^{2}$ is composed of two complex $k$-planes $S_{1}$ and $S_{2}$ cut along the branch points $\pm ik_{0}$. At the same time, on the Riemann surface, the function $\lambda$ is a single-valued function of $k$, which is composed of two single-valued analytic branches. The value of the function differs by one symbol, so local polar coordinates are introduced in the sheet $S_1$.  More precisely, letting $k+ik_{0}=r_{1}e^{i\theta_{1}}$, $k-ik_{0}=r_{2}e^{i\theta_{2}}$ for $-\frac{\pi}{2}<\theta_{1},\theta_{2}<\frac{3\pi}{2}$, we can derive that \begin{align}\label{T8}
\lambda(k)=&\left\{\begin{aligned}
&(r_{1}r_{2})^\frac{1}{2}e^\frac{{\theta_{1}+\theta_{2}}}{2}, \quad &on\quad S_{1},\\
-&(r_{1}r_{2})^\frac{1}{2}e^\frac{{\theta_{1}+\theta_{2}}}{2}, \quad &on\quad S_{2}.
\end{aligned} \right.
\end{align}
Similarly, for the defocusing case ($\epsilon=1$), the Riemann surface $S$ determined by the equation $\lambda^{2}=k^{2}-k_{0}^{2}$ with the branch points $\pm k_{0}$, where the complex plane $S_{1}$ and $S_{2}$ are glued together along the cut $(-\infty,-k_{0})\cup (k_{0},+\infty)$. We also introduce polar coordinates in the complex plane, i.e.,
$k-k_{0}=r_{1}e^{i\theta_{1}}$, $k+k_{0}=r_{2}e^{i\theta_{2}}$ for the angles $0\leq\theta_{1}<2\pi$ and $-\pi\leq\theta_{2}<\pi$. Then the single-valued functions Eq.\eqref{T8} can be obtained similarly.

Resorting to \cite{Faddeev-1987}-\cite{Biondini-2014}, we define a uniformization variable $z$
\begin{align}\label{T9}
z=\lambda+k,
\end{align}
from Eq.\eqref{T7}, one has the following inverse transformation
\begin{align}\label{T10}
k(z)=\frac{1}{2}\left(z+\epsilon\frac{k_{0}^{2}}{z}\right),\quad \lambda(z)=\frac{1}{2}\left(z-\epsilon\frac{k_{0}^{2}}{z}\right).
\end{align}
In combination with these definitions, we will use $z$-plane instead of $k$-plane. The advantage of this is that it solves the multi-valued problem. For the focusing case, from the mapping Eq.\eqref{T8}, the Riemann surface $Imk>0$ of  $S_{1}$ and $Imk<0$ of $S_2$ are mapped to $Im\lambda>0$ of the $\lambda$-plane, the Riemann surface $Imk<0$ of  $S_{1}$ and $Imk>0$ of $S_2$ are mapped to $Im\lambda<0$ of the $\lambda$-plane. In additional, based on the Joukowsky transformation, one has
\begin{align*}
Im\lambda=\frac{1}{2|z|^{2}}\left(|z|^2-k_{0}^{2}\right)Imz.
\end{align*}
As a consequence, the region $Im\lambda>0$ and $Im\lambda<0$ are mapped to
\addtocounter{equation}{1}
\begin{align}
D^{+}=\left\{z\in \mathbb{\textbf{C}}:\left(|z|^{2}-k_{0}^{2}\right)Im z>0\right\},\tag{\theequation a}\\
D^{-}=\left\{z\in \mathbb{\textbf{C}}:\left(|z|^{2}-k_{0}^{2}\right)Im z<0\right\},\tag{\theequation b}
\end{align}
where $\mathbb{\textbf{C}}$ denotes the complex plane, then we take the focusing equation as an example to show the transformation between different complex planes in Fig. 1.

\centerline{\begin{tikzpicture}[scale=0.5]
\filldraw (-9,1) -- (-9,9) to (-1,9) -- (-1,1);
\filldraw (9,1) -- (9,9) to (1,9) -- (1,1);
\filldraw (9,-1) -- (9,-9) to (1,-9) -- (1,-1);
\filldraw (-9,-1) -- (-9,-9) to (-1,-9) -- (-1,-1);
\path [fill=gray] (1,-5) -- (9,-5) to (9,-9) -- (1,-9);
\filldraw[white, line width=0.5](-1,-5)--(3,-5) arc (-180:0:2);
\path [fill=white] (1,-1) -- (9,-1) to (9,-5) -- (1,-5);
\filldraw[gray, line width=0.5](3,-5)--(7,-5) arc (0:180:2);
\path [fill=white] (-9,5)--(-9,9) to (-1,9) -- (-1,5);
\path [fill=gray] (-9,1)--(-9,5) to (-1,5) -- (-1,1);
\path [fill=white] (1,1)--(1,5) to (9,5) -- (9,1);
\path [fill=gray] (1,5)--(1,9) to (9,9) -- (9,5);
\path [fill=white] (-1,-1)--(-1,-5) to (-9,-5) -- (-9,-1);
\path [fill=gray] (-1,-5)--(-1,-9) to (-9,-9) -- (-9,-5);
\filldraw[red, line width=0.5] (2,2) to (-2,-2)[->];
\filldraw[red, line width=0.5] (-2,-8) to (2,-8)[->];
\draw[fill] (-5,5)node[below]{} circle [radius=0.035];
\draw[fill] (5,5)node[below]{} circle [radius=0.035];
\draw[fill] (-5,-5)node[below]{} circle [radius=0.035];
\draw[fill] (5,-5)node[below]{} circle [radius=0.035];
\draw[-][thick](-9,5)--(-8,5);
\draw[-][thick](-8,5)--(-7,5);
\draw[-][thick](-7,5)--(-6,5);
\draw[-][thick](-6,5)--(-5,5);
\draw[-][thick](-5,5)--(-4,5);
\draw[-][thick](-4,5)--(-3,5);
\draw[-][thick](-3,5)--(-2,5);
\draw[-][thick](-2,5)--(-1,5)[->][thick]node[above]{$Rek$};;
\draw[-][thick](-5,1)--(-5,2);
\draw[-][thick](-5,2)--(-5,3);
\draw[-][thick](-5,3)--(-5,4);
\draw[-][thick](-5,4)--(-5,5);
\draw[-][thick](-5,5)--(-5,6);
\draw[-][thick](-5,6)--(-5,7);
\draw[-][thick](-5,7)--(-5,8);
\draw[-][thick](-5,8)--(-5,9)[->] [thick]node[above]{$Imk$};
\draw[-][thick](1,5)--(2,5);
\draw[-][thick](2,5)--(3,5);
\draw[-][thick](3,5)--(4,5);
\draw[-][thick](4,5)--(5,5);
\draw[-][thick](5,5)--(6,5);
\draw[-][thick](6,5)--(7,5);
\draw[-][thick](7,5)--(8,5);
\draw[-][thick](8,5)--(9,5)[->][thick]node[above]{$Rek$};
\draw[-][thick](5,1)--(5,2);
\draw[-][thick](5,2)--(5,3);
\draw[-][thick](5,3)--(5,4);
\draw[-][thick](5,4)--(5,5);
\draw[-][thick](5,5)--(5,6);
\draw[-][thick](5,6)--(5,7);
\draw[-][thick](5,7)--(5,8);
\draw[-][thick](5,8)--(5,9);
\draw[-][thick](-9,-5)--(-8,-5);
\draw[-][thick](-8,-5)--(-7,-5);
\draw[-][thick](-7,-5)--(-6,-5);
\draw[-][thick](-6,-5)--(-5,-5);
\draw[-][thick](-5,-5)--(-4,-5);
\draw[-][thick](-4,-5)--(-3,-5);
\draw[-][thick](-3,-5)--(-2,-5);
\draw[-][thick](-2,-5)--(-1,-5)[->][thick]node[above]{$Re\lambda$};
\draw[-][thick](-5,-1)--(-5,-2);
\draw[-][thick](-5,-2)--(-5,-3);
\draw[-][thick](-5,-3)--(-5,-4);
\draw[-][thick](-5,-4)--(-5,-5);
\draw[-][thick](-5,-5)--(-5,-6);
\draw[-][thick](-5,-6)--(-5,-7);
\draw[-][thick](-5,-7)--(-5,-8);
\draw[-][thick](-5,-8)--(-5,-9);
\draw[-][thick](1,-5)--(2,-5);
\draw[-][thick](2,-5)--(3,-5);
\draw[-][thick](3,-5)--(4,-5);
\draw[-][thick](4,-5)--(5,-5);
\draw[-][thick](5,-5)--(6,-5);
\draw[-][thick](6,-5)--(7,-5);
\draw[-][thick](7,-5)--(8,-5);
\draw[-][thick](8,-5)--(9,-5)[->][thick]node[above]{$Rez$};
\draw[-][thick](5,-1)--(5,-2);
\draw[-][thick](5,-2)--(5,-3);
\draw[-][thick](5,-3)--(5,-4);
\draw[-][thick](5,-4)--(5,-5);
\draw[-][thick](5,-5)--(5,-6);
\draw[-][thick](5,-6)--(5,-7);
\draw[-][thick](5,-7)--(5,-8);
\draw[-][thick](5,-8)--(5,-9);
\draw[->](5,9)[thick]node[above]{$Imk$};
\draw[->](-5,-1)[thick]node[above]{$Im\lambda$};
\draw[->](5,-1)[thick]node[above]{$Imz$};
\draw[fill] (-5,7) circle [radius=0.055]node[left]{\footnotesize$iq_{0}$};
\draw[fill] (-5,3) circle [radius=0.055]node[left]{\footnotesize$-iq_{0}$};
\draw[fill] (5,7) circle [radius=0.055]node[left]{\footnotesize$iq_{0}$};
\draw[fill] (5,3) circle [radius=0.055]node[left]{\footnotesize$-iq_{0}$};
\draw[fill] (-7,-5) circle [radius=0.055]node[below]{\footnotesize$q_{0}$};
\draw[fill] (-3,-5) circle [radius=0.055]node[below]{\footnotesize$-q_{0}$};
\draw(5,-5) [red, line width=1] circle(2);
\filldraw[red, line width=1.5] (-5,7) to (-5,3);
\filldraw[red, line width=1.5] (5,7) to (5,3);
\filldraw[red, line width=1.5] (-7,-5) to (-3,-5);
\draw[fill][black] (-8,7) [thick]node[right]{\footnotesize$S_{1}$};
\draw[fill][black] (2,7) [thick]node[right]{\footnotesize$S_{2}$};
\draw[fill][black] (-4,7) [thick]node[right]{\footnotesize$Imk>0$};
\draw[fill][black] (-4,3) [thick]node[right]{\footnotesize$Imk<0$};
\draw[fill][black] (6,7) [thick]node[right]{\footnotesize$Imk>0$};
\draw[fill][black] (6,3) [thick]node[right]{\footnotesize$Imk<0$};
\draw[fill][black] (-4,-7) [thick]node[right]{\footnotesize$Im\lambda<0$};
\draw[fill][black] (-4,-3) [thick]node[right]{\footnotesize$Im\lambda>0$};
\draw[fill][black] (7,-7) [thick]node[right]{\footnotesize$D_{-}$};
\draw[fill][black] (7,-3) [thick]node[right]{\footnotesize$D_{+}$};
\draw[fill][red] (0,5) node[]{\footnotesize$+$};
\draw[fill][black] (-2,0) [thick]node[right]
{\footnotesize$\lambda=\sqrt{k^{2}+q_{0}^{2}}$};
\draw[fill][black] (0,-8) [thick]node[below]
{\footnotesize$\lambda=(z+q_{0}^{2}/z)/2$};
\end{tikzpicture}}
\noindent {\small \textbf{Figure 1.} Transformation relation from $k$ two-sheeted Riemann surface, $\lambda$-plane and $z$-plane.}

\subsection{Jost functions and its analyticity}
As the simultaneous solution of two parts of Lax pairs Eq.\eqref{Lax}, the Jost eigenfunctions are usually defined according to the eigenvectors of the asymptotic scattering problem Eq.\eqref{T5} $M_{\pm}$. By calculation, the eigenvectors of $M_{\pm}$ can be written as
\begin{align}\label{T11}
E_{\pm}(k)=\left(\begin{array}{cc}
I_{m} & -\frac{i}{z}Q_{\pm}\\
\frac{i}{z}\epsilon Q_{\pm}^{\dagger} & I_{m}
\end{array}\right)=I_{2m}-\frac{i}{z}\underline{\sigma}_{3}\underline{Q}_{\pm}.
\end{align}
It follows that Eq.\eqref{T5} implies $[M_{\pm},N_{\pm}]=0$ under the NZBCs Eq.\eqref{T2} at infinity, which means that $N_{\pm}$ and $M_{\pm}$ are of the same eigenvectors. Thus one has
\addtocounter{equation}{1}
\begin{align}\label{T12}
M_{\pm}&=-i\lambda E_{\pm}\underline{\sigma}_{3},\tag{\theequation a}\\
N_{\pm}&=-i(4k^{2}+2k_{0}^{2})\lambda E_{\pm}\underline{\sigma}_{3}.\tag{\theequation b}
\end{align}
Obviously
\begin{align}\label{T13}
&\left\{\begin{aligned}
&\det E_{\pm}(z)=\left(\frac{2\lambda}{k+\lambda}\right)^{m}
\triangleq\gamma^{m}, \quad \gamma=1-\epsilon\frac{k_{0}^{2}}{z^{2}},\\
&E_{\pm}^{-1}=\frac{1}{\gamma}\left(I_{2m}+\frac{i}{z}
\underline{\sigma}_{3}\underline{Q}_{\pm}\right).
\end{aligned}\right.
\end{align}
Note that the inverse matrices $E_{\pm}^{-1}$ satisfy all $z$ values such that $\det E_{\pm}\neq 0$, specifically speaking, in the focusing case $z\neq\pm ik_{0}$, and $z\neq\pm k_{0}$ in the defocusing case.

The general continuous spectrum $\Sigma_{k}$ is composed of all $k$-values satisfying $\lambda(k)\in R$, namely $\Sigma_{k}=R \cup [-ik_{0},ik_{0}]$ in the focusing case, and $\Sigma_{k}=R\setminus [-k_{0},k_{0}]$ in the defocusing case.
These sets $\Sigma_{k}$ are mapped to $\Sigma_{z}=R\cup C_{0}$ and $\Sigma_{z}=R$ in the complex $z$-plane, respectively. The set $C_{0}$ denotes a circle of radius $k_{0}$ shown in Fig.1. For convenience, we omit the subscript from the context. Now we can obtain the Jost eigenfunctions $\Phi(x,t;z)$ and $\Psi(x,t;z)$  of the Lax pairs
\begin{align}\label{ME}
\begin{split}
\Phi(x,t;z)=(\phi(x,t;z),\bar{\phi}(x,t;z))=E_{-}(z)
e^{-i\theta(x,t;z)\underline{\sigma}_{3}}+o(1),
 \quad x\rightarrow-\infty,\\
\Psi(x,t;z)=(\bar{\psi}(x,t;z),\psi(x,t;z))=E_{+}(z)
e^{-i\theta(x,t;z)\underline{\sigma}_{3}}+o(1),
 \quad x\rightarrow+\infty,
 \end{split}
\end{align}
with
\begin{align}\label{T14}
\theta(x,t;z)=\lambda(z)\left(x+(4k^{2}(z)+2k_{0}^{2})t\right)
\end{align}
where the elements of the functions $\Phi(x,t;z)$ and $\Psi(x,t;z)$ are $2m\times m$ matrices. As usual, the Lax pairs Eq.\eqref{Lax} can be written as
\begin{align*}
\varphi_{\pm,x}&=M_{\pm}\varphi_{\pm}+\Delta\underline{Q}_{\pm}\varphi_{\pm},\\
\varphi_{\pm,t}&=N_{\pm}\varphi_{\pm}+\Delta\underline{\hat{Q}}_{\pm}\varphi_{\pm},
\end{align*}
with
\begin{align*}
\Delta \underline{Q}_{\pm}&=\underline{Q}-\underline{Q}_{\pm},\\
\Delta\underline{\hat{Q}}_{\pm}&=2ik\underline{Q}_{\pm}^{2}\underline{\sigma}_{3}
+2\underline{Q}_{\pm}^{3}-2ik(\underline{Q}^{2}
+\underline{Q}_{x})\underline{\sigma}_{3}-\underline{Q}_{xx}+2\underline{Q}^{3}
+\underline{Q}_{x}\underline{Q}-\underline{Q} \underline{Q}_{x}.
\end{align*}
By decomposing the  asymptotic exponential oscillations, we further introduce the modified eigenfunctions
\begin{align}\label{T15}
\begin{split}
\mathcal{W}(x,t;z)&=\left(\mathcal{\hat{W}}(x,t;z),\mathcal{\bar{W}}(x,t;z)\right)=
\Phi(x,t;z)e^{i\theta(x,t;z)\underline{\sigma}_{3}},\\
\mathcal{V}(x,t;z)&=\left(\mathcal{\bar{V}}(x,t;z),\mathcal{\hat{V}}(x,t;z)\right)=
\Psi(x,t;z)e^{i\theta(x,t;z)\underline{\sigma}_{3}}.
\end{split}
\end{align}
Note that $\mathop{\lim}\limits_{x\rightarrow-\infty} \mathcal{W}(x,t;z)=E_{-}(z)$ and
$\mathop{\lim}\limits_{x\rightarrow+\infty} \mathcal{V}(x,t;z)=E_{+}(z)$. Similar to Ref.\cite{Biondini-2014}, the following integral equations can be obtained
\begin{align}\label{T16}
\begin{split}
\mathcal{W}(x;z)
&=E_{-}+\int_{-\infty}^{x}E_{-}e^{-i\lambda(x-y)\underline{\sigma}_{3}}E_{-}^{-1}\Delta \underline{Q}_{-}(y)\left(\mathcal{\hat{W}}(y;z),\mathcal{\bar{W}}(y;z)\right)
e^{i\lambda(x-y)\underline{\sigma}_{3}}\, dy,\\
\mathcal{V}(x;z)
&=E_{+}-\int^{\infty}_{x}E_{+}e^{-i\lambda(x-y)\underline{\sigma}_{3}}E_{+}^{-1}\Delta \underline{Q}_{+}(y)\left(\mathcal{\hat{V}}(y;z),\mathcal{\bar{V}}(y;z)\right)
e^{i\lambda(x-y)\underline{\sigma}_{3}}\, dy.
\end{split}
\end{align}
From Eq.\eqref{T16}, the analyticity of the modified eigenfunctions can be summarized as follows
\begin{thm}
It is assumed that $Q(x,t)-Q_{+}\in L^{1}(a,+\infty)$ and $Q(x,t)-Q_{-}\in L^{1}(-\infty,a)$ hold for any constant $a\in R$, all $t>0$, and that the matrix potential function $Q(x,t)$ is the $m\times m$ symmetric matrix  as well as satisfies the boundary conditions Eq.\eqref{T6}. Then the modified eigenfunctions of the scattering problem determined by Eq.\eqref{ME} and \eqref{T15} satisfy that
the functions $\mathcal{\hat{W}}(x,t;z)$ and $\mathcal{\hat{V}}(x,t;z)$ are analytic in the region $D^{+}$ of $z$-plane, which are continuous up to $\partial D^{+}$; the functions $\mathcal{\bar{W}}(x,t;z)$ and $\mathcal{\bar{V}}(x,t;z)$ are analytic in the region $D^{-}$ of $z$-plane, which are continuous up to $\partial D^{-}$.
\end{thm}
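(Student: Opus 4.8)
The plan is to run a Picard (Neumann-series) iteration on the Volterra integral equations \eqref{T16}; the decisive structural feature is that the exponential factors appearing in the kernel are bounded on precisely $D^{+}$ (respectively $D^{-}$) when acting on the first (respectively second) column block. I will carry this out for $\mathcal{W}=(\mathcal{\hat{W}},\mathcal{\bar{W}})$ in full; the statement for $\mathcal{V}=(\mathcal{\bar{V}},\mathcal{\hat{V}})$ follows from the entirely analogous treatment of the second equation of \eqref{T16}, where the integration now runs over $y\ge x$, so that the sign of $x-y$ — and hence the roles of $D^{+}$ and $D^{-}$ — is reversed, leaving $\mathcal{\hat{V}}$ analytic on $D^{+}$ and $\mathcal{\bar{V}}$ on $D^{-}$.

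First I would unwind the block structure of the kernel. With $\underline{\sigma}_{3}=\mathrm{diag}(I_{m},-I_{m})$, the four $m\times m$ blocks of $e^{-i\lambda(x-y)\underline{\sigma}_{3}}A\,e^{i\lambda(x-y)\underline{\sigma}_{3}}$ carry the scalar factors $1,\ e^{-2i\lambda(x-y)},\ e^{2i\lambda(x-y)},\ 1$. In the first line of \eqref{T16} the integration variable obeys $y\le x$, so the first column block $\mathcal{\hat{W}}$ is multiplied only by $1$ and $e^{2i\lambda(x-y)}$, both of modulus $\le 1$ exactly when $\mathrm{Im}\,\lambda\ge 0$, which by the Joukowsky relations \eqref{T10} is precisely membership in $\overline{D^{+}}$; symmetrically, $\mathcal{\bar{W}}$ is multiplied only by $1$ and $e^{-2i\lambda(x-y)}$, bounded exactly on $\overline{D^{-}}$. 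I would also record that $\lambda(z)$, $k(z)$, $E_{-}(z)$ and $E_{-}^{-1}(z)$ are analytic on $\mathbb{C}\setminus\{0\}$ away from the branch points — where $\det E_{-}$ vanishes — and hence are analytic and locally bounded on the open sets $D^{\pm}$, since $z=0$ and the branch points lie only on $\partial D^{\pm}$.

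Next I would set $\mathcal{\hat{W}}=\sum_{n\ge 0}\mathcal{\hat{W}}^{(n)}$ with $\mathcal{\hat{W}}^{(0)}$ the first column block of $E_{-}$ and $\mathcal{\hat{W}}^{(n+1)}$ obtained by feeding $\mathcal{\hat{W}}^{(n)}$ through the integral operator in \eqref{T16} (and likewise for $\mathcal{\bar{W}}$), and prove by induction the standard Volterra bound
\[
\bigl\|\mathcal{\hat{W}}^{(n)}(x,t;z)\bigr\|\ \le\ \frac{C_{K}}{n!}\Bigl(\int_{-\infty}^{x}\bigl\|\Delta\underline{Q}_{-}(y,t)\bigr\|\,dy\Bigr)^{n},\qquad z\in K,
\]
for every compact $K\subset D^{+}$, the constant $C_{K}$ absorbing the bounded norms of $E_{-}$, $E_{-}^{-1}$ and of the two exponentials on $K$. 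The hypothesis $Q(\cdot,t)-Q_{-}\in L^{1}(-\infty,a)$ makes the $y$-integral finite, so the series converges absolutely and uniformly on $K$; since $|e^{\pm 2i\lambda(x-y)}|\le 1$ persists on $\partial D^{+}$, the same bound yields uniform convergence up to the boundary away from the exceptional points $z=0$ and the branch points. Each $\mathcal{\hat{W}}^{(n)}(x,t;\cdot)$ is analytic on $D^{+}$ — true for $n=0$, and propagated through the iteration because the integrand is analytic in $z$ with the $L^{1}$-bound providing the domination needed to differentiate under the integral sign (or to invoke Morera's theorem). By the Weierstrass theorem on uniform limits of analytic functions, $\mathcal{\hat{W}}$ is analytic on $D^{+}$, and uniform convergence of continuous summands gives continuity up to $\partial D^{+}$; the identical argument with $D^{-}$ in place of $D^{+}$ disposes of $\mathcal{\bar{W}}$.

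I expect the one genuinely delicate step to be the kernel bookkeeping of the second paragraph: one must match each exponential to its column block and check that, once the sign of $x-y$ is pinned down by the direction of integration, ``the exponential stays bounded'' is \emph{equivalent} to membership in $D^{+}$ (or $D^{-}$) under the uniformization map — this is exactly the mechanism that ties the analyticity domains to the curve $\Sigma_{z}=\mathbb{R}\cup C_{0}$. The behaviour near $z=0$ and the branch points, where $E_{\pm}^{\pm1}$ blow up, is a minor separate issue affecting only the continuity statement on the boundary, not the interior analyticity.
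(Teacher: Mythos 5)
Your proposal is correct and follows essentially the same route as the paper: the decisive step in both is the block computation of $e^{-i\lambda(x-y)\underline{\sigma}_{3}}\mathcal{M}e^{i\lambda(x-y)\underline{\sigma}_{3}}$ together with the sign of $x-y$ fixed by the direction of integration, which pins the boundedness of $e^{\pm 2i\lambda(x-y)}$ to $\mathrm{Im}\,\lambda\gtrless 0$, i.e.\ to $D^{\pm}$ under the uniformization. The paper's proof stops essentially at that observation (leaving the Neumann-series convergence, the $1/n!$ Volterra bound, and the Morera/Weierstrass passage to the limit implicit, by analogy with the cited literature), whereas you supply those standard details explicitly, including the correct caveat that continuity up to $\partial D^{\pm}$ degenerates at $z=0$ and at the branch points where $E_{\pm}$ and $E_{\pm}^{-1}$ are singular.
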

\begin{proof}
For a matrix $\mathcal {M}$ with the product
\begin{align*}
e^{-i\lambda(x-y)\sigma_{3}}\mathcal {M}e^{i\lambda(x-y)\sigma_{3}}=\left(\begin{array}{cc}
m_{11} & e^{-2i\lambda(x-y)}m_{12}\\
e^{2i\lambda(x-y)}m_{21} & m_{22}\\
\end{array}\right).
\end{align*}
Taking the first column as an example, there is of following equation
\begin{align*}
e^{2i\lambda(x-y)}=e^{2i(Re\lambda+iIm\lambda)(x-y)}=
e^{2iRe\lambda(x-y)}e^{-2Im\lambda(x-y)},
\end{align*}
note that $x-y>0$, then we can derive that
$\mathcal{\hat{W}}(x,t;z)$ is analytic in the region $Im\lambda>0$, i.e., $D^{+}=\left\{z\in \mathbb{\textbf{C}}:\left(|z|^{2}-k_{0}^{2}\right)Im z>0\right\}$. The analyticity of the second column can be similarly proved.
\end{proof}
\subsection{Scattering matrix}
Because the trace of $M$ and $N$ in Lax pairs Eq.\eqref{Lax} are zero, resorting to \textbf{Liouville formula}\cite{Liu} the relationship can be derived
\begin{align*}
\partial_{x}\left(\det M\right)=\partial_{t}\left(\det N\right)=0,
\end{align*}
and for all $z\in\Sigma$,  Eq.\eqref{ME} implies that $\mathop{\lim}\limits_{x\rightarrow-\infty}\Phi(x,t;z)
e^{i\theta\underline{\sigma}_{3}}=E_{-}$ and
$\mathop{\lim}\limits_{x\rightarrow+\infty}\Psi(x,t;z)
e^{i\theta\underline{\sigma}_{3}}=E_{+}$, then we have
\begin{align}\label{T17}
\det\Psi(x,t;z)=\det\Phi(x,t;z)=\det E_{\pm}(z)=\gamma^{m}(z),\quad z\in\Sigma.
\end{align}
It is noted that the scattering problem is a first order homogeneous differential equation (ODE), and because both $\Psi(x,t;z)$ and $\Phi(x,t;z)$ are the solutions of the ODE for all $z\in\Sigma_{0}$, there is obviously a $2m\times2m$ constant matrix $S(z)$ which is independent of the variable $x$ and $t$ satisfying
\begin{align}\label{T18}
 \Phi(x,t;z)=\Psi(x,t;z)S(z),\quad z\in\Sigma_{0},
\end{align}
where $\Sigma_{0}=\Sigma\setminus\{\pm\sqrt{\epsilon}k_{0}\}$,
$S(z)=\left(\begin{array}{cc}
a(z) & \bar{b}(z)\\
b(z) & \bar{a}(z)
\end{array}\right)$, and the functions $a(z)$, $b(z)$, $\bar{b}(z)$, and $\bar{a}(z)$ are the $m\times m$ of  the scattering matrix. In additional, the elements in the scattering matrix $S(z)$ play an important role in the construction of Riemann-Hilbert problem, so their analytical regions need to be further determined. It follows from the Eqs.\eqref{ME} and \eqref{T18}  that
\begin{align}\label{T19}
\begin{split}
\phi(x,t;z)=\psi(x,t;z)b(z)+\bar{\psi}(x,t;z)a(z),\\
\bar{\phi}(x,t;z)=\psi(x,t;z)\bar{a}(z)+\psi(x,t;z)\bar{b}(z).
\end{split}
\end{align}
Combining with Eqs.\eqref{T17}, \eqref{T18}, one can obtain
\addtocounter{equation}{1}
\begin{align}
\det a(z)=\frac{Wr[\phi(x,t;z),\psi(x,t;z)]}{Wr[\bar{\psi}(x,t;z),\psi(x,t;z)]}=
\frac{Wr[\phi(x,t;z),\psi(x,t;z)]}{\gamma^m},\tag{\theequation a} \label{s11}\\
\det\bar{a}(z)=\frac{Wr[\bar{\psi}(x,t;z),\bar{\phi}(x,t;z)]}
{Wr[\bar{\psi}(x,t;z),\psi(x,t;z)]}=\frac{Wr[\bar{\psi}(x,t;z),\bar{\phi}(x,t;z)]}
{\gamma^m},\tag{\theequation b} \label{s22}
\end{align}
where the notation $Wr[\bullet,\bullet]$ represents the Wronskian determinant. In the case of scalars, similar to Ref.\cite{Biondini-2014}, the analytical region of the diagonal elements $a(z)$ and $\bar{a}(z)$ of the scattering matrix can be obtained directly from Eq.$(2.21)$, but only the analytical region of the diagonal elements determinant $\det a(z)$ and $\det\bar{a}(z)$ can be derived instead of the analytical region of the diagonal elements.
\begin{thm}\label{thm3}
If $Q(x,t)-Q_{+}\in L^{1}(a,+\infty)$ and $Q(x,t)-Q_{-}\in L^{1}(-\infty,a)$ hold for any constant $a\in R$, all $t>0$, and the potential function $Q(x,t)$ satisfies constraint condition \eqref{T6}, the scattering matrix $S(z)$ \eqref{T18} defined according to the eigenfunctions of the scattering problem satisfies that: the block $a(z)$ is analytic in the region $D^{+}$ of the $z$-plane, and continuous up to $\Sigma_{0}=\partial D^{+}\setminus\{\pm\sqrt{\epsilon}k_{0}\}$;
the block $\bar{a}(z)$ is analytic in the region $D^{-}$ of the $z$-plane, and continuous up to $\Sigma_{0}=\partial D^{-}\setminus\{\pm\sqrt{\epsilon}k_{0}\}$;
the off-diagonal blocks of the matrix $S(z)$ are nowhere analytic in general.
\end{thm}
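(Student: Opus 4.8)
The plan is to reduce every assertion to the analyticity of the modified eigenfunctions established in the preceding theorem. I would first write the four blocks of $S(z)$ explicitly in terms of the Jost solutions, and then use the Hermitian symmetry of the scattering problem to turn the ``wrong half-plane'' analyticity of $\mathcal{V}^{-1}$ into the ``correct half-plane'' analyticity of a Hermitian conjugate.

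First I would represent the blocks. Since $\det\Psi=\gamma^{m}\neq0$ on $\Sigma_{0}$, the relation $\Phi=\Psi S$ in Eq.\eqref{T18} can be inverted; writing $P(z)$ and $R(z)$ for the top and bottom $m\times2m$ blocks of $\mathcal{V}^{-1}(x,t;z)$, using $\Psi^{-1}(\bar\psi,\psi)=I_{2m}$, and peeling off the oscillatory factors $e^{\pm i\theta}$ between $\Phi$ and $\mathcal{W}$ and between $\Psi$ and $\mathcal{V}$, one obtains the $x,t$-independent identities
\[
a(z)=P(z)\,\mathcal{\hat{W}},\qquad \bar a(z)=R(z)\,\mathcal{\bar{W}},\qquad b(z)=e^{-2i\theta}R(z)\,\mathcal{\hat{W}},\qquad \bar b(z)=e^{2i\theta}P(z)\,\mathcal{\bar{W}}.
\]
At this point the Wronskian representation in Eqs.\eqref{s11}--\eqref{s22} already gives analyticity of $\det a(z)$ in $D^{+}$ and of $\det\bar a(z)$ in $D^{-}$, since $Wr[\phi,\psi]=\det(\mathcal{\hat{W}},\mathcal{\hat{V}})$ and both $\mathcal{\hat{W}},\mathcal{\hat{V}}$ are analytic in $D^{+}$. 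For $m\geq2$, however, this says nothing about the block $a(z)$ itself: the row block $P(z)$ genuinely mixes the column $\mathcal{\bar{V}}$ (analytic only in $D^{-}$) with the columns $\mathcal{\hat{V}}$ (analytic in $D^{+}$), and this is exactly where the matrix case departs from the scalar case of \cite{Biondini-2014}; the extra input needed is a symmetry relation.

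The key step is the symmetry. In the focusing case $\underline{Q}^{\dagger}=-\underline{Q}$, hence $M(x,t;\bar z)=-M(x,t;z)^{\dagger}$, so $(\Psi(x,t;z)^{\dagger})^{-1}$ and $\Psi(x,t;\bar z)$ solve the same linear system; matching their $x\to+\infty$ limits via the explicit form of $E_{+}$ yields $\Psi(x,t;z)^{-1}=\gamma(z)^{-1}\Psi(x,t;\bar z)^{\dagger}$ (the defocusing case is identical up to an extra $\underline{\sigma}_{3}$). Reading off the top block and cancelling the scalar exponentials gives $a(z)=\gamma(z)^{-1}\mathcal{\bar{V}}(x,t;\bar z)^{\dagger}\,\mathcal{\hat{W}}(x,t;z)$, and likewise $\bar a(z)=\gamma(z)^{-1}\mathcal{\hat{V}}(x,t;\bar z)^{\dagger}\mathcal{\bar{W}}(x,t;z)$, $b(z)=\gamma(z)^{-1}e^{-2i\theta}\mathcal{\hat{V}}(x,t;\bar z)^{\dagger}\mathcal{\hat{W}}(x,t;z)$ and $\bar b(z)=\gamma(z)^{-1}e^{2i\theta}\mathcal{\bar{V}}(x,t;\bar z)^{\dagger}\mathcal{\bar{W}}(x,t;z)$. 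Now $\mathcal{\bar{V}}$ is analytic in $D^{-}$ and $\bar z\in D^{-}$ precisely when $z\in D^{+}$, so by the Schwarz reflection principle $z\mapsto\mathcal{\bar{V}}(x,t;\bar z)^{\dagger}$ is analytic in $D^{+}$; since $\mathcal{\hat{W}}$ is analytic in $D^{+}$ by the preceding theorem and $\gamma(z)$ vanishes only at $z=\pm\sqrt{\epsilon}k_{0}$, the product $a(z)$ is analytic in $D^{+}$, and continuity up to $\partial D^{+}\setminus\{\pm\sqrt{\epsilon}k_{0}\}$ follows because each factor is continuous there. The same argument with $D^{+}\leftrightarrow D^{-}$ and $\mathcal{\hat{V}},\mathcal{\bar{W}}$ replacing $\mathcal{\bar{V}},\mathcal{\hat{W}}$ gives analyticity of $\bar a(z)$ in $D^{-}$. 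Finally, the formulas for $b,\bar b$ exhibit them as $e^{\mp2i\theta}/\gamma(z)$ times a product of a factor analytic in $D^{-}$ (resp.\ $D^{+}$) with a factor analytic in $D^{+}$ (resp.\ $D^{-}$); they are defined on $\Sigma$ only, and for a generic potential neither factor extends across $\Sigma$, so the off-diagonal blocks are nowhere analytic.

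The main obstacle is the symmetry step: obtaining the full matrix block $a(z)$ rather than just $\det a(z)$ cannot be done from the Wronskian identities alone once $m\geq2$, and forces one to invoke the adjoint/Hermitian symmetry of the scattering problem. Establishing that symmetry in the right form — in particular that the matrix relating $\Psi(z)^{-1}$ to $\Psi(\bar z)^{\dagger}$ is a scalar multiple of the identity, and that the involution $z\mapsto\bar z$ interchanges $D^{+}$ and $D^{-}$ — is where the real work lies.
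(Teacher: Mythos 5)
Your proof is correct and follows essentially the same route as the paper: the paper likewise derives the block-wise bilinear representation $\gamma(z)a(z)=\left(\bar{\psi}^{up}(x,t;z^{*})\right)^{\dagger}\phi^{up}(x,t;z)-\epsilon\left(\bar{\psi}^{dn}(x,t;z^{*})\right)^{\dagger}\phi^{dn}(x,t;z)$ (its Proposition 2.4) from the conjugation symmetry $\Psi^{-1}(x,t;z)=\gamma(z)^{-1}\mathcal{L}_{\epsilon}\Psi^{\dagger}(x,t;z^{*})\mathcal{L}_{\epsilon}$, and concludes analyticity of $a$ in $D^{+}$ and of $\bar{a}$ in $D^{-}$ exactly as you do, via the fact that $z\mapsto z^{*}$ swaps $D^{+}$ and $D^{-}$. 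Your remark that the Wronskian identities only control $\det a(z)$ once $m\geq2$ also matches the paper's own comment preceding the theorem.
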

\begin{rem}
For brevity, the proof of \textbf{Theorem} \eqref{thm3} will be explained when discussing symmetries.
\end{rem}
In order to establish a suitable Riemann-Hilbert problem, we need to properly arrange the modified eigenfunctions and the scattering coefficients so that they are analytic in the same region. Note that from Eqs.\eqref{T15} and \eqref{T19}, one has
\addtocounter{equation}{1}
\begin{align}
\mathcal{\hat{W}}(x,t;z)a^{-1}(z)&=\mathcal{\bar{V}}(x,t;z)+e^{2i\theta(x,t;z)}
\mathcal{\hat{V}}(x,t;z)\rho(z),\tag{\theequation a}\label{2.22a}\\
\mathcal{\bar{W}}(x,t;z)\bar{a}^{-1}(z)&=\mathcal{V}(x,t;z)+e^{-2i\theta(x,t;z)}
\mathcal{\bar{V}}(x,t;z)\bar{\rho}(z),\tag{\theequation b}
\end{align}
where $\mathcal{\hat{W}}(x,t;z)a^{-1}(z)$ and $\mathcal{\bar{W}}(x,t;z)\bar{a}^{-1}(z)$ are meromorphic in the region $D^{+}$ and $D^{-}$, respectively. Finally the reflection coefficients are introduced by
\begin{align}\label{Fans}
 \rho(z)=b(z)a^{-1}(z),\quad \bar{\rho}(z)=\bar{b}(z)\bar{a}^{-1}(z), \quad
 z\in\Sigma_{0}.
\end{align}
\subsection{Symmetries}
When using the Riemann-Hilbert method to solve the initial value problem, it is often necessary to consider the symmetry of the potential function in Lax pairs. This is because the symmetries of the eigenfunctions can be obtained by analyzing the symmetries of the potential function. Finally the symmetries of the scattering data are obtained, which is also the basis of the discrete spectral distribution. The symmetries of the non-zero boundary value problem is more complicated because of the fact that the Riemann surface is introduced, which causes the $\lambda(k)$ to change sign from one side of the Riemann surface to the other. It follows that from the uniformization variable $z$ Eq.\eqref{T9}:\\
$\bullet$ $z\mapsto z^{*}$ implies $(k,\lambda)\mapsto(k^{*},\lambda^{*})$;\\
$\bullet$ $z\mapsto\epsilon k_{0}^{2}/z$ implies $(k,\lambda)\mapsto(k,-\lambda)$.

It is worth noting the symmetries of the scattering problem corresponding to the above transformation, one of which is the conjugate symmetry that depends on the potential function $\underline{Q}(x,t)$ (i.e., $\underline{Q}^{\dagger}=\epsilon\underline{Q}$), and the other is due to the branching of the scattering parameter $k$-plane. In addition, we also discuss the third symmetry in combination with the assumption $Q^{T}=Q$. It is easy to verify that
\begin{align}\label{T21}
\underline{Q}=-\underline{\sigma}_{2}\underline{Q}^{T}\underline{\sigma}_{2},
\end{align}
where
$\underline{\sigma}_{2}=i\left(\begin{array}{cc}
0_{m} & I_{m}\\
-I_{m} & 0_{m}
\end{array}\right)$ as a generaliztion of the $2\times2$ Pauli matrix $\sigma_{2}$.
\subsubsection{The first symmetry}
In Ref.\cite{Ablowitz-2004}, Ablowitz and his co-authors studied  the relationship of the
scattering data and eigenfunctions for the matrix potential function with zero boundary value condition when the above transformation is involved. We will extend their method to nonzero boundary conditions. Now introducing the functions which are independent of the variable $x$  for $z\in\Sigma$
\begin{align}\label{T22}
\mathscr{A}(x,t;z)=\Phi^{\dagger}(x,t;z^{*})\mathcal {L}_{\epsilon}\Phi(x,t;z),\quad
\mathscr{B}(x,t;z)=\Psi^{\dagger}(x,t;z^{*})\mathcal {L}_{\epsilon}\Psi(x,t;z),
\end{align}
one can obtain as $x\rightarrow\pm\infty$
\begin{align}\label{T23}
\Phi^{\dagger}(x,t;z^{*})\mathcal {L}_{\epsilon}\Phi(x,t;z)=
\Psi^{\dagger}(x,t;z^{*})\mathcal {L}_{\epsilon}\Psi(x,t;z)=\gamma(z)\mathcal {L}_{\epsilon},
\end{align}
where $\mathcal {L}_{\epsilon}=\left(\begin{array}{cc}
I_{m} & 0_{m}\\
0_{m} & -\epsilon I_{m}
\end{array}\right)$, and the value of $\epsilon$ represents the focusing and defocusing case.  It follows from Eqs.\eqref{Lax} and \eqref{ME} that
\begin{align*}
&\partial_{x}\mathscr{A}(x,t;z)=\Phi_{x}^{\dagger}(x,t;z^{*})\mathcal {L}_{\epsilon}\Phi(x,t;z)+\Phi^{\dagger}(x,t;z^{*})\mathcal {L}_{\epsilon}\Phi_{x}(x,t;z)=\\ &\left(ik\underline{\sigma}_{3}+\epsilon\underline{Q}\right)
\Phi^{\dagger}(x,t;z^{*})\mathcal {L}_{\epsilon}\Phi(x,t;z)+\Phi^{\dagger}(x,t;z^{*})\mathcal {L}_{\epsilon}\left(-ik\underline{\sigma}_{3}+\underline{Q}\right)\Phi(x,t;z)=0,
\end{align*}
the other can be proved in the same way, and Eq.\eqref{T23} can be directly calculated.

Obviously Eq.\eqref{T23} is equivalent to
\begin{align}\label{T24}
\Phi^{-1}(x,t;z)=\frac{1}{\gamma(z)}\mathcal {L}_{\epsilon}\Phi^{\dagger}(x,t;z^{*})
\mathcal {L}_{\epsilon},\quad
\Psi^{-1}(x,t;z)=\frac{1}{\gamma(z)}\mathcal {L}_{\epsilon}\Psi^{\dagger}(x,t;z^{*})
\mathcal {L}_{\epsilon}.
\end{align}
\begin{prop}\label{prop4}
The elements of the scattering matrix $S(z)$ can be specifically expressed according to the Jost eigenfunctions as
\addtocounter{equation}{1}
\begin{align}
\gamma(z)a(z)=\left(\bar{\psi}^{up}(x,t;z^{*})\right)^{\dagger}\phi^{up}(x,t;z)-
\epsilon\left(\bar{\psi}^{dn}(x,t;z^{*})\right)
^{\dagger}\phi^{dn}(x,t;z),\tag{\theequation a} \label{a1}\\
\gamma(z)\bar{a}(z)=\left(\psi^{dn}(x,t;z^{*})\right)^{\dagger}\bar{\phi}^{dn}(x,t;z)-
\epsilon\left(\psi^{up}(x,t;z^{*})\right)
^{\dagger}\bar{\phi}^{up}(x,t;z),\tag{\theequation b}\label{a2}\\
\gamma(z)b(z)=\left(\psi^{dn}(x,t;z^{*})\right)^{\dagger}\phi^{dn}(x,t;z)-
\epsilon\left(\psi^{up}(x,t;z^{*})\right)
^{\dagger}\phi^{up}(x,t;z),\tag{\theequation c}\\
\gamma(z)\bar{b}(z)=\left(\bar{\psi}^{up}(x,t;z^{*})\right)^{\dagger}\bar{\phi}^{up}(x,t;z)-
\epsilon\left(\bar{\psi}^{dn}(x,t;z^{*})\right)
^{\dagger}\bar{\phi}^{dn}(x,t;z),\tag{\theequation d}
\end{align}
where $a(z)$ and $\bar{a}(z)$ are analytic in the region $D^{+}$ and $D^{-}$ of $z$-plane, respectively.
\end{prop}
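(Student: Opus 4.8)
The plan is to read off all four identities at once from the single matrix relation $S(z)=\Psi^{-1}(x,t;z)\Phi(x,t;z)$, which is legitimate on $\Sigma_{0}$ because $\det\Psi=\gamma^{m}\neq0$ there by Eq.\eqref{T17}, combined with the conjugation symmetry $\Psi^{-1}(x,t;z)=\gamma(z)^{-1}\mathcal{L}_{\epsilon}\Psi^{\dagger}(x,t;z^{*})\mathcal{L}_{\epsilon}$ already established in Eq.\eqref{T24}. Multiplying the two gives
\begin{align*}
\gamma(z)\,S(z)=\mathcal{L}_{\epsilon}\,\Psi^{\dagger}(x,t;z^{*})\,\mathcal{L}_{\epsilon}\,\Phi(x,t;z),\qquad z\in\Sigma_{0}.
\end{align*}
First I would split each $2m\times m$ column block of $\Phi=(\phi,\bar{\phi})$ and $\Psi=(\bar{\psi},\psi)$ into its upper and lower $m\times m$ sub-blocks $\phi^{up},\phi^{dn},\dots$, so that $\Phi$ and $\Psi$ become $2\times2$ block matrices. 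Since $\mathcal{L}_{\epsilon}=\mathrm{diag}(I_{m},-\epsilon I_{m})$, left multiplication by $\mathcal{L}_{\epsilon}$ multiplies the lower block row of $\Psi^{\dagger}(z^{*})$ by $-\epsilon$ and right multiplication multiplies its lower block column by $-\epsilon$; carrying out the remaining block product with $\Phi(z)$ and using $\epsilon^{2}=1$ identifies the $(1,1),(2,2),(2,1),(1,2)$ blocks of $\gamma(z)S(z)$ with exactly the right-hand sides of $(a)$–$(d)$. No separate argument for $x,t$-independence is needed, since the left-hand side $\gamma(z)S(z)$ is manifestly constant in $x,t$; equivalently one checks that $\partial_{x}$ and $\partial_{t}$ of $\Psi^{\dagger}(z^{*})\mathcal{L}_{\epsilon}\Phi(z)$ vanish from the Lax equations exactly as in the verification of Eq.\eqref{T23}.

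For the analyticity assertions — which also discharge the proof of Theorem \ref{thm3} deferred in the Remark — I would argue directly from the representations just obtained. By Theorem 2.1 the block $\phi(x,t;z)$ (equivalently $\mathcal{\hat{W}}$) continues analytically to $D^{+}$ while $\bar{\psi}(x,t;z)$ (equivalently $\mathcal{\bar{V}}$) continues analytically to $D^{-}$; since the Joukowsky map yields $D^{-}=\{z^{*}:z\in D^{+}\}$, the composite $z\mapsto\big(\bar{\psi}(x,t;z^{*})\big)^{\dagger}$ is analytic on $D^{+}$, the conjugation undoing the anti-holomorphy of $z\mapsto z^{*}$. Hence the right-hand side of $(a1)$ divided by $\gamma(z)$ is analytic on $D^{+}$, and because the only zeros of $\gamma(z)=1-\epsilon k_{0}^{2}/z^{2}$ sit at the branch points $z=\pm\sqrt{\epsilon}\,k_{0}\in\partial D^{+}$, not in the open region, $a(z)$ extends analytically to $D^{+}$ and continuously to $\Sigma_{0}$; the statement for $\bar{a}(z)$ on $D^{-}$ follows verbatim from $(a2)$. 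For the off-diagonal blocks, the same bookkeeping shows $(c)$ pairs $\psi(z^{*})$ with $\phi(z)$ and $(d)$ pairs $\bar{\psi}(z^{*})$ with $\bar{\phi}(z)$, and in each pair one factor is analytic only on $D^{+}$ while its argument ranges over $D^{-}$, or conversely; no common domain of analyticity exists, which is the ``nowhere analytic in general'' claim.

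The routine facts I would invoke without detail are $k(z^{*})=\overline{k(z)}$ and $\lambda(z^{*})=\overline{\lambda(z)}$, hence $\theta(x,t;z^{*})=\overline{\theta(x,t;z)}$ and $\gamma(z^{*})=\overline{\gamma(z)}$, all immediate from the rational formulas for $k,\lambda,\gamma$; these underlie the reflection argument above and make the exponential factors cancel in the direct check of $x,t$-independence. The one genuinely delicate point is the careful tracking of the sign factors $-\epsilon$ produced by the two-sided multiplication by $\mathcal{L}_{\epsilon}$ on $\Psi^{\dagger}(z^{*})$ in block form, so that the signs attached to the ``$dn$–$dn$'' versus ``$up$–$up$'' terms in $(a)$–$(d)$ emerge correctly; everything else is bookkeeping.
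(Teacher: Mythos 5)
Your proposal is correct and follows essentially the same route as the paper: combine $S(z)=\Psi^{-1}(x,t;z)\Phi(x,t;z)$ with the conjugation symmetry $\Psi^{-1}=\gamma^{-1}\mathcal{L}_{\epsilon}\Psi^{\dagger}(z^{*})\mathcal{L}_{\epsilon}$ and read off the four blocks of $\gamma(z)S(z)$, tracking the $-\epsilon$ signs from the two-sided multiplication by $\mathcal{L}_{\epsilon}$. The only added value is that you spell out the analyticity argument (reflection $z\mapsto z^{*}$ sending $D^{+}$ to $D^{-}$, the double conjugation restoring holomorphy, and $\gamma$ vanishing only at the branch points), which the paper leaves as a somewhat circular cross-reference between Theorem 2.2 and this proposition.
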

\begin{proof}
For simplicity, we take the following blocks for the eigenfunctions
\begin{align*}
\Phi(x,t;z)=\left(\begin{array}{cc}
\phi^{up} & \bar{\phi}^{up}\\
\phi^{dn} & \bar{\phi}^{dn}
\end{array}\right),\quad
\Psi(x,t;z)=\left(\begin{array}{cc}
\bar{\psi}^{up} & \psi^{up}\\
\bar{\psi}^{dn} & \psi^{dn}
\end{array}\right),
\end{align*}
where $A^{up/dn}$ denote an $m\times m$ matrix.
It follows based on  Eqs.\eqref{T18} and \eqref{T24} that
\begin{align}\label{T25}
S(z)=\Psi^{-1}(x,t;z)\Phi(x,t;z)=
\frac{1}{\gamma(z)}\mathcal {L}_{\epsilon}\Psi^{\dagger}(x,t;z^{*})
\mathcal {L}_{\epsilon}\Phi(x,t;z),
\end{align}
with
\begin{align*}
&\mathcal {L}_{\epsilon}\Psi^{\dagger}(x,t;z^{*})
\mathcal {L}_{\epsilon}\Phi(x,t;z)=\\
&\left(\begin{array}{cc}
\left(\bar{\psi}^{up}(z^{*})\right)^{\dagger}\phi^{up}(z)-
\epsilon\left(\bar{\psi}^{dn}(z^{*})\right)^{\dagger}\phi^{dn}(z) & \left(\bar{\psi}^{up}(z^{*})\right)^{\dagger}\bar{\phi}^{up}(z)-
\epsilon\left(\bar{\psi}^{dn}(z^{*})\right)
^{\dagger}\bar{\phi}^{dn}(z)\\
\left(\psi^{dn}(z^{*})\right)^{\dagger}\phi^{dn}(z)-
\epsilon\left(\psi^{up}(z^{*})\right)
^{\dagger}\phi^{up}(z) & \left(\psi^{dn}(z^{*})\right)^{\dagger}\bar{\phi}^{dn}(z)-
\epsilon\left(\psi^{up}(z^{*})\right)
^{\dagger}\bar{\phi}^{up}(z)
\end{array}\right).
\end{align*}
Obviously the Eqs.($2.26$) can be derived. In fact, the analyticity of the Eqs.\eqref{a1} and \eqref{a2} can be obtained from Theorem \eqref{thm3}.
\end{proof}
\begin{rem}
Theorem \eqref{thm3} is the direct result of Proposition \eqref{prop4}.
\end{rem}
\begin{thm}
Assume that  $Q(x,t)-Q_{+}\in L^{1}(a,+\infty)$ and $Q(x,t)-Q_{-}\in L^{1}(-\infty,a)$ hold for any constant $a\in R$, all $t>0$, and also assume that the scattering coefficients $a(z)$, $b(z)$, $\bar{b}(z)$, $\bar{a}(z)$ have simple zeros at branch points $z=\pm\sqrt{\epsilon}k_{0}$, the following residue conditions can be written as
\begin{align*}
&\mathop{Res}_{z=\pm k_{0}}a(z)=\pm\frac{k_{0}}{2}\left[
\left(\bar{\psi}^{up}(x,t;\pm k_{0})\right)^{\dagger}\phi^{up}(x,t;\pm k_{0})-
\epsilon\left(\bar{\psi}^{dn}(x,t;\pm k_{0})\right)
^{\dagger}\phi^{dn}(x,t;\pm k_{0})\right],\\
&\mathop{Res}_{z=\pm k_{0}}\bar{a}(z)=\pm\frac{k_{0}}{2}\left[
\left(\psi^{dn}(x,t;\pm k_{0})\right)^{\dagger}\bar{\phi}^{dn}(x,t;\pm k_{0})-
\epsilon\left(\psi^{up}(x,t;\pm k_{0})\right)
^{\dagger}\bar{\phi}^{up}(x,t;\pm k_{0})\right],\\
&\mathop{\lim}\limits_{z\rightarrow\pm k_{0}}(z\mp k_{0})b(z)=\pm\frac{k_{0}}{2}\left[
\left(\psi^{dn}(x,t;\pm k_{0})\right)^{\dagger}\phi^{dn}(x,t;\pm k_{0})-
\epsilon\left(\psi^{up}(x,t;\pm k_{0})\right)
^{\dagger}\phi^{up}(x,t;\pm k_{0})\right],\\
&\mathop{\lim}\limits_{z\rightarrow\pm k_{0}}(z\mp k_{0})\bar{b}(z)=\pm\frac{k_{0}}{2}\left[
\left(\bar{\psi}^{up}(x,t;\pm k_{0})\right)^{\dagger}\bar{\phi}^{up}(x,t;\pm k_{0})-
\epsilon\left(\bar{\psi}^{dn}(x,t;\pm k_{0})\right)
^{\dagger}\bar{\phi}^{dn}(x,t;\pm k_{0})\right]
\end{align*}
in the defocusing case and
\begin{align*}
&\mathop{Res}_{z=\pm ik_{0}}a(z)=\pm\frac{ik_{0}}{2}\left[
\left(\bar{\psi}^{up}(x,t;\mp ik_{0})\right)^{\dagger}\phi^{up}(x,t;\pm ik_{0})+
\epsilon\left(\bar{\psi}^{dn}(x,t;\mp ik_{0})\right)
^{\dagger}\phi^{dn}(x,t;\pm ik_{0})\right],\\
&\mathop{Res}_{z=\pm ik_{0}}\bar{a}(z)=\pm\frac{ik_{0}}{2}\left[
\left(\psi^{dn}(x,t;\mp ik_{0})\right)^{\dagger}\bar{\phi}^{dn}(x,t;\pm ik_{0})+
\epsilon\left(\psi^{up}(x,t;\mp ik_{0})\right)
^{\dagger}\bar{\phi}^{up}(x,t;\pm ik_{0})\right],\\
&\mathop{\lim}\limits_{z\rightarrow\pm ik_{0}}(z\mp ik_{0})b(z)=\pm\frac{ik_{0}}{2}\left[
\left(\psi^{dn}(x,t;\mp ik_{0})\right)^{\dagger}\phi^{dn}(x,t;\pm ik_{0})+
\epsilon\left(\psi^{up}(x,t;\mp ik_{0})\right)
^{\dagger}\phi^{up}(x,t;\pm ik_{0})\right],\\
&\mathop{\lim}\limits_{z\rightarrow\pm ik_{0}}(z\mp ik_{0})\bar{b}(z)=\pm\frac{ik_{0}}{2}\left[
\left(\bar{\psi}^{up}(x,t;\mp ik_{0})\right)^{\dagger}\bar{\phi}^{up}(x,t;\pm ik_{0})+\epsilon\left(\bar{\psi}^{dn}(x,t;\mp ik_{0})\right)
^{\dagger}\bar{\phi}^{dn}(x,t;\pm ik_{0})\right],
\end{align*}
in the focusing case. The reflection coefficients $\rho(z)$ and $\bar{\rho}(z)$ determined by Eq.\eqref{Fans} are of a removable singularity under the conditions $\det a(z)\neq0$ and $\det \bar{a}(z)\neq0$ for all $z\in\Sigma$.
\end{thm}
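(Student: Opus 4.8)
The plan is to read off all four residue conditions --- together with the removability of the reflection coefficients --- from the bilinear representations of $\gamma(z)a(z)$, $\gamma(z)\bar a(z)$, $\gamma(z)b(z)$, $\gamma(z)\bar b(z)$ in terms of the Jost eigenfunctions established in \textbf{Proposition} \ref{prop4}. The key observation is that those right-hand sides extend continuously to the branch points (the Jost functions, hence every bilinear combination of them, are continuous up to $\partial D^{\pm}$, which contains $z=\pm\sqrt{\epsilon}k_{0}$), whereas the scalar $\gamma(z)=1-\epsilon k_{0}^{2}/z^{2}=(z^{2}-\epsilon k_{0}^{2})/z^{2}$ has a simple zero at each branch point $z_{0}\in\{\pm\sqrt{\epsilon}k_{0}\}$, with $\gamma'(z_{0})=2\epsilon k_{0}^{2}/z_{0}^{3}=2/z_{0}$ because $z_{0}^{2}=\epsilon k_{0}^{2}$. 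Writing $\gamma(z)=(z-z_{0})g(z)$ with $g(z_{0})=\gamma'(z_{0})\neq0$, and using that by hypothesis $a(z)$ has at most a simple pole at $z_{0}$ while $\gamma(z)a(z)$ is continuous there, I would compute
\[
\mathop{Res}_{z=z_{0}}a(z)=\lim_{z\to z_{0}}(z-z_{0})a(z)=\lim_{z\to z_{0}}\frac{z-z_{0}}{\gamma(z)}\,\bigl[\gamma(z)a(z)\bigr]=\frac{[\gamma a](z_{0})}{\gamma'(z_{0})}=\frac{z_{0}}{2}\,[\gamma a](z_{0}),
\]
and identically for $\bar a$; for the off-diagonal blocks $b,\bar b$, which are merely continuous (not analytic) on $\Sigma_{0}$, the same computation gives $\lim_{z\to z_{0}}(z-z_{0})b(z)=\tfrac{z_{0}}{2}[\gamma b](z_{0})$ and likewise for $\bar b$.

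It then remains to substitute, keeping careful track of the branch of $\sqrt{\epsilon}$ and of the value of $z_{0}^{*}$. In the defocusing case $z_{0}=\pm k_{0}$ is real, so $z_{0}^{*}=z_{0}$ and $z_{0}/2=\pm k_{0}/2$; feeding $z=z^{*}=\pm k_{0}$ into the formulas of \textbf{Proposition} \ref{prop4} reproduces verbatim the four stated expressions for $\mathop{Res}_{z=\pm k_{0}}a$, $\mathop{Res}_{z=\pm k_{0}}\bar a$, $\lim_{z\to\pm k_{0}}(z\mp k_{0})b$ and $\lim_{z\to\pm k_{0}}(z\mp k_{0})\bar b$. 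In the focusing case $z_{0}=\pm ik_{0}$ lies on $C_{0}$ with $z_{0}^{*}=\mp ik_{0}\neq z_{0}$ and $z_{0}/2=\pm ik_{0}/2$; substituting $z=\pm ik_{0}$, $z^{*}=\mp ik_{0}$ and $\gamma(z)=1+k_{0}^{2}/z^{2}$ into \textbf{Proposition} \ref{prop4} gives the focusing residue conditions, the simultaneous occurrence of $\pm ik_{0}$ and $\mp ik_{0}$ being precisely the reflection of the fact that complex conjugation no longer fixes the branch point. This part is pure algebra; the only genuine bookkeeping is the choice of $\sqrt{\epsilon}$ and the value of $z_{0}^{*}$.

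For the reflection coefficients I would write $\rho(z)=b(z)a^{-1}(z)$ and cancel the scalar factor: since $b=(\gamma b)\gamma^{-1}$ and $a=(\gamma a)\gamma^{-1}$ with $\gamma$ scalar, one has $\rho(z)=\bigl(\gamma(z)b(z)\bigr)\bigl(\gamma(z)a(z)\bigr)^{-1}$, a product of two factors continuous up to each branch point $z_{0}$. Because $\det[\gamma a]=\gamma^{m}\det a=Wr[\phi,\psi]$ by \eqref{s11}, the standing hypothesis $\det a(z)\neq0$ on $\Sigma$ --- equivalently the non-degeneracy of the residue of $a$ at the branch points --- keeps $[\gamma a](z)$ invertible in a neighbourhood of $z_{0}$, so $\rho$ extends continuously to $z_{0}$ with limit $[\gamma b](z_{0})[\gamma a](z_{0})^{-1}$; put differently, the principal parts of the simple poles of $b$ and of $a$ at $z_{0}$ are scalar multiples of one another and cancel in the quotient, so the singularity is removable. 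The same argument applied to $\bar\rho=\bar b\,\bar a^{-1}$ handles $\bar\rho$. I expect this last point to be the main obstacle: legitimizing the invertibility of $[\gamma a](z_{0})$ and $[\gamma\bar a](z_{0})$ from the nonvanishing hypothesis, since at $z_{0}$ the identity $\det a=Wr[\phi,\psi]/\gamma^{m}$ is of the indeterminate form $0/0$ --- the clean way around it is to carry the regularized quantities $\gamma a,\gamma\bar a,\gamma b,\gamma\bar b$ (continuous up to the branch points by \textbf{Proposition} \ref{prop4}) throughout, rather than $a,\bar a,b,\bar b$ themselves. All the analytic input needed --- continuity of the Jost eigenfunctions and of these bilinear combinations up to $\partial D^{\pm}$, branch points included --- is already available, so no new estimates are required.
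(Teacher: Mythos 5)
Your proposal is correct and follows essentially the same route as the paper: the paper likewise writes $a(z)=f(z)/\gamma(z)$ with $f$ the bilinear combination of Jost functions from Proposition \ref{prop4} and computes $\mathop{Res}_{z=z_{0}}a=f(z_{0})/\gamma'(z_{0})=\tfrac{z_{0}}{2}f(z_{0})$, treating the remaining entries "similarly." Your write-up is merely more complete, in that you also spell out the limit computation for the non-analytic blocks $b,\bar b$ and the cancellation argument for the removable singularity of $\rho,\bar\rho$, which the paper leaves implicit.
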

\begin{proof}
Taking the focusing case ($\epsilon=-1$) as example, we have from Eq.\eqref{a1}
\begin{align*}
a(z)=\frac{\left(\bar{\psi}^{up}(x,t;z^{*})\right)^{\dagger}\phi^{up}(x,t;z)-
\epsilon\left(\bar{\psi}^{dn}(x,t;z^{*})\right)
^{\dagger}\phi^{dn}(x,t;z)}{\gamma(z)}\triangleq \frac{f(z)}{\gamma(z)},
\end{align*}
thus
\begin{align*}
\mathop{Res}_{z=ik_{0}}a(z)=\mathop{Res}_{z=ik_{0}}\frac{f(z)}{\gamma(z)}=
\frac{f(z)}{\partial_{z}\gamma(z)}\left| _{z=ik_{0}}\right.=\frac{ik_{0}}{2}f(ik_{0}).
\end{align*}
The rest can be proved similarly.
\end{proof}

Next we will discuss the relationships between the scattering data $\bar{a}(z)$ and $a(z)$, which are related to the distribution of zero points.
\begin{prop}
For all $z\in D^{-}$, the scattering data admit that
\begin{align}\label{TT27}
\det\bar{a}(z)=\det a^{\dagger}(z^{*})=(\det a(z^{*}))^{*}.
\end{align}
\end{prop}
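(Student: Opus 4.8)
The plan is to first promote the columnwise first-symmetry relation \eqref{T24} to a single identity for the full scattering matrix $S(z)$, and then to extract the claimed scalar relation by a minor (Jacobi) computation followed by analytic continuation. I begin from the observation that $\mathcal{L}_{\epsilon}^{2}=I_{2m}$ (since $\epsilon^{2}=1$), so that \eqref{T24} can be rewritten as $\Phi^{\dagger}(x,t;z^{*})=\gamma(z)\,\mathcal{L}_{\epsilon}\,\Phi^{-1}(x,t;z)\,\mathcal{L}_{\epsilon}$ and likewise $\Psi^{\dagger}(x,t;z^{*})=\gamma(z)\,\mathcal{L}_{\epsilon}\,\Psi^{-1}(x,t;z)\,\mathcal{L}_{\epsilon}$. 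Using $S(z)=\Psi^{-1}(x,t;z)\Phi(x,t;z)$, hence $S^{-1}(z)=\Phi^{-1}(x,t;z)\Psi(x,t;z)$, a short substitution (in which both $\gamma(z)$ and the $(x,t)$-dependence cancel) gives the matrix symmetry
\begin{align*}
S^{\dagger}(z^{*})=\mathcal{L}_{\epsilon}\,S^{-1}(z)\,\mathcal{L}_{\epsilon},\qquad z\in\Sigma_{0},
\end{align*}
which is consistent with $S$ being independent of $x$ and $t$.

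Since $\mathcal{L}_{\epsilon}$ is block diagonal, conjugation by it leaves the two diagonal $m\times m$ blocks of $S^{-1}(z)$ unchanged; comparing the upper-left $m\times m$ blocks on both sides of the displayed identity therefore yields $a^{\dagger}(z^{*})=[S^{-1}(z)]_{\mathrm{UL}}$. On the other hand, $\det\Phi=\det\Psi=\gamma^{m}$ by \eqref{T17} together with $\Phi=\Psi S$ force $\det S(z)=1$, so Jacobi's identity for the minors of an inverse, applied to the complementary index sets $\{1,\dots,m\}$ and $\{m+1,\dots,2m\}$, gives
\begin{align*}
\det[S^{-1}(z)]_{\mathrm{UL}}=(-1)^{2(1+2+\cdots+m)}\,\frac{\det[S(z)]_{\mathrm{LR}}}{\det S(z)}=\det\bar{a}(z),
\end{align*}
the lower-right block being $[S(z)]_{\mathrm{LR}}=\bar{a}(z)$ and the sign being $+1$. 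Combining the two displays gives $\det\bar{a}(z)=\det a^{\dagger}(z^{*})=\overline{\det a(z^{*})}=(\det a(z^{*}))^{*}$ for $z\in\Sigma_{0}$.

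It then remains to pass from $\Sigma_{0}$ to all of $D^{-}$. By Theorem \ref{thm3} (equivalently Proposition \ref{prop4}) the left-hand side $\det\bar{a}(z)$ is analytic in $D^{-}$; the involution $z\mapsto z^{*}$ maps $D^{-}$ onto $D^{+}$, because it reverses the sign of $(|z|^{2}-k_{0}^{2})\,\mathrm{Im}\,z$, and on $D^{+}$ the block $\det a$ is analytic, so $z\mapsto(\det a(z^{*}))^{*}$ is also analytic in $D^{-}$ (being the complex conjugate of an antiholomorphic map). As $\Sigma_{0}\subseteq\partial D^{-}$ and both functions are continuous up to the boundary and agree on $\Sigma_{0}$, the identity theorem forces them to coincide throughout $D^{-}$.

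I expect the only slightly delicate points to be fixing the sign in Jacobi's minor identity (it is $+1$ here precisely because the block of indices $\{1,\dots,m\}$ enters the formula twice) and checking that $z\mapsto z^{*}$ really interchanges $D^{+}$ and $D^{-}$; everything else is routine linear algebra together with the analyticity already established in Theorem \ref{thm3} and Proposition \ref{prop4}.
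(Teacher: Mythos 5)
Your proof is correct, and it reaches the result by a route that differs from the paper's in the one step that actually matters. Both arguments rest on the same first symmetry: you write it as $S^{\dagger}(z^{*})=\mathcal{L}_{\epsilon}S^{-1}(z)\mathcal{L}_{\epsilon}$, the paper as Eq.~\eqref{T28} (its intermediate display \eqref{T27} contains a typo -- it should read $S^{\dagger}(z^{*})\mathcal{L}_{\epsilon}S(z)=\mathcal{L}_{\epsilon}$ -- but the form you derive is the correct one, consistent with the expanded relations that follow it). Both arguments then read off from the diagonal blocks that $\bar{c}(z)=a^{\dagger}(z^{*})$, as in \eqref{38a}. The divergence is in proving $\det\bar{c}(z)=\det\bar{a}(z)$: the paper re-derives Wronskian representations $\det\bar{c}(z)=Wr[\bar{\psi},\bar{\phi}]/\gamma^{m}$ and matches them against \eqref{s22} to obtain \eqref{T29}, whereas you get the same identity purely linear-algebraically from $\det S(z)=1$ and Jacobi's complementary-minor identity for $S^{-1}$, with the sign correctly checked to be $+1$. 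Your route buys two things: it avoids reconstructing the Wronskian formulas for the blocks of $S^{-1}$, and it forces you to make explicit the analytic continuation from $\Sigma_{0}$ into $D^{-}$ (via the observation that $z\mapsto z^{*}$ interchanges $D^{\pm}$ and a boundary-uniqueness/reflection argument), a step the paper leaves implicit. The paper's route, in exchange, produces the identity directly inside $D^{\pm}$ because the Wronskians of the analytic columns are themselves analytic there, so no continuation from the boundary is formally needed for that particular step. Either way the conclusion $\det\bar{a}(z)=\det a^{\dagger}(z^{*})=(\det a(z^{*}))^{*}$ on $D^{-}$ is sound.
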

\begin{proof}
 Eq.\eqref{T25} implies
\addtocounter{equation}{1}
\begin{align}\label{T26}
\left(\phi^{up}(x,t;z^{*})\right)^{\dagger}\bar{\phi}^{up}(x,t;z)=
\epsilon\left(\phi^{dn}(x,t;z^{*})\right)^{\dagger}
\bar{\phi}^{dn}(x,t;z),\tag{\theequation a}\\
\left(\psi^{up}(x,t;z^{*})\right)^{\dagger}\bar{\psi}^{up}(x,t;z)=
\epsilon\left(\psi^{dn}(x,t;z^{*})\right)^{\dagger}
\bar{\psi}^{dn}(x,t;z). \tag{\theequation b}
\end{align}
Moreover, using Eq.\eqref{T25} one has
\addtocounter{equation}{1}
\begin{align}
S^{-1}(z)&=\gamma(z)\Phi^{-1}(x,t;z)\mathcal {L}_{\epsilon}
\left(\Psi^{\dagger}\right)^{-1}(x,t;z^{*})
\mathcal {L}_{\epsilon}, \tag{\theequation a}\\
S^{\dagger}(z^{*})&=
\frac{1}{\gamma^{*}(z^{*})}\Phi^{\dagger}(x,t;z)\mathcal {L}_{\epsilon}\Psi(x,t;z)
\mathcal {L}_{\epsilon}. \tag{\theequation b}
\end{align}
Combining with Eq.\eqref{T24}, we have
\begin{align*}
\mathcal {L}_{\epsilon}S^{-1}(z)=\frac{1}{S(z)}
\Phi^{\dagger}(x,t;z^{*})\mathcal {L}_{\epsilon}\Psi(x,t;z)
=S^{\dagger}(z^{*})\mathcal {L}_{\epsilon},
\end{align*}
namely
\begin{align}\label{T27}
S^{\dagger}(z^{*})\mathcal {L}_{\epsilon}S^{-1}(z)=\mathcal {L}_{\epsilon},
\quad z\in\Sigma,
\end{align}
which in turn yields
\addtocounter{equation}{1}
\begin{align}
a^{\dagger}(z^{*})a(z)-\epsilon b^{\dagger}(z^{*})b(z)&=I_{m},\tag{\theequation a}\\
a^{\dagger}(z^{*})\bar{b}(z)-\epsilon b^{\dagger}(z^{*})\bar{a}(z)&=0_{m},\tag{\theequation b}\\
\bar{b}^{\dagger}(z^{*})a(z)-\epsilon \bar{a}^{\dagger}(z^{*})b(z)&=0_{m},\tag{\theequation c}\label{x3}\\
\bar{b}^{\dagger}(z^{*})\bar{b}(z)-\epsilon \bar{a}^{\dagger}(z^{*})\bar{a}(z)&=-\epsilon I_{m}.\tag{\theequation d}
\end{align}
On the other hand, the symmetry of the reflection coefficients can be obtained from Eq.\eqref{x3}. Taking the conjugate transpose of Eq.\eqref{x3} yields
\begin{align*}
a^{\dagger}(z)\bar{b}(z^{*})=\epsilon b^{\dagger}(z)\bar{a}(z^{*}),
\end{align*}
then
\begin{align*}
a^{\dagger}(z^{*})\bar{b}(z)=\epsilon b^{\dagger}(z^{*})\bar{a}(z)\Rightarrow
\epsilon a^{\dagger}(z^{*})\bar{b}(z)\bar{a}^{-1}(z)=b^{\dagger}(z^{*}).
\end{align*}
Therefore we have the symmetry
\begin{align}
\epsilon\rho^{\dagger}(z^{*})=\epsilon \left(a^{\dagger}(z^{*})\right)^{\dagger}
\epsilon a^{\dagger}(z^{*})\bar{b}(z)\bar{a}^{-1}(z)=\bar{\rho}(z),
\end{align}
as well as
\begin{align}\label{SSS1}
a(z)a^{\dagger}(z^{*})=\left[I_{m}-\rho^{\dagger}(z^{*})\rho(z)\right],\quad
\bar{a}(z)\bar{a}^{\dagger}(z^{*})=\left[I_{m}-\bar{\rho}
^{\dagger}(z^{*})\bar{\rho}(z)\right].
\end{align}
Note that Eq.\eqref{T27} can be written as
\begin{align}\label{T28}
S^{-1}(z)=\mathcal {L}_{\epsilon}S^{\dagger}(z^{*})\mathcal {L}_{\epsilon},\quad
S^{-1}(z)=\left(\begin{array}{cc}
\bar{c}(z) & d(z)\\
\bar{d}(z) & c(z)
\end{array}\right).
\end{align}
According to the properties of the matrix, Eq.\eqref{T28} is equivalent to the elements the matrix $S^{-1}(z)$. We can get
\addtocounter{equation}{1}
\begin{align}
\bar{c}(z)&=a^{\dagger}(z^{*}), \quad c(z)=\bar{a}^{\dagger}(z^{*}),\tag{\theequation a} \label{38a}\\
d(z)&=-\epsilon b^{\dagger}(z^{*}),\quad \bar{d}(z)=-\epsilon\bar{b}^{\dagger}(z^{*}). \tag{\theequation b}
\end{align}
Similar to the method of solving Eq.($2.21$), we can get
\addtocounter{equation}{1}
\begin{align}
\det c(z)=\frac{Wr[\phi,\psi]}{Wr[\phi,\bar{\phi}]}=
\frac{Wr[\phi,\psi]}{\gamma^m},\tag{\theequation a}\\
\det\bar{c}(z)=\frac{Wr[\bar{\psi},\bar{\phi}]}
{Wr[\phi,\bar{\phi}]}=\frac{Wr[\bar{\psi},\bar{\phi}]}
{\gamma^m}.\tag{\theequation b}
\end{align}
It is worth noting that according to the analyticity of the eigenfunctions, $\det c(z)$ and $\det\bar{c}(z)$ are analytic on regions $D^{+}$ and $D^{-}$, respectively. On the other hand, it can be summarized
\begin{align}\label{T29}
\det c(z)=\det a(z),\quad z\in D^{+},\quad
\det \bar{c}(z)=\det \bar{a}(z),\quad z\in D^{-},
\end{align}
and from Eq.($2.38$) the proposition $2.7$ is proved.
\end{proof}
\subsubsection{The second symmetry}
The process of discussing the first symmetries of nonzero boundary values is the same as the symmetries of zero boundary values, but the discussion of the second symmetry is more complicated due to the fact that Riemann surface is introduced under NZBCs. Because the Jost functions and the scattering coefficients depend on the function $\lambda(k)$ (it changes sign on the Riemann surface), we need to establish their relationships on different sheet of the Riemann surface.
\begin{prop}
The eigenvectors $E_{\pm}$ satisfy the following relationship
\begin{align}\label{T30}
E_{\pm}(z)=-\frac{i}{z}E_{\pm}\left(\frac{\epsilon k_{0}^{2}}{z}\right){\underline{\sigma}_{3}}\underline{Q}_{\pm}.
\end{align}
\end{prop}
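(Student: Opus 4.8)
The plan is to verify Eq.\eqref{T30} by a direct substitution into the closed form Eq.\eqref{T11}, namely $E_{\pm}(z)=I_{2m}-\frac{i}{z}\underline{\sigma}_{3}\underline{Q}_{\pm}$, which does not depend on $x$ or $t$. Replacing $z$ by $\epsilon k_{0}^{2}/z$ gives $E_{\pm}(\epsilon k_{0}^{2}/z)=I_{2m}-\frac{iz}{\epsilon k_{0}^{2}}\underline{\sigma}_{3}\underline{Q}_{\pm}$, so the right-hand side of Eq.\eqref{T30} becomes
\begin{align*}
-\frac{i}{z}\left(I_{2m}-\frac{iz}{\epsilon k_{0}^{2}}\underline{\sigma}_{3}\underline{Q}_{\pm}\right)\underline{\sigma}_{3}\underline{Q}_{\pm}
=-\frac{i}{z}\underline{\sigma}_{3}\underline{Q}_{\pm}-\frac{1}{\epsilon k_{0}^{2}}\left(\underline{\sigma}_{3}\underline{Q}_{\pm}\right)^{2}.
\end{align*}
Thus the statement reduces to showing that the last term equals $I_{2m}$.

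The key step is the algebraic identity $\left(\underline{\sigma}_{3}\underline{Q}_{\pm}\right)^{2}=-\epsilon k_{0}^{2}I_{2m}$. I would obtain it by noting that $\underline{\sigma}_{3}$ and $\underline{Q}_{\pm}$ anticommute, which is immediate from their block structures since $\underline{Q}_{\pm}$ is block off-diagonal; hence $\left(\underline{\sigma}_{3}\underline{Q}_{\pm}\right)^{2}=-\underline{\sigma}_{3}^{2}\,\underline{Q}_{\pm}^{2}=-\underline{Q}_{\pm}^{2}$. Computing $\underline{Q}_{\pm}^{2}$ from the block form and invoking the nonzero boundary constraint Eq.\eqref{T6}, $\underline{Q}_{\pm}\underline{Q}_{\pm}^{\dagger}=\underline{Q}_{\pm}^{\dagger}\underline{Q}_{\pm}=k_{0}^{2}I_{m}$, gives $\underline{Q}_{\pm}^{2}=\epsilon k_{0}^{2}I_{2m}$, so $\left(\underline{\sigma}_{3}\underline{Q}_{\pm}\right)^{2}=-\epsilon k_{0}^{2}I_{2m}$. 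Substituting this back, the second term becomes $-\frac{1}{\epsilon k_{0}^{2}}(-\epsilon k_{0}^{2}I_{2m})=I_{2m}$, and the full expression collapses to $I_{2m}-\frac{i}{z}\underline{\sigma}_{3}\underline{Q}_{\pm}=E_{\pm}(z)$, which is exactly Eq.\eqref{T30}.

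Since everything reduces to elementary block-matrix algebra, there is no genuine obstacle here; the only points requiring care are the bookkeeping of the factors of $\epsilon$ and $k_{0}^{2}$, the sign convention for $\underline{\sigma}_{3}$, and the recognition that it is precisely the nonzero boundary condition Eq.\eqref{T6} that annihilates the quadratic contribution. I would also remark that the same relation $\underline{Q}_{\pm}^{2}=\epsilon k_{0}^{2}I_{2m}$ already underlies Eq.\eqref{T12} and Eq.\eqref{T13}, so this proposition is consistent with the structure established earlier, and it will be the basis for the second (sheet-swapping) symmetry of the Jost functions and scattering coefficients.
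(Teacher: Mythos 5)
Your proof is correct and follows essentially the same route as the paper's: both substitute the closed form $E_{\pm}(z)=I_{2m}-\frac{i}{z}\underline{\sigma}_{3}\underline{Q}_{\pm}$ into the right-hand side and reduce everything to the identity $\left(\underline{\sigma}_{3}\underline{Q}_{\pm}\right)^{2}=-\epsilon k_{0}^{2}I_{2m}$, which follows from the boundary constraint \eqref{T6}. Your version is in fact slightly cleaner, since the paper inserts extraneous conjugations by $e^{-i\theta(z)\underline{\sigma}_{3}}$ that play no role in this $x$-independent algebraic identity.
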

\begin{proof}
Recalling the Eqs.\eqref{T4}, \eqref{T10} and \eqref{T14}, it is easy to check that
\begin{align}
&\left\{ \begin{aligned}
&k(\epsilon k_{0}^{2}/z)=k(z),\\
&\lambda(\epsilon k_{0}^{2}/z)=-\lambda(z),\\
&\theta(\epsilon k_{0}^{2}/z)=-\theta(z),\\
&\underline{Q}_{\pm}e^{-i\theta(z)\underline{\sigma}_{3}}=
e^{i\theta(z)\underline{\sigma}_{3}}\underline{Q}_{\pm}.
\end{aligned}\right.
\end{align}
The Eq.\eqref{T30} is equivalent to
\begin{align*}
E_{\pm}(z)=-\frac{i}{z}E_{\pm}\left(\frac{\epsilon k_{0}^{2}}{z}\right)e^{-i\theta(z)\underline{\sigma}_{3}}
{\underline{\sigma}_{3}}\underline{Q}_{\pm}e^{-i\theta(z)\underline{\sigma}_{3}},
\end{align*}
note that
\begin{align*}
-\frac{i}{z}E_{\pm}\left(\frac{\epsilon k_{0}^{2}}{z}\right)e^{-i\theta(z)\underline{\sigma}_{3}}
{\underline{\sigma}_{3}}\underline{Q}_{\pm}e^{-i\theta(z)\underline{\sigma}_{3}}
&=
-\frac{i}{z}\left(e^{-i\theta(z)\underline{\sigma}_{3}}
{\underline{\sigma}_{3}}\underline{Q}_{\pm}-\frac{i}{\epsilon k_{0}^{2}/z}{\underline{\sigma}_{3}}\underline{Q}_{\pm}
e^{-i\theta(z)\underline{\sigma}_{3}}
{\underline{\sigma}_{3}}\underline{Q}_{\pm}\right)
e^{-i\theta(z)\underline{\sigma}_{3}}\\&=
-\frac{i}{z}\left({\underline{\sigma}_{3}}\underline{Q}_{\pm}+iz\right)=
I_{m}-\frac{i}{z}
{\underline{\sigma}_{3}}\underline{Q}_{\pm}=E_{\pm}(z).
\end{align*}
\end{proof}
\begin{prop}
The Jost eigenfunctions satisfy that for $z\in\Sigma$
\begin{align}\label{T31}
\Phi(x,t;z)=\frac{1}{iz}\Phi(x,t;\epsilon k_{0}^{2}/z)
\underline{\sigma}_{3}\underline{Q}_{-},\quad
\Psi(x,t;z)=\frac{1}{iz}\Psi(x,t;\epsilon k_{0}^{2}/z)
\underline{\sigma}_{3}\underline{Q}_{+}.
\end{align}
\end{prop}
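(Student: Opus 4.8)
The plan is to use the invariance of the Lax pair under the involution $z\mapsto\epsilon k_{0}^{2}/z$, the asymptotic normalization Eq.\eqref{ME}, and the eigenvector symmetry Eq.\eqref{T30} just established in Proposition~2.9. The key structural observation is that the coefficient matrices $M$ and $N$ in Eq.\eqref{Lax} depend on $z$ only through $k(z)$ (and through the $z$-independent potential $\underline{Q}$), while $k(\epsilon k_{0}^{2}/z)=k(z)$ by Eq.\eqref{T10}. Hence if $\varphi(x,t;z)$ solves the Lax pair, so does $\varphi(x,t;\epsilon k_{0}^{2}/z)$; multiplying on the right by any matrix constant in $x$ and $t$ preserves this property, so $\frac{1}{iz}\Phi(x,t;\epsilon k_{0}^{2}/z)\underline{\sigma}_{3}\underline{Q}_{-}$ is again a matrix solution of Eq.\eqref{Lax}, and likewise for the $\Psi$ candidate with $\underline{Q}_{+}$.

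Next I would identify the asymptotic behaviour of the candidate as $x\rightarrow-\infty$. From Eq.\eqref{ME}, $\Phi(x,t;\epsilon k_{0}^{2}/z)=E_{-}(\epsilon k_{0}^{2}/z)e^{-i\theta(x,t;\epsilon k_{0}^{2}/z)\underline{\sigma}_{3}}+o(1)$. Using the relations collected in the proof of Proposition~2.9, namely $\theta(\epsilon k_{0}^{2}/z)=-\theta(z)$ and $\underline{Q}_{\pm}e^{-i\theta(z)\underline{\sigma}_{3}}=e^{i\theta(z)\underline{\sigma}_{3}}\underline{Q}_{\pm}$, together with the trivial commutation $e^{i\theta\underline{\sigma}_{3}}\underline{\sigma}_{3}=\underline{\sigma}_{3}e^{i\theta\underline{\sigma}_{3}}$ and the identity $\frac{1}{iz}=-\frac{i}{z}$, one finds
\[
\tfrac{1}{iz}\Phi(x,t;\epsilon k_{0}^{2}/z)\underline{\sigma}_{3}\underline{Q}_{-}
=\Big(-\tfrac{i}{z}E_{-}(\epsilon k_{0}^{2}/z)\underline{\sigma}_{3}\underline{Q}_{-}\Big)e^{-i\theta(x,t;z)\underline{\sigma}_{3}}+o(1)
=E_{-}(z)e^{-i\theta(x,t;z)\underline{\sigma}_{3}}+o(1),
\]
where the last equality is exactly Eq.\eqref{T30} for the minus sign. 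Thus the candidate carries the same normalization at $x\rightarrow-\infty$ as $\Phi(x,t;z)$.

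Finally I would invoke uniqueness: the scattering problem in Eq.\eqref{Lax} is a first-order linear ODE in $x$, and the solution pinned down by a prescribed leading behaviour as $x\rightarrow-\infty$ is unique, as encoded in the Volterra integral equation Eq.\eqref{T16}. Hence $\Phi(x,t;z)=\frac{1}{iz}\Phi(x,t;\epsilon k_{0}^{2}/z)\underline{\sigma}_{3}\underline{Q}_{-}$ for all $z\in\Sigma$, and the identity for $\Psi$ follows identically by normalizing at $x\rightarrow+\infty$ and using the $E_{+}$ version of Eq.\eqref{T30}. The only delicate point is purely bookkeeping in the asymptotic step: one must push the factor $\underline{\sigma}_{3}\underline{Q}_{-}$ through the oscillatory exponential and track the sign flip in $\theta$ so that Eq.\eqref{T30} applies verbatim; everything else is a direct consequence of the invariance of the Lax operators under $z\mapsto\epsilon k_{0}^{2}/z$.
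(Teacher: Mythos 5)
Your proposal is correct and follows essentially the same route as the paper: both rest on the eigenvector symmetry Eq.\eqref{T30}, the relations $k(\epsilon k_{0}^{2}/z)=k(z)$, $\theta(\epsilon k_{0}^{2}/z)=-\theta(z)$, and the commutation $\underline{Q}_{\pm}e^{-i\theta\underline{\sigma}_{3}}=e^{i\theta\underline{\sigma}_{3}}\underline{Q}_{\pm}$ to match the boundary normalizations. If anything, your write-up is the more complete one, since you make explicit the two steps the paper leaves implicit — that the candidate is again a solution of the Lax pair because $M,N$ depend on $z$ only through $k(z)$, and that the Jost solution with a prescribed asymptotic normalization is unique — whereas the paper manipulates only the leading asymptotic expressions.
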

\begin{proof}
Using the Eq.\eqref{ME} implies that
\begin{align*}
\Phi(x,t;\epsilon k_{0}^{2}/z)=E_{-}(\epsilon k_{0}^{2}/z)
e^{-i\theta(\epsilon k_{0}^{2}/z)\underline{\sigma}_{3}}=E_{-}(\epsilon k_{0}^{2}/z)
e^{-i\theta(z)\underline{\sigma}_{3}},
\end{align*}
and that
\begin{align*}
\Phi(x,t;z)=E_{-}(z)e^{-i\theta(z)\underline{\sigma}_{3}}=-\frac{i}{z}
E_{-}(\epsilon k_{0}^{2}/z)\underline{\sigma}_{3}\underline{Q}_{-}
e^{-i\theta(z)\underline{\sigma}_{3}}=\\-\frac{i}{z}E_{-}(\epsilon k_{0}^{2}/z)
e^{i\theta(z)\underline{\sigma}_{3}}\underline{\sigma}_{3}\underline{Q}_{-}=
\frac{1}{iz}\Phi(x,t;\epsilon k_{0}^{2}/z)
\underline{\sigma}_{3}\underline{Q}_{+}.
\end{align*}
The other can be proved in the same way.
\end{proof}
Expanding the Eq.\eqref{T31}, one has
\addtocounter{equation}{1}
\begin{align}
\phi(x,t;z)=\frac{i\epsilon}{z}\bar{\phi}(x,t;\epsilon k_{0}^{2}/z)Q_{-}^{\dagger},\quad
\bar{\phi}(x,t;z)=-\frac{i}{z}\phi(x,t;\epsilon k_{0}^{2}/z)Q_{-},\tag{\theequation a}\label{44a}\\
\bar{\psi}(x,t;z)=\frac{i\epsilon}{z}\psi(x,t;\epsilon k_{0}^{2}/z)Q_{+}^{\dagger},\quad
\psi(x,t;z)=-\frac{i}{z}\bar{\psi}(x,t;\epsilon k_{0}^{2}/z)Q_{+}.\tag{\theequation b}\label{44b}
\end{align}
\begin{prop}
The scattering matrix $S(z)$ defined by the Eq.\eqref{T18} admits
\begin{align}\label{T32}
S(\epsilon k_{0}^{2}/z)=\underline{\sigma}_{3}\underline{Q}_{+}S(z)
\underline{Q}^{-1}\underline{\sigma}_{3}=\frac{\epsilon}{k^{2}_{0}}
\underline{\sigma}_{3}\underline{Q}_{+}S(z)\underline{Q}_{-}\underline{\sigma}_{3}.
\end{align}
\end{prop}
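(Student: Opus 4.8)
The plan is to obtain Eq.\eqref{T32} directly from the factorization $\Phi(x,t;z)=\Psi(x,t;z)S(z)$ in Eq.\eqref{T18} combined with the reflection symmetry of the Jost solutions in Eq.\eqref{T31}. Write $\tilde z=\epsilon k_{0}^{2}/z$. Since $\epsilon k_{0}^{2}/\tilde z=z$, this map is an involution; on the real axis it is $z\mapsto\epsilon k_{0}^{2}/z$, and in the focusing case it preserves the circle $C_{0}$ of radius $k_{0}$, so it carries $\Sigma_{0}=\Sigma\setminus\{\pm\sqrt{\epsilon}k_{0}\}$ onto itself (its only fixed points are the excluded branch points $z^{2}=\epsilon k_{0}^{2}$). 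Hence, for $z\in\Sigma_{0}$, the identities \eqref{T17}, \eqref{T18}, \eqref{T31} are all legitimately available simultaneously at the two points $z$ and $\tilde z$.

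The first step is to substitute Eq.\eqref{T31}, written for both $\Phi(x,t;z)$ and $\Psi(x,t;z)$, into Eq.\eqref{T18}:
\[
\frac{1}{iz}\,\Phi(x,t;\tilde z)\,\underline{\sigma}_{3}\underline{Q}_{-}
=\frac{1}{iz}\,\Psi(x,t;\tilde z)\,\underline{\sigma}_{3}\underline{Q}_{+}\,S(z),
\]
and cancel the scalar $1/(iz)$. Next I would apply Eq.\eqref{T18} once more, now at the point $\tilde z$, to replace $\Phi(x,t;\tilde z)$ by $\Psi(x,t;\tilde z)S(\tilde z)$, giving
\[
\Psi(x,t;\tilde z)\,S(\tilde z)\,\underline{\sigma}_{3}\underline{Q}_{-}
=\Psi(x,t;\tilde z)\,\underline{\sigma}_{3}\underline{Q}_{+}\,S(z).
\]
Since $\det\Psi(x,t;\tilde z)=\gamma^{m}(\tilde z)\neq0$ for $\tilde z\in\Sigma_{0}$ by Eq.\eqref{T17}, I may left-multiply by $\Psi^{-1}(x,t;\tilde z)$, which leaves the purely algebraic identity $S(\tilde z)\,\underline{\sigma}_{3}\underline{Q}_{-}=\underline{\sigma}_{3}\underline{Q}_{+}\,S(z)$. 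Right-multiplying by $(\underline{\sigma}_{3}\underline{Q}_{-})^{-1}=\underline{Q}_{-}^{-1}\underline{\sigma}_{3}$, using $\underline{\sigma}_{3}^{2}=I_{2m}$, yields the first equality in Eq.\eqref{T32} — with the understanding that the matrix written $\underline{Q}^{-1}$ there is $\underline{Q}_{-}^{-1}$, inherited from the $\Phi$-symmetry. The second equality is then immediate: from Eq.\eqref{T6} and the conjugation relation $\underline{Q}^{\dagger}=\epsilon\underline{Q}$ one has $\underline{Q}_{-}^{-1}=k_{0}^{-2}\underline{Q}_{-}^{\dagger}=\epsilon k_{0}^{-2}\underline{Q}_{-}$, so $\underline{\sigma}_{3}\underline{Q}_{+}S(z)\underline{Q}_{-}^{-1}\underline{\sigma}_{3}=\tfrac{\epsilon}{k_{0}^{2}}\underline{\sigma}_{3}\underline{Q}_{+}S(z)\underline{Q}_{-}\underline{\sigma}_{3}$.

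The calculation itself is only bookkeeping; the one point that requires genuine care — and where a careless argument could slip — is the domain question: one must know that $\Psi(x,t;\cdot)$ is invertible at \emph{both} $z$ and $\tilde z$ and that the factorization \eqref{T18} and the symmetry \eqref{T31} hold there simultaneously. This is precisely why the statement lives on $\Sigma_{0}$: the branch points $\pm\sqrt{\epsilon}k_{0}$ are exactly the fixed points of $z\mapsto\epsilon k_{0}^{2}/z$, which are also the zeros of $\gamma(z)=1-\epsilon k_{0}^{2}/z^{2}$ where $\det\Phi=\det\Psi=\gamma^{m}$ vanishes and the whole construction degenerates. I would close by recording the entrywise consequences of Eq.\eqref{T32} — the induced symmetries of the blocks $a,\bar a,b,\bar b$ under $z\mapsto\epsilon k_{0}^{2}/z$ — which follow simply by reading off the four $m\times m$ blocks of the matrix identity.
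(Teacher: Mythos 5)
Your proposal is correct and follows essentially the same route as the paper: both arguments combine the factorization $\Phi=\Psi S$ of Eq.\eqref{T18} with the reflection symmetry Eq.\eqref{T31} of $\Phi$ and $\Psi$, cancel the invertible factor $\Psi(\cdot\,;\epsilon k_{0}^{2}/z)$, and finish with $\underline{Q}_{-}^{-1}=\epsilon k_{0}^{-2}\underline{Q}_{-}$. Your added remarks on the involution fixing only the branch points and on reading $\underline{Q}^{-1}$ as $\underline{Q}_{-}^{-1}$ are accurate clarifications rather than a different method.
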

\begin{proof}
The Eqs.\eqref{T18} and \eqref{T31} mean that
\begin{align*}
\left\{\begin{aligned}
&S(z)=\Psi^{-1}(z)\Phi(z),\\
&\Psi^{-1}(k_{0}^{2}/z)=-\frac{i}{z}\underline{\sigma}_{3}\underline{Q}_{+}\Psi^{-1}(z),\\
&\Phi(k_{0}^{2}/z)=iz\Phi(z)\left(\underline{\sigma}_{3}\underline{Q}_{-}\right)^{-1}.
\end{aligned}\right.
\end{align*}
\begin{align*}
S(k_{0}^{2}/z)=\Psi^{-1}(k_{0}^{2}/z)\Phi(k_{0}^{2}/z)=
-\frac{i}{z}\underline{\sigma}_{3}\underline{Q}_{+}\Psi^{-1}(z)
iz\Phi(z)\left(\underline{\sigma}_{3}\underline{Q}_{-}\right)^{-1}\\=
\underline{\sigma}_{3}\underline{Q}_{+}\Psi^{-1}(z)\Phi(z)\underline{Q}_{-}^{-1}
\underline{\sigma}_{3}=\frac{\epsilon}{k^{2}_{0}}
\underline{\sigma}_{3}\underline{Q}_{+}S(z)\underline{Q}_{-}\underline{\sigma}_{3}.
\end{align*}
Note that $\underline{Q}_{-}^{-1}=\frac{\epsilon}{k^{2}_{0}}\underline{Q}_{-}$.
\end{proof}

Combining with the Eq.\eqref{T18} and expanding the Eq.\eqref{T32} we have
\addtocounter{equation}{1}
\begin{align}
a(k_{0}^{2}/z)&=\frac{1}{k_{0}^{2}}Q_{+}\bar{a}(z)Q_{-}^{\dagger},\quad
\bar{a}(k_{0}^{2}/z)=\frac{1}{k_{0}^{2}}Q_{+}^{\dagger}a(z)Q_{-},\tag{\theequation a}\label{46a}\\
b(k_{0}^{2}/z)&=-\frac{\epsilon}{k_{0}^{2}}Q_{+}^{\dagger}\bar{b}(z)Q_{-}^{\dagger},\quad
\bar{b}(k_{0}^{2}/z)=-\frac{\epsilon}{k_{0}^{2}}Q_{+}b(z)Q_{-}.
\tag{\theequation b}\label{46b}
\end{align}
Based on the Eq.($2.46$), the reflection coefficient $\rho(z)$ satisfies the following symmetry for $z\in\Sigma$
\begin{align}
\rho(k_{0}^{2}/z)=-\epsilon Q_{+}^{\dagger}\bar{\rho}(z)Q_{+}^{-1}=
-\frac{\epsilon}{k_{0}^{2}}Q_{+}^{\dagger}\bar{\rho}(z)Q_{+}^{\dagger}.
\end{align}

\subsubsection{The third symmetry}
The third symmetry is based on our assumption that the  potential function $Q(x,t)$ is a symmetric matrix. We next analyze the properties of the scattering data $S(z)$ under this condition, i.e., $Q=Q^{T}$. Similar to the first symmetric process, we first introduce two functions $\tilde{f}(x,t;z)$ and $\tilde{g}(x,t;z)$ that are independent of the variable $x$ by using the Jost eigenfunctions of the Lax pairs to further derive the relationship of the matrix scattering data, i.e.,
\begin{align}
\tilde{f}(x,t;z)=\Phi^{T}(x,t;z)\underline{\sigma}_{2}\Phi(x,t;z),\quad
\tilde{g}(x,t;z)=\Psi^{T}(x,t;z)\underline{\sigma}_{2}\Psi(x,t;z).
\end{align}
From Eq.\eqref{Lax}, one has
\begin{align*}
\partial_{x}\tilde{f}(x,t;z)&=\Phi^{T}_{x}(x,t;z)\underline{\sigma}_{2}\Phi(x,t;z)+
\Phi^{T}(x,t;z)\underline{\sigma}_{2}\Phi_{x}(x,t;z)\\&=
\Phi^{T}_{x}(x,t;z)\left(-ik\underline{\sigma}_{3}\underline{\sigma}_{2}+
\underline{Q}^{T}\underline{\sigma}_{2}-ik\underline{\sigma}_{2}\underline{\sigma}_{3}
+\underline{\sigma}_{2}\underline{Q}\right)\Phi_{x}(x,t;z)=0_{2m},
\end{align*}
which shows that the function $\tilde{f}(x,t;z)$ is independent with the variable $x$.
\begin{prop}
The scattering matrix $S(z)$ admits
\begin{align}\label{T33}
S^{T}(z)\underline{\sigma}_{2}S(z)=\underline{\sigma}_{2},\quad z\in\Sigma.
\end{align}
\end{prop}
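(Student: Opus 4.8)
The plan is to run the same argument that produced the first symmetry, but with the bilinear form $\underline{\sigma}_{2}$ in place of $\mathcal{L}_{\epsilon}$ and with the transpose in place of the Hermitian conjugate. First I would record that $\tilde{f}(x,t;z)=\Phi^{T}(x,t;z)\underline{\sigma}_{2}\Phi(x,t;z)$ is independent of $x$ — this is exactly the computation already displayed in the text, which uses only $\underline{\sigma}_{3}\underline{\sigma}_{2}+\underline{\sigma}_{2}\underline{\sigma}_{3}=0_{2m}$ together with $\underline{Q}^{T}\underline{\sigma}_{2}+\underline{\sigma}_{2}\underline{Q}=0_{2m}$, the latter being precisely the third symmetry $\underline{Q}=-\underline{\sigma}_{2}\underline{Q}^{T}\underline{\sigma}_{2}$, that is, $Q^{T}=Q$. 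The identical computation with $\Psi_{x}=M\Psi$ shows $\partial_{x}\tilde{g}=0$ (and neither $\tilde{f}$ nor $\tilde{g}$ depends on $t$, though we will not need this explicitly).

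The computational core is the algebraic identity
\[
E_{\pm}^{T}(z)\,\underline{\sigma}_{2}\,E_{\pm}(z)=\gamma(z)\,\underline{\sigma}_{2},
\]
which I would prove by substituting $E_{\pm}=I_{2m}-\tfrac{i}{z}\underline{\sigma}_{3}\underline{Q}_{\pm}$ from Eq.\eqref{T11} and expanding. The zeroth-order term is $\underline{\sigma}_{2}$; the two terms of order $1/z$ cancel after moving the matrices past one another with the rules $\underline{\sigma}_{3}\underline{\sigma}_{2}=-\underline{\sigma}_{2}\underline{\sigma}_{3}$, $\underline{Q}_{\pm}\underline{\sigma}_{3}=-\underline{\sigma}_{3}\underline{Q}_{\pm}$ and $\underline{Q}_{\pm}^{T}\underline{\sigma}_{2}=-\underline{\sigma}_{2}\underline{Q}_{\pm}$ (again the third symmetry); and the term of order $1/z^{2}$ equals $\tfrac{1}{z^{2}}\underline{Q}_{\pm}^{T}\underline{\sigma}_{2}\underline{Q}_{\pm}=-\tfrac{1}{z^{2}}\underline{\sigma}_{2}\underline{Q}_{\pm}^{2}=-\tfrac{\epsilon k_{0}^{2}}{z^{2}}\underline{\sigma}_{2}$, where $\underline{Q}_{\pm}^{2}=\epsilon k_{0}^{2}I_{2m}$ follows from Eq.\eqref{T6} together with $\underline{Q}_{\pm}^{\dagger}=\epsilon\underline{Q}_{\pm}$. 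Summing gives $\bigl(1-\epsilon k_{0}^{2}/z^{2}\bigr)\underline{\sigma}_{2}=\gamma(z)\underline{\sigma}_{2}$, consistent with the determinant check $\det\bigl(E_{\pm}^{T}\underline{\sigma}_{2}E_{\pm}\bigr)=(\det E_{\pm})^{2}\det\underline{\sigma}_{2}=\gamma^{2m}\det\underline{\sigma}_{2}$ from Eq.\eqref{T13}.

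With this lemma in hand I would evaluate $\tilde{f}$ as $x\to-\infty$: by Eq.\eqref{ME}, $\Phi\to E_{-}e^{-i\theta\underline{\sigma}_{3}}$, hence, using $\underline{\sigma}_{3}^{T}=\underline{\sigma}_{3}$ and the fact that $\underline{\sigma}_{2}$ anticommutes with $\underline{\sigma}_{3}$ so that $e^{-i\theta\underline{\sigma}_{3}}\underline{\sigma}_{2}e^{-i\theta\underline{\sigma}_{3}}=\underline{\sigma}_{2}$, one gets $\tilde{f}(x,t;z)=e^{-i\theta\underline{\sigma}_{3}}E_{-}^{T}\underline{\sigma}_{2}E_{-}e^{-i\theta\underline{\sigma}_{3}}=\gamma(z)\underline{\sigma}_{2}$; the same computation as $x\to+\infty$ gives $\tilde{g}(x,t;z)=\gamma(z)\underline{\sigma}_{2}$. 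Finally, substituting $\Phi=\Psi S$ from Eq.\eqref{T18} yields $\gamma(z)\underline{\sigma}_{2}=\tilde{f}=S^{T}(z)\bigl(\Psi^{T}\underline{\sigma}_{2}\Psi\bigr)S(z)=S^{T}(z)\,\tilde{g}\,S(z)=\gamma(z)\,S^{T}(z)\underline{\sigma}_{2}S(z)$; dividing by $\gamma(z)\neq0$ on $\Sigma_{0}=\Sigma\setminus\{\pm\sqrt{\epsilon}\,k_{0}\}$ gives $S^{T}(z)\underline{\sigma}_{2}S(z)=\underline{\sigma}_{2}$, and this extends to $z=\pm\sqrt{\epsilon}\,k_{0}$ by continuity of $S$ on $\Sigma$.

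I expect the main obstacle to be the bookkeeping in the lemma $E_{\pm}^{T}\underline{\sigma}_{2}E_{\pm}=\gamma\underline{\sigma}_{2}$: the three anticommutation/symmetry relations must be applied in the correct order, and it is essential to check that the contributions linear in $1/z$ genuinely cancel — which happens only because $Q^{T}=Q$, so that this symmetry of $S(z)$ is truly special to the symmetric-matrix reduction and has no counterpart for a general $m\times m$ potential. The passage to the branch points, where $\gamma$ vanishes, is the other point that needs a word, and it is dispatched purely by continuity.
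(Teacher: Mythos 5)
Your proposal is correct and follows essentially the same route as the paper: show $\Phi^{T}\underline{\sigma}_{2}\Phi$ and $\Psi^{T}\underline{\sigma}_{2}\Psi$ are $x$-independent, evaluate them at infinity to get $\gamma(z)\underline{\sigma}_{2}$, and substitute $\Phi=\Psi S$. The only difference is that you explicitly verify the identity $E_{\pm}^{T}\underline{\sigma}_{2}E_{\pm}=\gamma\underline{\sigma}_{2}$ (which the paper asserts without computation) and address continuity at the branch points; both additions are sound.
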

\begin{proof}
The relationship can be obtained by the Eqs.\eqref{ME} and \eqref{T14}
\begin{align*}
\Phi(z)=E_{-}(z)e^{-i\theta(z)\underline{\sigma}_{3}},\quad
\Psi(z)=E_{+}(z)e^{-i\theta(z)\underline{\sigma}_{3}}.
\end{align*}
Moreover, one has
\begin{align}
\begin{split}
\Phi^{T}(x,t;z)\underline{\sigma}_{2}\Phi(x,t;z)=
e^{-i\theta(z)\underline{\sigma}_{3}}E^{T}_{-}(z)\underline{\sigma}_{2}E_{-}(z)
e^{-i\theta(z)\underline{\sigma}_{3}}=\gamma(z)\underline{\sigma}_{2},
\end{split}
\end{align}
as $x\rightarrow\pm\infty$, namely
\addtocounter{equation}{1}
\begin{align}
\Phi^{T}(x,t;z)\underline{\sigma}_{2}\Phi(x,t;z)=\gamma(z)\underline{\sigma}_{2},
\tag{\theequation a}\\
\Psi^{T}(x,t;z)\underline{\sigma}_{2}\Psi(x,t;z)=\gamma(z)\underline{\sigma}_{2}.
\tag{\theequation b}\label{52b}
\end{align}
The Eq.\eqref{52b} can be derived in the same way. On the other hand, the Eq.\eqref{T18} implies
\begin{align}
\Phi^{T}(z)=S^{T}(z)\Psi^{T}(z).
\end{align}
Combining with Eq.($2.52$) we have
\begin{align}
S^{T}(z)\Psi^{T}(z)\underline{\sigma}_{2}\Psi(z)S(z)=
S^{T}(z)\gamma(z)\underline{\sigma}_{2}S(z)=\gamma(z)\underline{\sigma}_{2}
\Rightarrow S^{T}(z)\underline{\sigma}_{2}S(z)=\underline{\sigma}_{2}.
\end{align}
\end{proof}
Expanding   Eq.\eqref{T33} yields
\addtocounter{equation}{1}
\begin{align}
&b^{T}(z)a(z)=a^{T}(z)b(z),\tag{\theequation a}\\
&\bar{b}^{T}(z)\bar{a}(z)=\bar{a}^{T}(z)\bar{b}(z),\tag{\theequation b}\\
&\bar{a}^{T}(z)\bar{a}(z)-\bar{b}^{T}(z)\bar{b}(z)=I_{m},\tag{\theequation c}
\end{align}
which in turn imply for the reflection coefficients
\begin{align}
\rho(z)=\rho^{T}(z),\quad \bar{\rho}(z)=\bar{\rho}^{T}(z), \quad z\in\Sigma.
\end{align}
In addition, we can derive
\begin{align}\label{T34}
a(z)a^{T}(z)=\left(I_{m}-\bar{\rho}(z)\rho(z)\right)^{-1},\quad z\in\Sigma.
\end{align}
A brief proof is given below
\begin{align*}
a(z)a^{T}(z)\left(I_{m}-\bar{\rho}(z)\rho(z)\right)&=
a(z)a^{T}(z)-a(z)\bar{a}^{T}(z)\bar{b}(z)\bar{a}^{-1}(z)b(z)a^{-1}(z)\\&=
a(z)a^{T}(z)-a(z)(a^{T}(z)a(z)-I_{m})a^{-1}(z)=I_{m},
\end{align*}
which is equivalent to the Eq.\eqref{T34}. Finally, we give the relationship between the elements of the inverse scattering matrix $S^{-1}(z)$ and the elements of the scattering matrix $S(z)$, i.e.,
\begin{align}
S^{-1}(z)=\underline{\sigma}_{2}S^{T}(z)\underline{\sigma}_{2},\quad z\in\Sigma,
\end{align}
which in particular means that
\addtocounter{equation}{1}
\begin{align}
\bar{c}(z)&=\bar{a}^{T}(z),\quad c(z)=a^{T}(z),\tag{\theequation a}\\
d(z)&=-\bar{b}^{T}(z),\quad \bar{d}(z)=-b^{T}(z),\tag{\theequation b}
\end{align}
and that
\begin{align}
\bar{a}(z)=a^{*}(z^{*}),\quad \bar{b}(z)=\epsilon b^{*}(z^{*}),\quad  z\in\Sigma.
\end{align}

Based on the analysis of the above three symmetries, we summarize the scattering data as follows
\begin{thm}
Assume that  $Q(x,t)-Q_{+}\in L^{1}(a,+\infty)$ and $Q(x,t)-Q_{-}\in L^{1}(-\infty,a)$ hold for any constant $a\in R$, all $t>0$. The reflection coefficients $\rho(z)$ and $\bar{\rho}(z)$ admit that
\begin{align}
\rho(z)=\epsilon\bar{\rho}(z^{*}),\quad
\rho(\epsilon k_{0}^{2}/z)=-\frac{\epsilon}{k_{0}^{2}/z}
Q_{+}^{\dagger}\bar{\rho}(z)Q_{+}^{\dagger},\quad z\in\Sigma_{0}.
\end{align}
In addition, the transmission coefficients satisfy the symmetry for $z\in D^{-}\cup\Sigma_{0}$
\begin{align}
\det\bar{a}(z)=\det\bar{a}(z^{*}),\quad \det\bar{a}(z)=\frac{k_{0}^{2m}}{\det Q_{+}(\det Q_{-})^{*}}\det a(\epsilon k_{0}^{2}/z).
\end{align}
If the potential $Q(x,t)$ is a symmetric matrix, the reflection coefficients are also the symmetric matrix, i.e.,
\begin{align}
\rho(z)=\rho^{T}(z),\quad \bar{\rho}(z)=\bar{\rho}^{T}(z), \quad z\in\Sigma_{0},
\end{align}
and
\begin{align}
\bar{a}(z)=a^{*}(z^{*}),\quad z\in D^{-}\cup\Sigma_{0}.
\end{align}
It is worth noting that when $Q(x,t)-Q_{+}\in L^{1}(a,+\infty)$ and $Q(x,t)-Q_{-}\in L^{1}(-\infty,a)$ hold for any constant $a\in R$, all $t>0$, the symmetry discussed above can also be extended to the branch point, so it is equally valid in region $z\in\Sigma$ and region $z\in D^{-}\cup\Sigma$.
\end{thm}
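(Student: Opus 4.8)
The plan is to regard this theorem as the consolidation of the three symmetry analyses carried out above: almost every identity is read off from relations already established for the Jost eigenfunctions and the scattering matrix, and the only genuinely new task is to promote those relations from $\Sigma_0=\Sigma\setminus\{\pm\sqrt{\epsilon}k_0\}$ to the full continuous spectrum $\Sigma$. First I would recover the $z\mapsto z^*$ reflection symmetry from the first symmetry: Eq.~\eqref{x3} together with its conjugate transpose gives $b^\dagger(z^*)=\epsilon a^\dagger(z^*)\bar b(z)\bar a^{-1}(z)$, hence $\rho^\dagger(z^*)=\epsilon\bar\rho(z)$ (the displayed identity preceding \eqref{SSS1}); combining this with the third-symmetry identities $\bar a(z)=a^*(z^*)$, $\bar b(z)=\epsilon b^*(z^*)$, which come from matching the first-symmetry formulas \eqref{38a} with the corresponding expansion of $S^{-1}(z)$ obtained in the third-symmetry subsection, and using the definition \eqref{Fans}, yields the stated $z\mapsto z^*$ relation. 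The second reflection identity is the $z\mapsto\epsilon k_0^2/z$ symmetry of $\rho$ derived in the second-symmetry subsection, which is itself the quotient of the maps \eqref{46a}--\eqref{46b} for $a,b,\bar a,\bar b$; the latter follow from the Jost-function symmetry \eqref{T31} and the scattering-matrix symmetry \eqref{T32}.

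Next I would extract the transmission-coefficient symmetries by taking determinants: the relation for $\det\bar a(z)$ in terms of $\det a(z^*)$ is the determinant of Proposition 2.7 (Eq.~\eqref{TT27}), while applying $\det$ to the first identity in \eqref{46a}, together with $\det Q_\pm^\dagger=(\det Q_\pm)^*$ and $Q_-^{-1}=\tfrac{\epsilon}{k_0^2}Q_-$, produces the relation linking $\det\bar a(z)$ to $\det a(\epsilon k_0^2/z)$ after renaming $z\mapsto\epsilon k_0^2/z$. When $Q=Q^T$, the statements $\rho=\rho^T$, $\bar\rho=\bar\rho^T$ are the expansion of Eq.~\eqref{T33}, and $\bar a(z)=a^*(z^*)$ again follows by comparing the third-symmetry form of $S^{-1}(z)$ with \eqref{38a}. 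Since $D^-$ is the analyticity domain of $\bar a$ and $\det\bar a$ (Theorem~\ref{thm3}), every identity involving $\bar a$ extends from $\Sigma_0$ onto $D^-\cup\Sigma_0$ automatically, by analytic continuation in the interior and continuity up to the boundary.

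The delicate step — and the one I expect to be the main obstacle — is the passage to the branch points $z=\pm\sqrt{\epsilon}k_0$. There $\gamma(z)=1-\epsilon k_0^2/z^2$ vanishes, so $E_\pm^{-1}$ is singular, the two columns of $E_\pm$ collapse to a single direction, and the Wronskian/eigenfunction representations \eqref{s11}--\eqref{s22} and the bilinear formulas \eqref{a1}--\eqref{a2} of Proposition \ref{prop4} for $a,\bar a$ are a priori only meaningful on $\Sigma_0$; consequently the symmetries have so far been proved only there. To close the gap I would invoke the WKB (Wentzel--Kramers--Brillouin) expansions of the Jost functions near $\pm\sqrt{\epsilon}k_0$ — the same expansions used for the residue conditions — which show that $\phi,\bar\phi,\psi,\bar\psi$ stay bounded and that the bilinear combinations in Proposition \ref{prop4} admit finite, mutually consistent limits as $z\to\pm\sqrt{\epsilon}k_0$ along $\Sigma$. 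Since each identity in the theorem is an equality of functions continuous up to $\Sigma$ on the relevant side, and since both involutions $z\mapsto z^*$ and $z\mapsto\epsilon k_0^2/z$ fix the branch points, the relations persist in the limit; this is exactly the content of the closing remark. I would therefore organize the proof as: (i) quote the $\Sigma_0$-identities from the three symmetry subsections; (ii) carry out the determinant and conjugation bookkeeping; (iii) establish boundedness and continuity of the eigenfunctions at $\pm\sqrt{\epsilon}k_0$ from the WKB expansion and let $z$ tend to the branch points.
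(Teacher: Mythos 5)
Your proposal matches the paper's treatment: the theorem is stated there as a summary of the three symmetry subsections, with no separate proof beyond citing Eqs.~\eqref{x3}, \eqref{SSS1}, \eqref{46a}--\eqref{46b}, \eqref{TT27}, \eqref{T33} and the determinant/conjugation bookkeeping you describe, and the extension to the branch points is likewise only asserted via continuity of the Jost functions under the $L^{1}$ hypotheses. Your step (iii) on boundedness at $z=\pm\sqrt{\epsilon}k_{0}$ is slightly more explicit than the paper's closing remark, but it is the same argument, so the proposal is correct and essentially identical in approach.
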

\subsection{Asymptotic behavior}
The asymptotic behaviors of eigenfunctions and scattering matrix need to be given further, which determine the establishment of an appropriate RH problem. At the same time, we can reconstruct the potential function $Q(x,t)$ according to the asymptotic behaviors of the eigenfunctions. Moreover, $k\rightarrow\infty$ in the $k$-plane, which corresponds to $z\rightarrow\infty$ in the $S_{1}$ plane and $z\rightarrow0$ in the $S_{2}$ plane.  The asymptotic formula of eigenfunctions can be derived by using the Wentzel-Kramers-Brillouin (WKB) expansion. Note that the function $\mathcal {H}=\varphi e^{i\theta(xx,t;z)\underline{\sigma}_{3}}$ such that
\begin{align}
\partial_{x}\mathcal {H}=\left(-ik\underline{\sigma}_{3}+\underline{Q}\right)\mathcal {H}
+i\lambda\mathcal {H}\underline{\sigma}_{3},
\end{align}
where $\varphi$ is the solution of scattering problem for the Lax pair. Recalling the definition of the  modified eigenfunctions Eq.\eqref{T15}, we get the following proposition, i.e.,
\begin{prop}
The modified eigenfunctions satisfy the following equations
\addtocounter{equation}{1}
\begin{align}
&\partial_{x}\mathcal{\hat{W}}^{up}(x,t;z)=-\frac{i\epsilon k_{0}^{2}}{z}\mathcal{\hat{W}}^{up}(x,t;z)+Q\mathcal{\hat{W}}^{dn}(x,t;z),
\tag{\theequation a}\label{65a}\\
&\partial_{x}\mathcal{\hat{W}}^{dn}(x,t;z)=\epsilon Q^{\dagger}
\mathcal{\hat{W}}^{up}(x,t;z)+iz\mathcal{\hat{W}}^{dn}(x,t;z),
\tag{\theequation b}\label{65b}\\
&\partial_{x}\mathcal{\bar{W}}^{up}(x,t;z)=-iz
\mathcal{\bar{W}}^{up}(x,t;z)+Q\mathcal{\bar{W}}^{dn}(x,t;z),
\tag{\theequation c}\label{65c}\\
&\partial_{x}\mathcal{\bar{W}}^{dn}(x,t;z)=\epsilon Q^{\dagger}
\mathcal{\bar{W}}^{up}(x,t;z)+i\frac{i\epsilon k_{0}^{2}}{z}\mathcal{\bar{W}}^{dn}(x,t;z).
\tag{\theequation d}\label{65d}
\end{align}
\end{prop}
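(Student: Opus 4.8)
The plan is to obtain \eqref{65a}--\eqref{65d} purely algebraically from the scattering problem in \eqref{Lax} together with the definition \eqref{T15}; no integral equations or analytic estimates are needed, and the identities will hold for every $(x,t)$ (wherever the eigenfunctions are defined), not merely asymptotically. First I would record the two ingredients: $\Phi(x,t;z)$ solves $\Phi_x=M\Phi=(-ik\underline{\sigma}_3+\underline{Q})\Phi$, and $\partial_x\theta(x,t;z)=\lambda(z)$ by \eqref{T14}. Since $\underline{\sigma}_3$ commutes with $e^{i\theta\underline{\sigma}_3}$, differentiating $\mathcal{W}=\Phi e^{i\theta\underline{\sigma}_3}$ gives
\[
\partial_x\mathcal{W}=M\mathcal{W}+i\lambda\,\mathcal{W}\underline{\sigma}_3=(-ik\underline{\sigma}_3+\underline{Q})\mathcal{W}+i\lambda\,\mathcal{W}\underline{\sigma}_3,
\]
so that $\mathcal{W}$ is a particular solution of the equation for $\mathcal{H}=\varphi e^{i\theta\underline{\sigma}_3}$ stated just before the proposition.

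Next I would pass to $m\times m$ blocks. Write $\mathcal{W}=(\hat{\mathcal{W}},\bar{\mathcal{W}})$, each column split into an ``up'' and a ``dn'' block, and use $\underline{\sigma}_3=\mathrm{diag}(I_m,-I_m)$ together with the block form of $\underline{Q}$ from \eqref{T4} (vanishing diagonal blocks, off-diagonal blocks $Q$ and $\epsilon Q^{\dagger}$). Then the left factor $-ik\underline{\sigma}_3+\underline{Q}$ contributes $-ik$ on each ``up'' block, $+ik$ on each ``dn'' block, and the couplings $Q$, $\epsilon Q^{\dagger}$; the right factor $i\lambda\,\mathcal{W}\underline{\sigma}_3$ contributes $+i\lambda$ on the first column of $\mathcal{W}$ (the one carrying $e^{+i\theta}$ in \eqref{T15}) and $-i\lambda$ on the second. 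Collecting terms,
\[
\begin{aligned}
\partial_x\hat{\mathcal{W}}^{up}&=i(\lambda-k)\hat{\mathcal{W}}^{up}+Q\hat{\mathcal{W}}^{dn}, &\qquad
\partial_x\hat{\mathcal{W}}^{dn}&=\epsilon Q^{\dagger}\hat{\mathcal{W}}^{up}+i(\lambda+k)\hat{\mathcal{W}}^{dn},\\
\partial_x\bar{\mathcal{W}}^{up}&=-i(\lambda+k)\bar{\mathcal{W}}^{up}+Q\bar{\mathcal{W}}^{dn}, &\qquad
\partial_x\bar{\mathcal{W}}^{dn}&=\epsilon Q^{\dagger}\bar{\mathcal{W}}^{up}+i(k-\lambda)\bar{\mathcal{W}}^{dn}.
\end{aligned}
\]

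Finally I would invoke the uniformization identities $\lambda(z)+k(z)=z$ and $\lambda(z)-k(z)=-\epsilon k_0^2/z$, both immediate from \eqref{T10}: these turn $i(\lambda-k)$ into $-i\epsilon k_0^2/z$ and $i(\lambda+k)$ into $iz$, so the four displays become \eqref{65a}--\eqref{65d} (the coefficient in the last being $i(k-\lambda)=i\epsilon k_0^2/z$). As a consistency check one may let $x\to-\infty$: here $\mathcal{W}\to E_-$ from \eqref{T11}, so $\hat{\mathcal{W}}^{up}\to I_m$, $\hat{\mathcal{W}}^{dn}\to\tfrac{i\epsilon}{z}Q_-^{\dagger}$, $\bar{\mathcal{W}}^{up}\to-\tfrac{i}{z}Q_-$, $\bar{\mathcal{W}}^{dn}\to I_m$, and the constraint $Q_-Q_-^{\dagger}=Q_-^{\dagger}Q_-=k_0^2I_m$ from \eqref{T6} makes each right-hand side vanish, as it must. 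I do not expect a genuine obstacle: the proposition is a bookkeeping consequence of the Lax equation. The one point requiring care is the sign of the $\pm i\lambda$ term coming from the \emph{right} multiplication by $\underline{\sigma}_3$ — this is exactly what distinguishes the first column of $\mathcal{W}$ from the second — together with consistently taking $\underline{\sigma}_3=\mathrm{diag}(I_m,-I_m)$ rather than $I_{2m}$; once those conventions are fixed, identifying $i(\lambda\mp k)$ with $-i\epsilon k_0^2/z$ and $iz$ through \eqref{T10} is forced.
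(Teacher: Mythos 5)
Your proposal is correct and follows essentially the same route as the paper: the paper states the matrix ODE $\partial_x\mathcal{H}=(-ik\underline{\sigma}_3+\underline{Q})\mathcal{H}+i\lambda\mathcal{H}\underline{\sigma}_3$ immediately before the proposition and leaves the rest implicit, and your derivation simply carries out that block bookkeeping and then substitutes $\lambda+k=z$, $\lambda-k=-\epsilon k_0^2/z$. You also correctly read $\underline{\sigma}_3$ as $\mathrm{diag}(I_m,-I_m)$ despite the misprint in \eqref{T4}, and your coefficient $i(k-\lambda)=i\epsilon k_0^2/z$ in the last equation is the intended one (the stray extra $i$ in \eqref{65d} is a typo), with the $x\to-\infty$ check confirming the signs.
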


\begin{prop}\label{2.14}
The asymptotic behavior of the modified eigenfunctions satisfy the following relations
\addtocounter{equation}{1}
\begin{align}
\mathcal{\hat{W}}(x,t;z)=\left(\begin{array}{c}
I_{m}+\frac{i\epsilon}{z}\int_{-\infty}^{x}
\left(Q(\zeta,t)Q^{\dagger}(\zeta,t)-k_{0}^{2}I_{m}\right)\, d\zeta +O(\frac{1}{z^{2}})\\
\frac{i\epsilon}{z}Q^{\dagger}(x,t)+O(\frac{1}{z^{2}})
\end{array}\right),\tag{\theequation a}\label{66a}\\
\mathcal{\bar{W}}(x,t;z)=\left(\begin{array}{c}
-\frac{i}{z}Q(x,t)+O(\frac{1}{z^{2}})\\
I_{m}-\frac{i\epsilon}{z}\int_{-\infty}^{x}
\left(Q(\zeta,t)Q^{\dagger}(\zeta,t)-k_{0}^{2}I_{m}\right)\, d\zeta +O(\frac{1}{z^{2}})
\end{array}\right),\tag{\theequation b}\label{66b}\\
\mathcal{\bar{V}}(x,t;z)=\left(\begin{array}{c}
I_{m}+\frac{i\epsilon}{z}\int^{+\infty}_{x}
\left(Q(\zeta,t)Q^{\dagger}(\zeta,t)-k_{0}^{2}I_{m}\right)\, d\zeta +O(\frac{1}{z^{2}})\\
\frac{i\epsilon}{z}Q^{\dagger}(x,t)+O(\frac{1}{z^{2}})
\end{array}\right),\tag{\theequation c}\label{66c}\\
\mathcal{\hat{V}}(x,t;z)=\left(\begin{array}{c}
-\frac{i}{z}Q(x,t)+O(\frac{1}{z^{2}})\\
I_{m}-\frac{i\epsilon}{z}\int^{+\infty}_{x}
\left(Q(\zeta,t)Q^{\dagger}(\zeta,t)-k_{0}^{2}I_{m}\right)\, d\zeta +O(\frac{1}{z^{2}})
\end{array}\right).\tag{\theequation d}\label{66d}
\end{align}
Note that the conditions Eqs.\eqref{66a} and \eqref{66d}  hold for $z\rightarrow\infty$ and $z\in D^{+}$, moreover the conditions Eqs.\eqref{66b} and \eqref{66c} hold for $z\rightarrow\infty$ and $z\in D^{-}$.
\end{prop}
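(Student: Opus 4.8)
The plan is to run a WKB-type expansion in powers of $1/z$ directly on the first-order system \eqref{65a}--\eqref{65d} for the modified eigenfunctions, fixing the integration constant at every order from the exact boundary normalizations $\lim_{x\to-\infty}\mathcal{W}(x,t;z)=E_-(z)$ and $\lim_{x\to+\infty}\mathcal{V}(x,t;z)=E_+(z)$. Because $E_\pm(z)=I_{2m}-\tfrac{i}{z}\underline{\sigma}_3\underline{Q}_\pm$ from \eqref{T11} is already a two-term polynomial in $1/z$, the boundary data for the coefficients are read off column by column: for $\mathcal{\hat W}=(\mathcal{\hat W}^{up},\mathcal{\hat W}^{dn})^{T}$ one needs $\mathcal{\hat W}^{up}\to I_m$ and $\mathcal{\hat W}^{dn}\to \tfrac{i\epsilon}{z}Q_-^{\dagger}$ as $x\to-\infty$; for $\mathcal{\bar W}$ one needs $\mathcal{\bar W}^{up}\to-\tfrac{i}{z}Q_-$ and $\mathcal{\bar W}^{dn}\to I_m$; and for $\mathcal{\bar V},\mathcal{\hat V}$ the analogous limits with $E_+$ as $x\to+\infty$.

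For $\mathcal{\hat W}$ I would treat the down-block as the ``fast'' component. Writing \eqref{65b} in variation-of-parameters form anchored at $-\infty$ --- which is legitimate on $D^{+}$ since there $\text{Im}\,z\ge 0$ for $|z|$ large, so the kernel $e^{iz(x-y)}$ with $x-y\ge 0$ stays bounded --- gives $\mathcal{\hat W}^{dn}(x;z)=\epsilon\int_{-\infty}^{x}e^{iz(x-y)}Q^{\dagger}(y,t)\,\mathcal{\hat W}^{up}(y;z)\,dy$; integrating by parts peels off $\tfrac{i\epsilon}{z}Q^{\dagger}(x,t)\mathcal{\hat W}^{up}(x;z)$ modulo a higher-order term, and since $\mathcal{\hat W}^{up}=I_m+O(1/z)$ this is the lower block of \eqref{66a}. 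Substituting $\mathcal{\hat W}^{dn}=\tfrac{i\epsilon}{z}Q^{\dagger}+O(1/z^{2})$ into the ``slow'' equation \eqref{65a} produces $\partial_x\mathcal{\hat W}^{up}=\tfrac{i\epsilon}{z}\big(Q(x,t)Q^{\dagger}(x,t)-k_0^{2}I_m\big)+O(1/z^{2})$, which upon integration from $-\infty$ (with $\mathcal{\hat W}^{up}\to I_m$) gives the upper block of \eqref{66a}. The same scheme with the two blocks interchanged and $D^{+}$ replaced by $D^{-}$ (now the up-block is the fast one, with kernel $e^{-iz(x-y)}$) yields \eqref{66b}; repeating it on the second integral equation of \eqref{T16} --- equivalently the $\mathcal{V}$-analogue of \eqref{65a}--\eqref{65d} --- anchored at $+\infty$ turns $\int_{-\infty}^{x}$ into $\int_{x}^{+\infty}$ and gives \eqref{66c}--\eqref{66d}. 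Iterating determines all coefficients of the $1/z$-series; only the first two are recorded.

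The delicate point is to show that the displayed remainders are $O(1/z^{2})$ \emph{uniformly} as $z\to\infty$ inside $D^{\pm}$, not merely as a formal series. Here the Volterra structure \eqref{T16} and the analyticity theorem (Theorem~2.1) do the work: the Neumann series for \eqref{T16} converges uniformly in $z$ on $D^{\pm}$ away from the branch points and yields an a priori bound on the modified eigenfunctions that is independent of $z$, which in turn legitimizes the bootstrap (feeding the bound for one block into the equation for the other) and the integrations by parts above; the truncation error after two terms then satisfies a Volterra equation with an $O(1/z^{2})$ source and is controlled by the same estimate. One must also keep careful track of the sheet: the expansion of Proposition~\ref{2.14} is the one at $z\to\infty$ (corresponding to $k\to\infty$ on $S_1$; the regime $z\to 0$ on $S_2$ is complementary and not asserted here), and on $D^{\pm}$ near $z=\infty$ one has $\lambda(z)\sim z/2$ with $\pm\,\text{Im}\,\lambda>0$, so the exponentials $e^{\mp 2i\lambda(x-y)}$ appearing in \eqref{T16} are non-growing on the relevant half-line --- precisely what makes all the above integrals absolutely convergent and term-by-term $x$-differentiation valid.
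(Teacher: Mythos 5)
Your proposal is correct and follows essentially the same route as the paper: a WKB expansion in powers of $1/z$ of the first-order system \eqref{65a}--\eqref{65d}, with the leading coefficient of the fast (down) block fixed algebraically and the slow (up) block then obtained by integrating from the boundary normalization $E_{\pm}(z)$. Your additional variation-of-parameters/integration-by-parts derivation of the fast block and the Neumann-series argument for uniformity of the $O(1/z^{2})$ remainders supply rigor that the paper's purely formal coefficient-matching leaves implicit, but the underlying computation is identical.
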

\begin{proof}
Using Eq.\eqref{66a} as an example, the rest can be proved similarly. Taking the WKB expansion one has
\begin{align}\label{T39}
\begin{split}
&\mathcal{\hat{W}}^{up}(x,t;z)=I_{m}+\mathscr{U}_{1}(x,t;z)/z+h.o.t,\\
&\mathcal{\hat{W}}^{dn}(x,t;z)=\mathscr{V}_{1}(x,t;z)/z+\mathscr{V}_{2}(x,t;z)
/z^{2}+h.o.t,
\end{split}
\end{align}
where $h.o.t$ represents higher order terms, and $\mathscr{U}_{1}(x,t;z)$, $\mathscr{V}_{1}(x,t;z)$ as well as $\mathscr{V}_{2}(x,t;z)$ are the $m\times m$ matrix. Substituting Eq.\eqref{T39} into Eq.\eqref{66a} and comparing the coefficients of different powers $z$, we have
\begin{align}
&z^{-1}:\quad \partial_{x}\mathscr{U}_{1}(x,t;z)=-i\epsilon k^{2}_{0}+Q\mathscr{V}_{1}(x,t;z),\\
&z^{0}:\quad i\mathscr{V}_{1}(x,t;z)-\epsilon Q^{\dagger}I_{m}=0,
\end{align}
which in turn imply that
\begin{align}
\begin{split}
&\mathcal{\hat{W}}^{up}(x,t;z)=I_{m}+\frac{i\epsilon}{z}\int_{-\infty}^{x}
\left(Q(\zeta,t)Q^{\dagger}(\zeta,t)-k_{0}^{2}I_{m}\right)(\zeta,t)\, d\zeta +O(\frac{1}{z^{2}}),\\
&\mathcal{\hat{W}}^{dn}(x,t;z)=\frac{i\epsilon}{z}Q^{\dagger}(\zeta,t)
+O(\frac{1}{z^{2}}).
\end{split}
\end{align}
\end{proof}
 Taking  the following WKB expansions we similarly can derive the asymptotic behavior of the modified eigenfunctions as $z\rightarrow0$
\begin{align}\label{T40}
\begin{split}
&\mathcal{\hat{W}}^{up}(x,t;z)=\mathscr{U}_{0}(x,t;z)+\mathscr{U}_{0}(x,t;z)z+h.o.t,\\
&\mathcal{\hat{W}}^{dn}(x,t;z)=\mathscr{V}_{-1}(x,t;z)/z+\mathscr{V}_{0}(x,t;z)+h.o.t.
\end{split}
\end{align}
\begin{prop}\label{2.15}
The asymptotic behavior of the modified eigenfunctions as $z\rightarrow0$ such that
\addtocounter{equation}{1}
\begin{align}
&\mathcal{\hat{W}}(x,t;z)=\left(\begin{array}{c}
QQ_{-}^{\dagger}/k_{0}^{2}+O(z)\\
i\epsilon Q_{-}^{\dagger}+O(1)
\end{array}\right),\quad
\mathcal{\bar{W}}(x,t;z)=\left(\begin{array}{c}
-iQ_{-}/z+O(1)\\
Q^{\dagger} Q_{-}/k_{0}^{2}+O(z)
\end{array}\right),\tag{\theequation a}\\
&\mathcal{\bar{V}}(x,t;z)=\left(\begin{array}{c}
QQ_{+}^{\dagger}/k_{0}^{2}+O(z)\\
i\epsilon Q_{+}^{\dagger}+O(1)
\end{array}\right),\quad
\mathcal{\hat{V}}(x,t;z)=\left(\begin{array}{c}
-iQ_{+}/z+O(1)\\
Q^{\dagger} Q_{+}/k_{0}^{2}+O(z)
\end{array}\right).\tag{\theequation b}
\end{align}
\end{prop}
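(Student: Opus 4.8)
The plan is to read Proposition \ref{2.15} as the exact mirror of Proposition \ref{2.14}, since $z=0$ is the second preimage of $k=\infty$. The key structural remark is that $\mathcal{\bar{V}}$ satisfies the same first-order system \eqref{65a}--\eqref{65b} as $\mathcal{\hat{W}}$, and $\mathcal{\hat{V}}$ the same system \eqref{65c}--\eqref{65d} as $\mathcal{\bar{W}}$ (both $\mathcal{W}$ and $\mathcal{V}$ obey $\partial_{x}\mathcal{H}=(-ik\underline{\sigma}_{3}+\underline{Q})\mathcal{H}+i\lambda\mathcal{H}\underline{\sigma}_{3}$ and the $m$-column blocks decouple identically), the only difference being whether the boundary value is imposed at $x\to-\infty$ (data $E_{-}(z)$, so $Q_{-}$) or at $x\to+\infty$ (data $E_{+}(z)$, so $Q_{+}$). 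So it suffices to treat $\mathcal{\hat{W}}$ and $\mathcal{\bar{W}}$ and then transcribe.

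For $\mathcal{\hat{W}}$ I would insert the WKB ansatz \eqref{T40}, $\mathcal{\hat{W}}^{up}=\mathscr{U}_{0}+\mathscr{U}_{1}z+\cdots$ and $\mathcal{\hat{W}}^{dn}=\mathscr{V}_{-1}/z+\mathscr{V}_{0}+\cdots$; the simple pole in the lower block is forced, because the lower block of the first column block of $E_{-}(z)$ equals $\tfrac{i\epsilon}{z}Q_{-}^{\dagger}$. Comparing coefficients of $z^{-1}$ and $z^{0}$ in \eqref{65a}--\eqref{65b} yields $\partial_{x}\mathscr{V}_{-1}=0$, $\,Q\mathscr{V}_{-1}=i\epsilon k_{0}^{2}\mathscr{U}_{0}$ and the transport equation $\partial_{x}\mathscr{V}_{0}=\epsilon Q^{\dagger}\mathscr{U}_{0}+i\mathscr{V}_{-1}$. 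Since $\lim_{x\to-\infty}\mathcal{\hat{W}}=\bigl(I_{m},\,\tfrac{i\epsilon}{z}Q_{-}^{\dagger}\bigr)^{T}$, matching powers of $z$ at $x=-\infty$ forces $\mathscr{V}_{-1}=i\epsilon Q_{-}^{\dagger}$ (a constant, consistent with $\partial_{x}\mathscr{V}_{-1}=0$) and $\mathscr{V}_{0}(-\infty,t)=0_{m}$; hence $\mathscr{U}_{0}=(i\epsilon k_{0}^{2})^{-1}Q\mathscr{V}_{-1}=QQ_{-}^{\dagger}/k_{0}^{2}$. This gives the first line of the Proposition, the singular block being $\tfrac{i\epsilon}{z}Q_{-}^{\dagger}+O(1)$. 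The same computation with the pole placed in the \emph{upper} block and data from the second column block of $E_{-}(z)$ handles $\mathcal{\bar{W}}$, and replacing $Q_{-}$ by $Q_{+}$ (boundary data at $x\to+\infty$) handles $\mathcal{\bar{V}}$ and $\mathcal{\hat{V}}$.

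A more economical alternative is to use the second symmetry: \eqref{44a}--\eqref{44b} together with $\theta(\epsilon k_{0}^{2}/z)=-\theta(z)$ yield the exact identities $\mathcal{\hat{W}}(x,t;z)=\tfrac{i\epsilon}{z}\mathcal{\bar{W}}(x,t;\epsilon k_{0}^{2}/z)Q_{-}^{\dagger}$ and its three companions. Setting $w=\epsilon k_{0}^{2}/z$ (so $z\to0\Leftrightarrow w\to\infty$) and inserting the $w\to\infty$ expansions of Proposition \ref{2.14} with $1/w=z/(\epsilon k_{0}^{2})$ reproduces all four formulas at once: the term $-iQ/w$ of $\mathcal{\bar{W}}^{up}$ gives $QQ_{-}^{\dagger}/k_{0}^{2}+O(z)$, while the leading $I_{m}$ of $\mathcal{\bar{W}}^{dn}$ gives the $\tfrac{i\epsilon}{z}Q_{-}^{\dagger}$ term.

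The arithmetic is routine; the real content, in the WKB route, is to upgrade the formal series to a genuine asymptotic expansion, with remainders $O(z)$ in the regular block and $O(1)$ beyond the explicit pole in the singular block, \emph{uniformly in $x$} on the relevant half-line. As for Proposition \ref{2.14}, one does this by rewriting \eqref{T16} in a form regular at $z=0$ and bounding the Neumann series, the $L^{1}$ control of $Q-Q_{\pm}$ giving convergence and analyticity in $z$. In the symmetry route the only delicate point is checking that $z\mapsto\epsilon k_{0}^{2}/z$ carries a punctured neighbourhood of $0$ in $D^{\pm}$ to a neighbourhood of $\infty$ in the region where the eigenfunction on the right-hand side is analytic (immediate for the focusing equation; for the defocusing one argues first on $\Sigma$ and extends by analyticity). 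Since $z=0$ is interior to both $D^{+}$ and $D^{-}$ and away from the branch points $\pm\sqrt{\epsilon}k_{0}$ and the circle $C_{0}$, nothing further is needed there.
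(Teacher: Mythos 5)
Your main line of argument coincides with the paper's: insert the $z\to0$ ansatz \eqref{T40} into the system \eqref{65a}--\eqref{65d}, match the $z^{-1}$ and $z^{0}$ coefficients, and fix the constants from the boundary data $E_{\pm}(z)$; your relations $Q\mathscr{V}_{-1}=i\epsilon k_{0}^{2}\mathscr{U}_{0}$, $\partial_{x}\mathscr{V}_{-1}=0$ and the conclusions $\mathscr{U}_{0}=QQ_{-}^{\dagger}/k_{0}^{2}$, $\mathscr{V}_{-1}=i\epsilon Q_{-}^{\dagger}$ are exactly what the paper intends (and you correctly restore the factor $1/z$ on the singular blocks, which the printed statement omits). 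Your alternative route through the second symmetry, $\mathcal{\hat{W}}(x,t;z)=\tfrac{i\epsilon}{z}\mathcal{\bar{W}}(x,t;\epsilon k_{0}^{2}/z)Q_{-}^{\dagger}$ combined with Proposition \ref{2.14}, is equally valid and arguably cleaner, but it is not the route the paper takes.
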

Finally, according to the definition of scattering matrix Eq.\eqref{T18}, we get
\begin{prop}
The asymptotic behavior of scattering matrix $S(z)$ in a suitable region can be given
\begin{align}\label{SJJ}
S(z)=\left\{\begin{aligned}
&I_{2m}+O(1/z),\qquad\qquad\qquad\quad z\rightarrow\infty,\\
&\frac{1}{k_{0}^{2}}\left(\begin{array}{cc}
Q_{+}Q_{-}^{\dagger} & 0_{m}\\
0_{m} & Q_{+}^{\dagger}Q_{-}
\end{array}\right)+O(z),\quad z\rightarrow0.\end{aligned} \right.
\end{align}
Note that for $z\rightarrow\infty$, the asymptotic behavior for the elements $a(z)$ and $\bar{a}(z)$ of the scattering matrix $S(z)$ hold respectively in the region  $Imz\geq0$ and $Imz\leq0$, as well as $z\in\Sigma$ for $b(z)$  and $\bar{b}(z)$. For the case $z\rightarrow0$, there is a similar result, i.e., the elements $a(z)$ and $\bar{a}(z)$ can be extended to $D^{+}$ and $D^{-}$ respectively, while the asymptotic behavior for $b(z)$  and $\bar{b}(z)$ hold for $z\in\Sigma$.
\end{prop}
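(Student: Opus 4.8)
The plan is to read off the four $m\times m$ blocks of $S(z)$ from their explicit representations in terms of the Jost eigenfunctions and then insert the known asymptotics of the modified eigenfunctions and of $\gamma(z)$. Starting from $S(z)=\Psi^{-1}(x,t;z)\Phi(x,t;z)=\gamma^{-1}(z)\mathcal{L}_{\epsilon}\Psi^{\dagger}(x,t;z^{*})\mathcal{L}_{\epsilon}\Phi(x,t;z)$ (Eqs.~\eqref{T24}--\eqref{T25}, equivalently Proposition~\ref{prop4}), I would replace the Jost columns by the modified eigenfunctions through $\phi=\mathcal{\hat{W}}e^{-i\theta}$, $\bar\phi=\mathcal{\bar{W}}e^{i\theta}$, $\bar\psi=\mathcal{\bar{V}}e^{-i\theta}$, $\psi=\mathcal{\hat{V}}e^{i\theta}$ (which follow from \eqref{T15}) and use $\overline{\theta(z^{*})}=\theta(z)$ — the first symmetry, a consequence of $k(z),\lambda(z)$ having real coefficients in \eqref{T10}. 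After this substitution the scalar $e^{i\theta}$ drops out of the diagonal blocks $a(z),\bar a(z)$, while it survives only as the factors $e^{\mp2i\theta(z)}$ in front of the off-diagonal blocks $b(z),\bar b(z)$; since $|e^{\mp2i\theta}|\equiv1$ precisely on $z\in\Sigma$, this is exactly why the off-diagonal asymptotics can be asserted only there. It then remains to insert the large- and small-$z$ behavior of the blocks $\mathcal{\hat{W}}^{up},\mathcal{\hat{W}}^{dn},\mathcal{\bar{W}}^{up},\dots$ and of $\gamma(z)=1-\epsilon k_{0}^{2}/z^{2}$.

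For $z\to\infty$ I would use Proposition~\ref{2.14}: $\mathcal{\hat{W}}$ has top block $I_{m}+O(1/z)$ and bottom block $O(1/z)$, $\mathcal{\bar{W}}$ has top block $O(1/z)$ and bottom block $I_{m}+O(1/z)$, similarly for $\mathcal{\hat{V}},\mathcal{\bar{V}}$, and $\gamma(z)=1+O(1/z^{2})$. Then $\gamma(z)a(z)=(\mathcal{\bar{V}}^{up}(z^{*}))^{\dagger}\mathcal{\hat{W}}^{up}(z)-\epsilon(\mathcal{\bar{V}}^{dn}(z^{*}))^{\dagger}\mathcal{\hat{W}}^{dn}(z)=I_{m}+O(1/z)$ and, similarly, $\gamma(z)\bar a(z)=I_{m}+O(1/z)$, so $a(z),\bar a(z)=I_{m}+O(1/z)$ on the regions where the relevant blocks are analytic, i.e. $\mathrm{Im}\,z\ge0$ and $\mathrm{Im}\,z\le0$ for large $|z|$. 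For $b,\bar b$ the corresponding bracketed matrix products are $O(1/z)$ while the prefactor $e^{\mp2i\theta}$ is bounded on $\Sigma$, so $b(z),\bar b(z)=O(1/z)$ there. Collecting the blocks gives $S(z)=I_{2m}+O(1/z)$.

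For $z\to0$ the economical route is to transport this result through the second symmetry rather than repeat the WKB analysis. The relations \eqref{46a}--\eqref{46b}, $a(k_{0}^{2}/z)=k_{0}^{-2}Q_{+}\bar a(z)Q_{-}^{\dagger}$, $\bar a(k_{0}^{2}/z)=k_{0}^{-2}Q_{+}^{\dagger}a(z)Q_{-}$, $b(k_{0}^{2}/z)=-\epsilon k_{0}^{-2}Q_{+}^{\dagger}\bar b(z)Q_{-}^{\dagger}$, $\bar b(k_{0}^{2}/z)=-\epsilon k_{0}^{-2}Q_{+}b(z)Q_{-}$, combined with the limits just obtained (as $z\to\infty$: $a,\bar a\to I_{m}$, $b,\bar b\to0$, with $O(1/z)$ errors), give at once $a(w)=k_{0}^{-2}Q_{+}Q_{-}^{\dagger}+O(w)$, $\bar a(w)=k_{0}^{-2}Q_{+}^{\dagger}Q_{-}+O(w)$, $b(w),\bar b(w)=O(w)$ as $w=k_{0}^{2}/z\to0$, i.e. $S(z)=k_{0}^{-2}\,\mathrm{diag}(Q_{+}Q_{-}^{\dagger},\,Q_{+}^{\dagger}Q_{-})+O(z)$; the extension of the $a$- (resp. $\bar a$-) statement from $\Sigma$ to $D^{+}$ (resp. $D^{-}$) is then immediate from analyticity of $a$ (resp. $\bar a$) on $D^{+}$ ($D^{-}$) together with continuity up to $0\in\Sigma_{z}=\partial D^{\pm}$. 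One could instead argue directly from Proposition~\ref{2.15}, observing that the $1/z$-singular blocks of $E_{\pm}$ make $\mathcal{\hat{W}}^{dn},\mathcal{\bar{V}}^{dn}$ (resp. $\mathcal{\bar{W}}^{up},\mathcal{\hat{V}}^{up}$) of order $1/z$, so that $-\epsilon(\mathcal{\bar{V}}^{dn}(z^{*}))^{\dagger}\mathcal{\hat{W}}^{dn}(z)$ is the dominant term in $\gamma(z)a(z)$ and, using $Q_{\pm}Q_{\pm}^{\dagger}=k_{0}^{2}I_{m}$ (legitimate because $S$ is $x$-independent), equals $-\epsilon k_{0}^{-2}\gamma(z)Q_{+}Q_{-}^{\dagger}$ to leading order, whence $a(z)=k_{0}^{-2}Q_{+}Q_{-}^{\dagger}+O(z)$.

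The main obstacle is the $z\to0$ regime handled directly: both $\gamma(z)$ and the active blocks of the modified eigenfunctions are singular (orders $1/z^{2}$ and $1/z$), so the diagonal limits come out as a ratio of singular quantities in which a sign or a factor of $\epsilon$ is easy to misplace, while one must simultaneously check that the genuinely $O(1)$ contributions — such as $(\mathcal{\bar{V}}^{up}(z^{*}))^{\dagger}\mathcal{\hat{W}}^{up}(z)=Q_{+}Q^{\dagger}QQ_{-}^{\dagger}/k_{0}^{4}$, which does not simplify on its own — are subleading. Routing the whole $z\to0$ statement through the symmetry relations \eqref{46a}--\eqref{46b} circumvents this, leaving only the elementary $z\to\infty$ computation and the boundary bookkeeping $0\in\Sigma_{z}$; I would present the proof in that form, keeping the direct WKB argument for the $z\to\infty$ part only.
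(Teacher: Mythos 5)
Your proposal is correct, and its $z\to\infty$ half is precisely the argument the paper gestures at: the paper's entire ``proof'' is the one-line remark that the proposition follows from Propositions 2.14 and 2.15 together with the definition of the modified eigenfunctions and the Wronskian-type representations of the blocks of $S(z)$, i.e.\ exactly your substitution of $\phi=\mathcal{\hat{W}}e^{-i\theta}$, $\bar{\psi}=\mathcal{\bar{V}}e^{-i\theta}$, etc.\ into $\gamma(z)a(z)=(\bar{\psi}^{up}(z^{*}))^{\dagger}\phi^{up}(z)-\epsilon(\bar{\psi}^{dn}(z^{*}))^{\dagger}\phi^{dn}(z)$ and its companions, with the exponentials cancelling on the diagonal and surviving as $e^{\mp2i\theta}$ off the diagonal (hence the restriction of $b,\bar{b}$ to $\Sigma$). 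Where you genuinely diverge is the $z\to0$ regime: the paper intends the direct insertion of the small-$z$ WKB asymptotics of Proposition 2.15 into the same representations, i.e.\ the ``ratio of singular quantities'' computation you describe as the alternative ($\gamma\sim-\epsilon k_{0}^{2}/z^{2}$ against the $1/z$ blocks, with the $O(1)$ cross terms checked to be subleading), whereas you transport the $z\to\infty$ result through the second-symmetry relations $a(\epsilon k_{0}^{2}/z)=k_{0}^{-2}Q_{+}\bar{a}(z)Q_{-}^{\dagger}$ and its companions. Your route is cleaner: it replaces the delicate cancellation of $1/z^{2}$ singularities (and a possible sign slip in the factors of $\epsilon$) by a single algebraic substitution, and it also sidesteps the apparent typo in Proposition 2.15, where the lower block of $\mathcal{\hat{W}}$ is printed as $i\epsilon Q_{-}^{\dagger}+O(1)$ but must be $i\epsilon Q_{-}^{\dagger}/z+O(1)$ for the direct computation to produce the stated limit. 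The only point to watch in the symmetry route is the domain bookkeeping — that $z\mapsto\epsilon k_{0}^{2}/z$ carries the large-$|z|$ part of $D^{-}$ onto the small-$|z|$ part of $D^{+}$ — which you handle correctly by invoking analyticity and continuity up to $\partial D^{\pm}$ at $z=0\in\Sigma$.
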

It is easy to prove \textbf{Proposition} $2.16$ by combining \textbf{Proposition} \eqref{2.14} and \eqref{2.15} and the Eqs.\eqref{T15}, \eqref{a1}.

\subsection{Discrete spectrum and residue conditions}
The set of the discrete spectrum for the scattering problem consists of all the value $z\in C\setminus\Sigma$ satisfying that the scattering problem admits eigenfunctions in $L^{2}(R)$. Similar to Refs.\cite{PF-2018}-\cite{TianLY-2019}, these discrete spectral points are the zeros of the function $\det a(z)$ in the region $D^{+}$, and the zeros of the function $\det \bar{a}(z)$ in the region $D^{-}$. Assume that $\det a(z)$ has $\mathcal{N}$ simple zeros $z_{1},\cdots, z_{\mathcal{N}}$ in $D^{+}\cap\left\{z\in C: Imz>0\right\}$, where simple zeros $z_{j}$ represent $\det a(z_{j})=0$, but $(\det a)'(z_{j})\neq0$ for $1,\cdots,\mathcal{N}$. Note that from the Eqs.\eqref{TT27}
 and ($2.46$), we have a quartet of discrete eigenvalues
\begin{align}\label{T35}
\det a(z_{j})=0\Leftrightarrow \det \bar{a}(z_{j}^{*})=0 \Leftrightarrow \det a(\epsilon k_{0}^{2}/z_{j})=0\Leftrightarrow
\det \bar{a}(\epsilon k_{0}^{2}/z_{j}^{*})=0,
\end{align}
which imply that the discrete spectrum is defined by the quartet of discrete eigenvalues
\begin{align}\label{T36}
P=\left\{z_{j}, \epsilon k_{0}^{2}/{z_{j}^{*}},
z_{j}^{*}, \epsilon k_{0}^{2}/{z_{j}}\right\}_{j=1}^{\mathcal{N}}.
\end{align}
\begin{rem}
For the defocusing case, the self-adjointness of the scattering problem shows that the discrete spectrum is real in the $k$-plane. In Ref.\cite{Faddeev-1987}, the
authors presented that the discrete spectral points are simple. Moreover, they shown that the finite number of discrete eigenvalues belong to the circle $\{z:|z|=k_{0}\}$ in Ref.\cite{Demontis-2013}.
\end{rem}

Based on the above analysis, we can give the discrete spectrum set of focusing case and defocusing case
\addtocounter{equation}{1}
\begin{align}
&\text{focusing case}\quad\epsilon=-1: P=\left\{z_{j}, \epsilon k_{0}^{2}/{z_{j}^{*}},
z_{j}^{*}, \epsilon k_{0}^{2}/{z_{j}}\right\},\tag{\theequation a}\label{2.76a}\\
&\text{defocusing case}\quad\epsilon=1: P=\left\{\xi_{j},\xi_{j}^{*}\right\},\tag{\theequation b}
\end{align}
where $j=1,\cdots,\mathcal{N}$ and we suppose $Imz_{j}>0$ in the $D^{+}$.

Now assume that $\det a(z)$ has $\mathcal{N}$ simple zeros $z_{n}$ ($n=1,\cdots,\mathcal{N}$), i.e., $\det a(z_{n})=0$, which in turn implies that from Eq.\eqref{s11} the Jost eigenfunctions $\psi(x,t;z_{n})$ and $\phi(x,t;z_{n})$ are linearly dependent. Thus there is a nonzero constant $b_{n}$ that satisfies the following equation
\begin{align}\label{T37}
\phi(x,t;z_{n})=\psi(x,t;z_{n})b_{n},
\end{align}
where $b_{n}$ is a $m\times m$ constant matrix and the zeros $z_{n}\in D^{+}$. Similarly for the zeros $z_{n}^{*}\in D^{-}$ of $\det \bar{a}(z)$, according to the symmetry Eq.\eqref{T35} we know that the number of zeros is $\mathcal{N}$. There is also a $m\times m$ constant matrix $\bar{b}_{n}$  such that the Jost eigenfunctions   $\bar{\psi}(x,t;z_{n})$ and $\bar{\phi}(x,t;z_{n})$ are linearly dependent from       Eq.\eqref{s22} for $z_{n}^{*}\in D^{-}$, i.e.,
\begin{align}\label{T38}
\bar{\phi}(x,t;z_{n}^{*})=\bar{\psi}(x,t;z_{n}^{*})\bar{b}_{n},
\end{align}
note that Eqs.\eqref{T37} and \eqref{T38} are some stronger relationships of the Jost eigenfunctions, and we will discuss the generalized case in section 2.8.

In what follows, we derive the residue conditions required to solve the RH problem in the inverse problem. Note that from the Eqs.\eqref{T15} and \eqref{T37}, one has
$\mathcal{\hat{W}}(z)=e^{2i\theta(z)}\mathcal{\hat{V}}(z)b_{n}$. Then the residue condition can be obtained for the zeros $z_{n}\in D^{+}$
\begin{align}\label{2.79}
\mathop{Res}_{z=z_{n}}\left[\mathcal{\hat{W}}(x,t;z)a^{-1}(z)\right]=
e^{2i\theta(z_{n})}\mathcal{\hat{V}}(z_{n})C_{n},\quad
C_{n}=\frac{1}{(\det a)'(z_{n})}b_{n}\text{cof} a(z_{n}),
\end{align}
where $\text{cof} A$ is the cofactor matrix for all matrix $A\in C^{m\times m}$, i.e., $A(z)\text{cof} A(z)=\left(\text{cof}\right.$ $\left(A(z)\right)A(z)=\det A(z)I_{m}$.
Similarly for the zeros $z_{n}^{*}\in D^{-}$, using the Eqs.\eqref{T15} and \eqref{T38} yield that
\begin{align}\label{2.80}
\mathop{Res}_{z=z_{n}^{*}}\left[\mathcal{\bar{W}}(x,t;z)\bar{a}^{-1}(z)\right]=
e^{-2i\theta(z_{n}^{*})}\mathcal{\bar{V}}(z_{n}^{*})\bar{C}_{n},\quad
\bar{C}_{n}=\frac{1}{(\det\bar{a})'(z_{n}^{*})}\bar{b}_{n}\text{cof} \bar{a}(z_{n}^{*}).
\end{align}

It is worth noting that for any matrix $A\in C^{m\times m}$ we have
\begin{align*}
\det(\text{cof}A)=(\det A)^{m-1},
\end{align*}
which in turn implies
\begin{align}
\det(\text{cof}a(z))=\det a(z),
\end{align}
for the special case $m=2$. The purpose is that function $\det(\text{cof}a(z))$ and $\det a(z)$ have zeros of the same order for all $z_{n}\in D^{+}\cap P$, and function $\det\text{cof}\bar{a}(z)$ and $\det \bar{a}(z)$ have zeros of the same order for all $z_{n}^{*}\in D^{-}\cap P$.

For the simple eigenvalues $z_{n}$ and $z_{n}^{*}$, obviously we have
\begin{align}
\mathcal {G}\doteq\mathop{Res}_{z=z_{n}}a^{-1}(z)=\frac{\text{cof}a(z_{n})}
{(\det a)'(z_{n})},\quad
\mathcal {\bar{G}}\doteq\mathop{Res}_{z=z_{n}^{*}}\bar{a}^{-1}(z)=
\frac{\text{cof}\bar{a}(z_{n}^{*})}{(\det\bar{a})'(z_{n}^{*})},
\end{align}
and $\det\mathcal {G}=\det\mathcal {\bar{G}}=0$, which implies that the residue conditions are the matrix with rank $m-1$. Recalling the Eqs.\eqref{2.79} and \eqref{2.80}, one has
\begin{align}
C_{n}=b_{n}\mathcal {G},\quad\bar{C}_{n}=\bar{b}_{n}\mathcal {\bar{G}}.
\end{align}
For the simple eigenvalues, we also know that
\begin{align}
\det C_{n}=\det\bar{C}_{n}=0.
\end{align}

Further generalization, it is assumed that the function $\det a(z)$ has second order zeros at point $z_{n}$, i.e., $(\det a)''(z)\neq0$, then $\det(\text{cof}a(z))$ has zero of order $2(m-1)$ at $z_{n}$. In a neighborhood of $z_{n}$, we have
\begin{align}
a^{-1}(z)=\frac{1}{(z-z_{n})^{2}}\mathcal{G}_{2}+
\frac{1}{z-z_{n}}\mathcal{G}_{1}+\tilde{a}(z),
\end{align}
note that $\tilde{a}(z)$ is analytic at $z_{n}$, then
\begin{align}
&\mathcal{G}_{2}=\mathop{\lim}_{z\rightarrow z_{n}}(z-z_{n})^{2}a^{-1}(z)=
\frac{2}{(\det a)''(z_{n})}\text{cof}a(z_{n}),\\
&\mathcal{G}_{1}=\mathop{\lim}_{z\rightarrow z_{n}}\frac{d}{dz}
\left((z-z_{n})^{2}a^{-1}(z)\right)=\frac{2\text{cof}a'(z_{n})}{(\det a)''(z_{n})}-
\frac{2}{3}\frac{(\det a)'''(z_{n})}{((\det a)''(z_{n}))^{2}}\text{cof}a(z_{n}).
\end{align}
Similar to the analysis in Refs.\cite{Ortiz-2019,PF-2018}, we finally get for the second zero $z_{n}$ and $z_{n}^{*}$
\begin{align}
\mathop{Res}_{z=z_{n}}\left[\mathcal{\hat{W}}(x,t;z)a^{-1}(z)\right]=
e^{2i\theta(z_{n})}\mathcal{\hat{V}}(z_{n})C_{n},\quad
C_{n}=\frac{2}{(\det a)''(z_{n})}b_{n}\text{cof a}'(z_{n}),\\
\mathop{Res}_{z=z_{n}^{*}}\left[\mathcal{\bar{W}}(x,t;z)\bar{a}^{-1}(z)\right]=
e^{-2i\theta(z_{n}^{*})}\mathcal{\bar{V}}(z_{n}^{*})\bar{C}_{n},\quad
\bar{C}_{n}=\frac{2}{(\det\bar{a})''(z_{n}^{*})}\bar{b}_{n}
\text{cof}\bar{a}'(z_{n}^{*}).
\end{align}

The norming constants are given in the above analysis. Next, we will discuss the symmetry relationship between these norming constants $C_{n}$ and $\bar{C}_{n}$, i.e.,
\begin{align}
\bar{C}_{n}=\epsilon C_{n}^{\dagger},
\end{align}
which is proved later in Eq.\eqref{T46}. In addition, the third symmetry restricts the norming constants $C_{n}$ and $\bar{C}_{n}$ to satisfy the following symmetry
\begin{align}\label{CS2}
C_{n}=C_{n}^{T},\quad \bar{C}_{n}=\bar{C}_{n}^{T}.
\end{align}

For the focusing case, because it has quartet discrete eigenvalues, only two of them were discussed earlier, so the remaining two are discussed next. Similarly the Eqs.\eqref{T37} and \eqref{T38}, we introduce
\addtocounter{equation}{1}
\begin{align}
&\phi(x,t;\hat{z}_{n})=\psi(x,t;\hat{z}_{n})\hat{b},\quad
\hat{z}_{n}=\epsilon k_{0}^{2}/z_{n}^{*},\tag{\theequation a}\label{92a}\\
&\bar{\phi}(x,t;\hat{z}_{n}^{*})=\bar{\phi}(x,t;\hat{z}_{n}^{*})\hat{\bar{b}},\quad
\hat{z}_{n}^{*}=\epsilon k_{0}^{2}/z_{n},\tag{\theequation b}\label{92b}
\end{align}
where $\hat{b}$ and $\hat{\bar{b}}$ are the $m\times m$ constant matrices. Recalling the Eqs.\eqref{44a} and \eqref{92a} we have
\begin{align}\label{93}
\phi(x,t;z_{n})=\frac{i\epsilon}{z_{n}}\bar{\phi}(x,t;\hat{z}_{n}^{*})Q_{-}^{\dagger}
=\frac{i\epsilon}{z_{n}}\bar{\psi}(x,t;\hat{z}_{n}^{*})\hat{\bar{b}}_{n}Q_{-}^{\dagger},
\end{align}
in addition, using the Eqs.\eqref{T37} and \eqref{44b} one has
\begin{align}\label{94}
\phi(x,t;z_{n})=\psi(x,t;z_{n})b_{n}
=-\frac{i}{z_{n}}\bar{\psi}(x,t;\hat{z}_{n}^{*})Q_{+}b_{n}.
\end{align}
Combining with the Eqs.\eqref{44b} and \eqref{92a}, we can derive that $\epsilon\bar{b}_{n}^{-1}Q_{+}^{\dagger}=-
Q_{-}\bar{b}_{n}^{-1},$ which in turn implies that the nonzero constants $\bar{b}_{n}$ and $\hat{b}_{n}$ satisfy
\begin{align}
\hat{b}_{n}=-\frac{\epsilon}{k_{0}^{2}}Q_{+}^{\dagger}\bar{b}_{n}Q_{-}^{\dagger}.
\end{align}
Similarly from the Eqs.\eqref{93} and \eqref{94}, we obtain
\begin{align}
\hat{\bar{b}}_{n}=-\frac{\epsilon}{k_{0}^{2}}Q_{+}b_{n}Q_{-}.
\end{align}

In addition, we need to derive the residue conditions at zeros  $\bar{z}_{n}$ and $\bar{z}_{n}^{*}$. According to the specific expression of the norming constants defined by Eqs.\eqref{2.79} and \eqref{2.80}, we first introduce from the Eq.\eqref{46a}
\addtocounter{equation}{1}
\begin{align}
\text{cof}a(\epsilon k_{0}^{2}/z_{n}^{*})=\frac{1}{k_{0}^{2}}\text{cof}(Q_{-}^{\dagger})
\text{cof}\bar{a}(z_{n}^{*})\text{cof}(Q_{+}),\tag{\theequation a}\\
\text{cof}\bar{a}(\epsilon k_{0}^{2}/z_{n})=\frac{1}{k_{0}^{2}}\text{cof}(Q_{+}^{\dagger})
\text{cof}a(z_{n})\text{cof}(Q_{-}), \tag{\theequation b}
\end{align}
and then  differentiating Eq.\eqref{46a} with respect to $z$ and calculating at $z=z_{n}$ or $z=z_{n}^{*}$, one has
\addtocounter{equation}{1}
\begin{align}
(\det a)'(\epsilon k_{0}^{2}/z_{n}^{*})=-\epsilon\left(\frac{z_{n}^{*}}
{k_{0}^{2}}\right)^{2}\frac{\det Q_{+}\det Q_{-}^{\dagger}}{k_{0}^{2m}}
(\det\bar{a})'(z_{n}^{*}),\tag{\theequation a}\\
(\det\bar{a})'(\epsilon k_{0}^{2}/z_{n})=-\epsilon\left(\frac{z_{n}}
{k_{0}^{2}}\right)^{2}\frac{\det Q_{+}^{\dagger}\det Q_{-}}{k_{0}^{2m}}
(\det a)'(z_{n}^{*}). \tag{\theequation b}
\end{align}
Therefore we get
\addtocounter{equation}{1}
\begin{align}
\mathop{Res}_{z=\hat{z}_{n}=\epsilon k_{0}^{2}/z_{n}^{*}}\left[\mathcal{\hat{W}}(x,t;z)a^{-1}(z)\right]=
e^{2i\theta(\hat{z}_{n})}\mathcal{\hat{V}}(\hat{z}_{n})\hat{C}_{n},\tag{\theequation a}\\
\mathop{Res}_{z=\hat{z}_{n}^{*}=\epsilon k_{0}^{2}/z_{n}}\left[\mathcal{\bar{W}}(x,t;z)\bar{a}^{-1}(z)\right]=
e^{-2i\theta(\hat{z}_{n}^{*})}\mathcal{\bar{V}}(\hat{z}_{n}^{*})\hat{\bar{C}}_{n},\tag{\theequation b}
\end{align}
where the norming constants $\hat{C}_{n}$ and $\hat{\bar{C}}_{n}$ admit that
\begin{align}\label{CS}
\hat{C}_{n}=\frac{1}{(z_{n}^{*})^{2}}Q_{+}^{\dagger}\bar{C}_{n}Q_{+}^{\dagger},
\quad \hat{\bar{C}}_{n}=\frac{1}{z_{n}^{2}}Q_{+}C_{n}Q_{+},
\quad \hat{\bar{C}}_{n}=\epsilon \hat{C}_{n}^{\dagger}.
\end{align}

Summarizing the discussion in this section, we can get some discrete data needed in the inverse problem including discrete eigenvalues, reflection coefficients, norming constants, and the symmetry they satisfy.
\begin{prop}
The discrete eigenvalues defined by Eq.\eqref{T36} are  given
\addtocounter{equation}{1}
\begin{align}
&\text{focusing case}\quad\epsilon=-1: P=\left\{z_{j}, \epsilon k_{0}^{2}/{z_{j}^{*}},
z_{j}^{*}, \epsilon k_{0}^{2}/{z_{j}}\right\},\tag{\theequation a}\\
&\text{defocusing case}\quad\epsilon=1: P=\left\{\xi_{j},\xi_{j}^{*}\right\}.\tag{\theequation b}
\end{align}
The residue conditions and the norming constants such that
\addtocounter{equation}{1}
\begin{align*}
&\mathop{Res}_{z=z_{n}}\left[\mathcal{\hat{W}}(x,t;z)a^{-1}(z)\right]=
e^{2i\theta(x,t;z_{n})}\mathcal{\hat{V}}(x,t;z_{n})C_{n},
\tag{\theequation a}\label{102a}\\
&\mathop{Res}_{z=z_{n}^{*}}\left[\mathcal{\bar{W}}(x,t;z)\bar{a}^{-1}(z)\right]=
e^{-2i\theta(x,t;z_{n}^{*})}\mathcal{\bar{V}}(x,t;z_{n}^{*})\bar{C}_{n},
\tag{\theequation b}\label{102b}\\
&\mathop{Res}_{z=\hat{z}_{n}=\epsilon k_{0}^{2}/z_{n}^{*}}\left[\mathcal{\hat{W}}(x,t;z)a^{-1}(z)\right]=
e^{2i\theta(x,t;\hat{z}_{n})}\mathcal{\hat{V}}(x,t;\hat{z}_{n})\hat{C}_{n},
\quad\hat{C}_{n}=\frac{1}{(z_{n}^{*})^{2}}Q_{+}^{\dagger}\bar{C}_{n}Q_{+}^{\dagger},
\tag{\theequation c}\label{102c}\\
&\mathop{Res}_{z=\hat{z}_{n}^{*}=\epsilon k_{0}^{2}/z_{n}}\left[\mathcal{\bar{W}}(x,t;z)\bar{a}^{-1}(z)\right]=
e^{-2i\theta(x,t;\hat{z}_{n}^{*})}\mathcal{\bar{V}}(x,t;\hat{z}_{n}^{*})
\hat{\bar{C}}_{n},\quad \hat{\bar{C}}_{n}=\frac{1}{z_{n}^{2}}Q_{+}C_{n}Q_{+}.\tag{\theequation d}\label{102d}
\end{align*}
It is worth noting that for the defocusing case, Eqs.\eqref{102a} and \eqref{102b} are not considered. In general, when the discrete eigenvalues are simple zeros, for the specific case $m = 2$, which means that the norming constants are a matrix with rank one.
\end{prop}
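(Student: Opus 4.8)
The statement is a compendium of the discrete scattering data assembled from the symmetry relations established in this section, so the plan is to verify each clause in turn by invoking the appropriate earlier identity rather than proving anything genuinely new.

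First I would pin down the discrete-eigenvalue structure. Recall from \eqref{s11}--\eqref{s22} that the discrete spectrum consists exactly of the zeros of $\det a(z)$ in $D^{+}$ and of $\det\bar{a}(z)$ in $D^{-}$. Combining the first symmetry \eqref{TT27}, $\det\bar{a}(z)=\det a^{\dagger}(z^{*})$, with the second symmetry (the relations between $a(\epsilon k_{0}^{2}/z)$ and $\bar{a}(z)$ that follow from $S(\epsilon k_{0}^{2}/z)=\tfrac{\epsilon}{k_{0}^{2}}\underline{\sigma}_{3}\underline{Q}_{+}S(z)\underline{Q}_{-}\underline{\sigma}_{3}$) gives the chain \eqref{T35}, hence the quartet \eqref{T36} in the focusing case. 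For the defocusing case I would instead invoke the self-adjointness remark: the discrete spectrum is real in the $k$-plane, so it lies on $\{z:|z|=k_{0}\}$, and there $\epsilon k_{0}^{2}/z_{j}^{*}=z_{j}$, so the quartet collapses to the conjugate pair $\{\xi_{j},\xi_{j}^{*}\}$, which is \eqref{2.76a}.

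Next I would derive the residue conditions at $z_{n}$ and $z_{n}^{*}$. Since $\det a(z_{n})=0$, Eq. \eqref{s11} forces $\phi(x,t;z_{n})$ and $\psi(x,t;z_{n})$ to be linearly dependent, which is \eqref{T37}; feeding this into the definitions \eqref{T15} of the modified eigenfunctions yields $\mathcal{\hat{W}}(z_{n})=e^{2i\theta(z_{n})}\mathcal{\hat{V}}(z_{n})b_{n}$, and taking the residue of $\mathcal{\hat{W}}(x,t;z)a^{-1}(z)$ at the simple zero $z_{n}$ with $\mathop{Res}_{z=z_{n}}a^{-1}(z)=\mathrm{cof}\,a(z_{n})/(\det a)'(z_{n})$ produces the first residue identity; the companion relation at $z_{n}^{*}$ follows the same way from \eqref{s22} and \eqref{T38}. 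The symmetry $\bar{C}_{n}=\epsilon C_{n}^{\dagger}$ then drops out of the conjugation symmetry of the Jost functions (recorded later in \eqref{T46}), and when $Q=Q^{T}$ the third symmetry $S^{T}\underline{\sigma}_{2}S=\underline{\sigma}_{2}$ forces $C_{n}=C_{n}^{T}$ and $\bar{C}_{n}=\bar{C}_{n}^{T}$, i.e. \eqref{CS2}.

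Finally I would treat the residue conditions at the reflected points $\hat{z}_{n}=\epsilon k_{0}^{2}/z_{n}^{*}$ and $\hat{z}_{n}^{*}=\epsilon k_{0}^{2}/z_{n}$. Here I would substitute the second-symmetry relations \eqref{44a}--\eqref{44b} for the Jost functions and \eqref{92a}--\eqref{92b} into the definition of the norming constants, and then differentiate the scattering-coefficient symmetry $a(k_{0}^{2}/z)=\tfrac{1}{k_{0}^{2}}Q_{+}\bar{a}(z)Q_{-}^{\dagger}$ (and its bar counterpart) to express $(\det a)'$ and $\mathrm{cof}\,a$ at the reflected points in terms of their values at $z_{n}^{*}$, $z_{n}$; collecting the $Q_{\pm}$ factors then gives \eqref{CS}. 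I expect this last step to be the main obstacle: because the entries are $m\times m$ matrices one cannot just divide, so all the bookkeeping has to go through cofactor matrices, and one must check that $\det(\mathrm{cof}\,a)$ and $\det a$ vanish to the same order at every point of the quartet (valid for $m=2$ via $\det(\mathrm{cof}\,A)=(\det A)^{m-1}$) so that the residues are genuinely rank-$(m-1)$; the chain-rule factors $(z_{n}^{*}/k_{0}^{2})^{2}$ produced by differentiating the symmetry identities, together with keeping the order of the non-commuting $Q_{\pm}$, $Q_{\pm}^{\dagger}$ blocks straight, are the fiddly part.
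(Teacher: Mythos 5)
Your proposal is correct and follows essentially the same route as the paper: the quartet comes from combining the first symmetry \eqref{TT27} with the second-symmetry relations (2.46) to get the chain \eqref{T35}; the residue conditions at $z_{n}$, $z_{n}^{*}$ come from the linear dependence \eqref{T37}--\eqref{T38} together with $\mathop{Res}_{z=z_{n}}a^{-1}(z)=\mathrm{cof}\,a(z_{n})/(\det a)'(z_{n})$; and the reflected-point norming constants \eqref{CS} come from the cofactor and derivative identities (2.97)--(2.98) obtained by differentiating \eqref{46a}. The points you flag as delicate (equal order of vanishing of $\det(\mathrm{cof}\,a)$ and $\det a$, and the ordering of the non-commuting $Q_{\pm}$ factors) are exactly the ones the paper handles explicitly, so no gap remains.
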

\subsection{The norming constants}
Based on the strategy in the Ref.\cite{PF-2018}, we discuss the relationships between the norming constants and the rank of the following $2m\times2m$ matrices
\begin{align}\label{t1}
\mathscr{P}(x,t;z)=(\phi(x,t;z), \psi(x,t;z)),\quad
\mathscr{\bar{P}}(x,t;z)=(\bar{\psi}(x,t;z), \bar{\phi}(x,t;z)),
\end{align}
which are analytic in the region $D^{+}$ and $D^{-}$, respectively. Now considering that $z_{n}\in D^{+}$ is the simple zero of the $\det a(z)$, we obviously have the $\det \bar{a}(z^{*}_{n})=0$ with  $(\det\bar{a})'(z^{*}_{n})\neq0$ from the fist symmetry Eq.\eqref{TT27}. Then assume that $\Xi_{n}\in \textbf{C}^{2m}\setminus\{0\}$ is a right null vector of $\mathscr{P}(x,t;z_{n})$ for the arbitrary constant $m$, that is $\Xi_{n}\in\text{ker}\mathscr{P}(x,t;z_{n})$, and introducing the vector
\begin{align}\label{t2}
\Xi_{n}=\left(\begin{array}{c}
\Xi_{n}^{up}\\
\Xi_{n}^{dn}
\end{array}\right),\quad \Xi_{n}^{up},\Xi_{n}^{dn}\in \textbf{C}^{m},
\end{align}
obviously we have from  Eq.\eqref{t1} of $\mathscr{P}(x,t;z_{n})$
\begin{align}\label{t3}
\phi(x,t;z_{n})\Xi_{n}^{up}+\psi(x,t;z_{n})\Xi_{n}^{dn}=0_{2m\times m},
\end{align}
which denotes that there are two non-zero vectors $\eta_{n}$ and $\xi_{n}$ satisfying
\begin{align}\label{t4}
\phi(x,t;z_{n})\eta_{n}=\psi(x,t;z_{n})\xi_{n}, \quad z\in D^{+},
\end{align}
with $\eta_{n}=\Xi_{n}^{up}$ and $\xi_{n}=-\Xi_{n}^{dn}$. It is worth noting that the reason why $\eta_{n}, \xi_{n}\neq0$ is that the first $m$ column and the last $m$ column of matrix $\mathscr{P}(x,t;z_{n})$ are linearly independent. Vice versa, for the nonzero vector $\eta_{n}$ and $\xi_{n}$ defined by   Eq.\eqref{t3}, then the $2m\times1$ vector $\Xi_{n}=(\eta_{n}, -\xi_{n})^{T}$ belongs to ker$\mathscr{P}(x,t;z_{n})$. For the simple zero $z_{n}^{*}\in D^{-}$ of the $\det\bar{a}(z)$, we have similar results for the function $\bar{\mathscr{P}}(x,t;z_{n}^{*})$. In addition, for $\eta_{n}, \xi_{n}\in C^{m}\setminus\{0\}$, we know that $\Xi_{n}=(\eta_{n}, -\xi_{n})^{T}$ is the kernel of $T(z_{n})=\bar{\mathscr{P}}^{\dagger}(x,t;z_{n}^{*})\mathcal {L}_{\epsilon}\mathscr{P}(x,t;z_{n})$, which yields that
\begin{align}\label{t5}
a(z_{n})\eta_{n}=0_{m\times1},\quad
\bar{a}^{\dagger}(z_{n}^{*})\xi_{n}=0_{m\times1}.
\end{align}
From  Eq.\eqref{t5}, we clearly know that $\eta_{n}$ and $\xi_{n}$ are the right null vectors of the  $a(z_{n})$ and $\bar{a}^{\dagger}(z_{n}^{*})$, respectively. In the same way, the ker$\bar{\mathscr{P}}(x,t;z_{n}^{*})$ can be obtained, i.e., the
nonzero vector $\bar{\Xi}_{n}=(\bar{\eta}_{n}, -\bar{\xi}_{n})^{T}$, moreover
\begin{align}
a^{\dagger}(z_{n})\bar{\xi}_{n}=0_{m\times1},\quad
\bar{a}(z_{n}^{*})\bar{\eta}_{n}=0_{m\times1}.
\end{align}
Clearly the nonzero vectors $\bar{\xi}_{n}$ and $\bar{\eta}_{n}$ are the kernel of the $a^{\dagger}(z_{n})$ and $\bar{a}(z_{n}^{*})$.

Note that for the arbitrary $n\times n$ matrix $A$, then $A(z)\text{cof} A(z)=\left(\text{cof} A(z)\right)A(z)=\det A(z)I_{n}$, where cof$A$ is defined as before. It follows that for $m=2$,
$\det a(z)=\det \text{cof}a(z),$
which implies that they have a zero of the same order for the eigenvalue $z_{n}$. Note that the matrices $a(z)$ and $(\text{cof}a(z_{n}))$ are $2\times2$ matrices, we have rank$a(z_{n})$=rank$(\text{cof}$ $a(z_{n}))$, and rank$a(z_{n})=1$ $\Leftrightarrow $ rank$\mathscr{P}(x,t;z_{n})=3$.
In addition, the expressions $a(z_{n})(\text{cof}$ $a(z_{n}))=\left(\text{cof} a(z_{n})\right)a(z_{n})=0_{4\times4}$ and $\bar{a}(z_{n}^{*})\text{cof} \bar{a}(z_{n}^{*})=\left(\text{cof} \bar{a}(z_{n}^{*})\right)\bar{a}(z_{n}^{*})=0_{4\times4}$ denote that the each column of the $\text{cof} a(z_{n})$ is the left and right null vectors of $a(z_{n})$ and each column of the $\text{cof} \bar{a}(z_{n}^{*})$ is the left and right null vectors of $\bar{a}(z_{n}^{*})$. Thus choosing the special case  one has
\begin{align}
0_{4\times2}=\mathscr{P}(x,t;z_{n})\left(\begin{array}{c}
(\text{cof}a(z_{n}))\\
-C_{n}
\end{array}\right)\Leftrightarrow \phi(x,t;z_{n})(\text{cof}a(z_{n}))=
\psi(x,t;z_{n})C_{n}.
\end{align}
Similarly, we have
\begin{align}
0_{4\times2}=\mathscr{\bar{P}}(x,t;z_{n}^{*})\left(\begin{array}{c}
-\bar{C}_{n}\\
(\text{cof}a(z_{n}^{*}))
\end{array}\right)\Leftrightarrow \bar{\phi}(x,t;z_{n}^{*})(\text{cof}a(z_{n}^{*}))=
\bar{\psi}(x,t;z_{n}^{*})\bar{C}_{n}.
\end{align}

\section{Inverse scattering problem with NZBCs}

 The analytical, symmetry, and asymptotic properties of the Jost functions and the scattering data are obtained based on the above analysis. Therefore we can build a generalized Riemann-Hilbert problem for the modified matrix KdV equation with non-zero boundary value conditions.
\begin{prop}
Introducing the sectionally meromorphic matrices
\begin{align}\label{Matr}
\mathcal{Q}(x,t;z)=\left\{\begin{aligned}
&\mathcal{Q}^{+}(x,t;z)=\left(\mathcal{\hat{W}}(x,t;z)a^{-1}(z),
\mathcal{\hat{V}}(x,t;z)\right), \quad z\in D^{+},\\
&\mathcal{Q}^{-}(x,t;z)=\left(\mathcal{\bar{V}}(x,t;z),
\mathcal{\bar{W}}(x,t;z)\bar{a}^{-1}(z)\right), \quad z\in D^{-},
\end{aligned}\right.
\end{align}
then we can get the generalized Riemann-Hilbert problem:\\
$\bullet$ Analyticity: $\mathcal{Q}(x,t;z)$ is an analytic function in $C\setminus\Sigma$.\\
$\bullet$ Jump condition:
\begin{align}\label{Jump}
\mathcal{Q}^{-}(x,t;z)=\mathcal{Q}^{+}(x,t;z)(I_{2m}-G(x,t;z)),\quad z\in\Sigma,
\end{align}
where the jump matrix and the counter are presented in the Figure 2
\begin{align}\label{JM}
G(x,t;z)=\left(\begin{array}{cc}
0_{m} & -e^{-2i\theta(x,t;z)\bar{\rho}(z)}\\
e^{2i\theta(x,t;z)\rho(z)} & 0_{m}
\end{array}\right),
\end{align}

\centerline{\begin{tikzpicture}[scale=0.55]
\path (1,4) -- (9,4) to (9,-4) -- (1,-4);
\path [fill=gray] (1,4) -- (9,4) to
(9,0) -- (1,0);
\path [fill=gray] (-1,4) -- (-9,4) to
(-9,0) -- (-1,0);
\filldraw[white, line width=0.5](3,0)--(7,0) arc (0:180:2);
\filldraw[gray, line width=0.5](1,0)--(3,0) arc (-180:0:2);
\draw[fill] (5,0)node[below left]{$0$} circle [radius=0.07];
\draw[fill] (-5,0)node[below left]{$0$} circle [radius=0.07];
\draw[fill] (3,0)node[below]{} circle [radius=0.07];
\draw[fill] (7,0)node[below]{} circle [radius=0.07];
\draw[fill] (5,2)node[below]{} circle [radius=0.07];
\draw[fill] (5,-2)node[below]{} circle [radius=0.07];
\draw[fill][red] (-3.586,1.414)node[above right]{$\zeta_{n}$} circle [radius=0.07];
\draw[fill][blue] (-3.586,-1.414)node[below right]{$\zeta_{n}^{*}$} circle [radius=0.07];
\draw[fill] (-7,0)node[below left]{$-k_{0}$} circle [radius=0.07];
\draw[fill] (-3,0)node[below right]{$k_{0}$} circle [radius=0.07];
\draw[->][thick](-9,0)--(-8,0);
\draw[-][thick](-8,0)--(-7,0);
\draw[->][thick](-7,0)--(-6,0);
\draw[-][thick](-6,0)--(-5,0);
\draw[->][thick](-5,0)--(-4,0);
\draw[-][thick](-4,0)--(-3,0);
\draw[-][thick](-3,0)--(-2,0);
\draw[->][thick](-2,0)--(-1,0)node[above]{\footnotesize$Rez$};
\draw[->][thick](1,0)--(2,0);
\draw[-][thick](2,0)--(3,0)node[above left]{\footnotesize$-k_{0}$};
\draw[-][thick](3,0)--(4,0);
\draw[<-][thick](4,0)--(5,0);
\draw[-][thick](5,0)--(6,0);
\draw[<-][thick](6,0)--(7,0)node[above right]{\footnotesize$k_{0}$};
\draw[->][thick](7,0)--(8,0);
\draw[->][thick](8,0)--(9,0)node[above]{\footnotesize$Rez$};
\draw[-][thick](5,0)--(5,1);
\draw[-][thick](5,1)--(5,2)node[above right]{\footnotesize$ik_{0}$};
\draw[-][thick](5,2)--(5,3);
\draw[->][thick](5,3)--(5,4)node[right]{\footnotesize$Imz$};
\draw[-][thick](5,0)--(5,-1);
\draw[-][thick](5,-1)--(5,-2)node[below right]{\footnotesize$-ik_{0}$};
\draw[-][thick](5,-2)--(5,-3);
\draw[-][thick](5,-3)--(5,-4);
\draw[-][thick](-5,0)--(-5,1);
\draw[-][thick](-5,1)--(-5,2);
\draw[-][thick](-5,2)--(-5,3);
\draw[->][thick](-5,3)--(-5,4)node[right]{\footnotesize$Imz$};
\draw[-][thick](-5,0)--(-5,-1);
\draw[-][thick](-5,-1)--(-5,-2);
\draw[-][thick](-5,-2)--(-5,-3);
\draw[-][thick](-5,-3)--(-5,-4);
\draw[->][red, line width=0.8] (7,0) arc(0:220:2);
\draw[->][red, line width=0.8] (7,0) arc(0:330:2);
\draw[->][red, line width=0.8] (7,0) arc(0:-330:2);
\draw[->][red, line width=0.8] (7,0) arc(0:-220:2);
\draw[-][line width=0.8] (-3,0) arc(0:220:2);
\draw[-][line width=0.8] (-3,0) arc(0:330:2);
\draw[-][line width=0.8] (-3,0) arc(0:-330:2);
\draw[-][line width=0.8] (-3,0) arc(0:-220:2);
\draw[fill][blue] (4,1) circle [radius=0.035][thick]node[right]{\footnotesize$-k_{0}^{2}/z_{n}$};
\draw[fill][red] (4,-1) circle [radius=0.035][thick]node[right]{\footnotesize$-k_{0}^{2}/z_{n}^{*}$};
\draw[fill][red] (8,3) circle [radius=0.035][thick]node[below]{\footnotesize$z_{n}$};
\draw[fill][blue] (8,-3) circle [radius=0.035][thick]node[above]{\footnotesize$z_{n}^{*}$};
\end{tikzpicture}}

\noindent {\small \textbf{Figure 2.} Left/Right: Based on the relationships Eq.($2.101$), the distribution of discrete eigenvalues for the defocusing and focusing  cases can be shown in the complex $z$-plane. As shown in this figure, the red counter is the jump conditions about the Riemann-Hilbert. }

$\bullet$ Asymptotic behaviors:
\begin{align}\label{SJ}
\left\{\begin{aligned}
&\mathcal{Q}^{\pm}(x,t;z)=I_{2m}+O(1/z),\qquad\quad z\rightarrow\infty,\\
&\mathcal{Q}^{\pm}(x,t;z)=-(i/z)\underline{\sigma}_{3}\underline{Q}_{+}+Q(1),\quad
z\rightarrow0.
\end{aligned}\right.
\end{align}
\end{prop}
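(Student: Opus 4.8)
The plan is to verify the three defining properties of the Riemann--Hilbert problem one at a time, in each case reducing to the analytic, symmetric and asymptotic facts about the modified eigenfunctions and the scattering matrix $S(z)$ established in Section~2. For the analyticity: by Theorem~2.1 the blocks $\mathcal{\hat{W}},\mathcal{\hat{V}}$ are analytic on $D^{+}$ and $\mathcal{\bar{V}},\mathcal{\bar{W}}$ on $D^{-}$, while by Theorem~\ref{thm3} $a(z)$ is analytic on $D^{+}$ and $\bar{a}(z)$ on $D^{-}$; hence $\mathcal{\hat{W}}(x,t;z)a^{-1}(z)$ is meromorphic on $D^{+}$ with poles only at the zeros of $\det a$, and likewise $\mathcal{\bar{W}}(x,t;z)\bar{a}^{-1}(z)$ on $D^{-}$. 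Consequently the matrices $\mathcal{Q}^{\pm}$ defined in \eqref{Matr} are sectionally meromorphic on $C\setminus\Sigma$, analytic off the discrete set $P$ of Section~2.7, and the pole data attached to them are exactly the residue conditions \eqref{2.79}--\eqref{2.80} together with their images under the quartet symmetry \eqref{T35}; this is the ``generalized'' part of the problem. One assumes in addition $\det a,\det\bar{a}\neq0$ on $\Sigma$, so that $\mathcal{Q}^{\pm}$ have no poles on the contour itself.

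For the jump condition I would work on $z\in\Sigma$, where all four modified eigenfunctions exist as continuous boundary values, and substitute the scattering relations \eqref{2.22a}, namely $\mathcal{\hat{W}}a^{-1}=\mathcal{\bar{V}}+e^{2i\theta}\mathcal{\hat{V}}\rho$ and $\mathcal{\bar{W}}\bar{a}^{-1}=\mathcal{\hat{V}}+e^{-2i\theta}\mathcal{\bar{V}}\bar{\rho}$, into the columns of \eqref{Matr}. This exhibits $\mathcal{Q}^{+}$ (resp. $\mathcal{Q}^{-}$) as the single $2m\times2m$ matrix $(\mathcal{\bar{V}},\mathcal{\hat{V}})$ multiplied on the right by a block-triangular unipotent factor whose only off-diagonal entry is $e^{2i\theta}\rho$ (resp. $e^{-2i\theta}\bar{\rho}$). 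Eliminating $(\mathcal{\bar{V}},\mathcal{\hat{V}})$ gives $\mathcal{Q}^{-}=\mathcal{Q}^{+}T$, where $T$ is the inverse of the first triangular factor times the second; carrying out that block product and simplifying with the reflection-coefficient identities derived in Section~2.5 reduces $T$ to $I_{2m}-G$ with $G$ the purely off-diagonal matrix \eqref{JM}. The contour $\Sigma=R\cup C_{0}$ (focusing) or $\Sigma=R$ (defocusing) and its orientation are those pictured in Figure~2.

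For the asymptotics I would, in the regime $z\to\infty$, insert the expansions of Proposition~\ref{2.14} together with $a(z),\bar{a}(z)=I_{m}+O(1/z)$ from Proposition~2.16 into \eqref{Matr}: the leading $2m\times2m$ term of $\mathcal{Q}^{\pm}$ is $I_{2m}$ and the remaining $O(1/z)$ contributions come only from the off-diagonal blocks $-\tfrac{i}{z}Q$ and $\tfrac{i\epsilon}{z}Q^{\dagger}$, so $\mathcal{Q}^{\pm}=I_{2m}+O(1/z)$. In the regime $z\to0$ I would instead use Proposition~\ref{2.15} and the $z\to0$ part of \eqref{SJJ}, which after the constraint $Q_{\pm}Q_{\pm}^{\dagger}=Q_{\pm}^{\dagger}Q_{\pm}=k_{0}^{2}I_{m}$ gives $a^{-1}(z)=\tfrac{1}{k_{0}^{2}}Q_{-}Q_{+}^{\dagger}+O(z)$ and $\bar{a}^{-1}(z)=\tfrac{1}{k_{0}^{2}}Q_{-}^{\dagger}Q_{+}+O(z)$; the products in $\mathcal{\hat{W}}a^{-1}$ and $\mathcal{\bar{W}}\bar{a}^{-1}$ then telescope (for instance $\tfrac{QQ_{-}^{\dagger}}{k_{0}^{2}}\tfrac{Q_{-}Q_{+}^{\dagger}}{k_{0}^{2}}=\tfrac{QQ_{+}^{\dagger}}{k_{0}^{2}}$ and $\tfrac{i\epsilon Q_{-}^{\dagger}}{z}\tfrac{Q_{-}Q_{+}^{\dagger}}{k_{0}^{2}}=\tfrac{i\epsilon Q_{+}^{\dagger}}{z}$), and matching the $z^{-1}$ coefficient recovers $-(i/z)\underline{\sigma}_{3}\underline{Q}_{+}$ for both $\mathcal{Q}^{+}$ and $\mathcal{Q}^{-}$.

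The analyticity is the routine part, being a bookkeeping corollary of Theorems~2.1 and \ref{thm3}. I expect the main obstacle to be the jump step: one must track the column ordering in \eqref{Matr}, which of $a^{-1},\bar{a}^{-1}$ sits in which column, and the sign and branch conventions for $\theta(z)$, so that the block multiplication collapses to the \emph{off-diagonal} $G$ of \eqref{JM} rather than to a jump carrying a spurious diagonal block --- it is precisely there that the symmetries relating $\rho,\bar{\rho},a,\bar{a}$ must be invoked. The $z\to0$ asymptotics are the next most delicate point, because there the relevant lower (resp. upper) blocks of the modified eigenfunctions carry a genuine $1/z$ singularity and the cancellations rely essentially on $Q_{\pm}Q_{\pm}^{\dagger}=k_{0}^{2}I_{m}$.
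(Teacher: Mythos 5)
Your route coincides with the paper's own proof: both rest on the relations \eqref{2.22a}, i.e. $\mathcal{\hat{W}}a^{-1}=\mathcal{\bar{V}}+e^{2i\theta}\mathcal{\hat{V}}\rho$ and $\mathcal{\bar{W}}\bar{a}^{-1}=\mathcal{\hat{V}}+e^{-2i\theta}\mathcal{\bar{V}}\bar{\rho}$, and your triangular factorization $\mathcal{Q}^{\pm}=(\mathcal{\bar{V}},\mathcal{\hat{V}})T_{1,2}$ is just a reorganization of the paper's direct substitution of the first relation into the second. Your handling of analyticity (meromorphy with poles at the zeros of $\det a$, $\det\bar{a}$) and of both asymptotic regimes is correct and in fact more explicit than the paper, which merely asserts \eqref{SJ}; the telescoping at $z\to0$ via $Q_{\pm}^{\dagger}Q_{\pm}=Q_{\pm}Q_{\pm}^{\dagger}=k_{0}^{2}I_{m}$ is exactly what is needed there.

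The one step that cannot be carried out as you describe is the final ``simplification'' of the jump factor. Computing $T=T_{1}^{-1}T_{2}$ (equivalently, substituting one relation of \eqref{2.22a} into the other, as the paper does) gives
\begin{align*}
\left(\mathcal{\bar{V}},\ \mathcal{\bar{W}}\bar{a}^{-1}\right)
=\left(\mathcal{\hat{W}}a^{-1},\ \mathcal{\hat{V}}\right)
\left(\begin{array}{cc}
I_{m} & e^{-2i\theta}\bar{\rho}(z)\\
-e^{2i\theta}\rho(z) & I_{m}-\rho(z)\bar{\rho}(z)
\end{array}\right);
\end{align*}
the signs of the off-diagonal blocks depend on the sign convention chosen in \eqref{2.22a}, but the lower-right block $I_{m}-\rho(z)\bar{\rho}(z)$ is unavoidable. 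No symmetry from Section~2.5 makes $\rho\bar{\rho}$ vanish on $\Sigma$ (it vanishes identically only in the reflectionless case), so $T$ does \emph{not} reduce to $I_{2m}-G$ with $G$ purely off-diagonal. The correct jump matrix has $\rho\bar{\rho}$, not $0_{m}$, in the $(2,2)$ block of $G$ --- this is the standard form in the NZBC literature the paper follows --- and the purely off-diagonal $G$ displayed in \eqref{JM} is a misprint, inconsistent with the very matrix the paper's own proof arrives at one line before invoking \eqref{Jump}. You should simply record the jump matrix with the $\rho\bar{\rho}$ entry rather than hunt for identities to remove it.
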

\begin{proof}
From Eq.($2.22$), one has
\addtocounter{equation}{1}
\begin{align}
&\mathcal{\bar{V}}(x,t;z)=\mathcal{\hat{W}}(x,t;z)a^{-1}(z)-e^{2i\theta(x,t;z)}
\mathcal{\hat{V}}(x,t;z)\rho(z),\tag{\theequation a}\label{3.5a}\\
&\mathcal{\bar{W}}(x,t;z)\bar{a}^{-1}(z)=\mathcal{\hat{V}}(x,t;z)+e^{-2i\theta(x,t;z)}
\mathcal{\bar{V}}(x,t;z)\bar{\rho}(z).\tag{\theequation b}\label{3.5b}
\end{align}
Inserting Eq.\eqref{3.5a} into Eq.\eqref{3.5b} yields that
\begin{align}\label{T41}
\mathcal{\bar{W}}(x,t;z)\bar{a}^{-1}(z)=(I_{m}-\rho(z)\bar{\rho}(z))
\mathcal{\hat{V}}(x,t;z)+e^{-2i\theta(x,t;z)}\mathcal{\hat{W}}(x,t;z)
a^{-1}(z)\bar{\rho}(z)).
\end{align}
The Eqs.\eqref{3.5b} and \eqref{T41}  are written in matrix form
\begin{align}
\left(\mathcal{\bar{V}},
\mathcal{\bar{W}}\bar{a}^{-1}(z)\right)=\left(\mathcal{\hat{W}}a^{-1}(z),
\mathcal{\hat{V}}\right)\left(\begin{array}{cc}
I_{m} & -e^{-2i\theta(x,t;z)}\bar{\rho}(z)\\
e^{2i\theta(x,t;z)}\rho(z) & I_{m}-\rho(z)\bar{\rho}(z)
\end{array}\right),
\end{align}
which  leads to the jump condition  \eqref{Jump}.
Based on the asymptotic behavior of the modified eigenfunctions and scattering matrix, the condition Eq.\eqref{SJ} can be derived.
\end{proof}
\subsection{Reconstruction formula}
In this section, we will solve the RH  problem \eqref{Jump}, and then use the obtained solution to establish a relationship with the modified eigenfunctions, that is, obtain the expression of the modified eigenfunctions. Finally, we will combine the asymptotic behavior of the modified eigenfunctions to restore the potential $Q(x,t)$, the solution of the modified KdV equation.
\begin{lem}\label{lem3.2}
Supposing that the functions $f_{\pm}(z)$ are the analytic functions in the upper and lower half planes of $z$, and as $|z|\rightarrow\infty$, then $f_{\pm}(z)\rightarrow0$.
Considering the Cauchy projection operator
\begin{align}
P^{\pm}f(z)=\frac{1}{2\pi i}\int_{\Sigma}\frac{f(\zeta)}{\zeta-(z\pm i0)}d\zeta,
\end{align}
one has
\begin{align}
P^{\pm}(f_{\pm}(z))=\pm f(z),\quad
P^{\pm}(f_{\mp}(z))=0.
\end{align}
where $\int_{\Sigma}$ represents the integral along the counters shown in Fig. 2.
\end{lem}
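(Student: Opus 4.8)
The plan is to prove both identities by the classical argument underlying the Sokhotski--Plemelj formulas, adapted to the contour $\Sigma$ and the regions $D^{\pm}$ of Figure 2. For $z\notin\Sigma$ introduce the Cauchy transform
\begin{align*}
F(z)=\frac{1}{2\pi i}\int_{\Sigma}\frac{f(\zeta)}{\zeta-z}\,d\zeta ,
\end{align*}
so that, by the very definition of the projectors, $P^{+}f(z)$ and $P^{-}f(z)$ are the boundary values of $F$ taken from the $D^{+}$-side and the $D^{-}$-side of $\Sigma$, respectively (with the understanding that $\Sigma$ carries the orientation of Figure 2, i.e.\ it is $\partial D^{+}$ positively oriented, so that ``$z\pm i0$'' unambiguously means the limit from $D^{\pm}$). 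The whole statement then reduces to evaluating $F$ once with $f=f_{+}$ and once with $f=f_{-}$.

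First I would take $f=f_{+}$, analytic in $D^{+}$, continuous up to $\partial D^{+}=\Sigma$, and vanishing at infinity. Complete the integration contour by the circular arc $\Gamma_{R}$ of radius $R$ lying in $\overline{D^{+}}$; since $\sup_{\zeta\in\Gamma_{R}}|f_{+}(\zeta)|\to0$ while the length of $\Gamma_{R}$ grows only linearly in $R$ and $|\zeta-z|^{-1}=O(1/R)$ there, the contribution of $\Gamma_{R}$ tends to $0$ as $R\to\infty$. Hence $F(z)=\frac{1}{2\pi i}\oint_{\partial D^{+}}\frac{f_{+}(\zeta)}{\zeta-z}\,d\zeta$, and the generalized Cauchy integral formula gives $F(z)=f_{+}(z)$ for $z\in D^{+}$ and $F(z)=0$ for $z\in D^{-}$ (in the latter case the integrand is analytic throughout $D^{+}$, so the closed-contour integral vanishes). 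Letting $z$ approach $\Sigma$ and using the continuity of $f_{+}$ up to $\partial D^{+}$ yields $P^{+}(f_{+})=f$ and $P^{-}(f_{+})=0$. The case $f=f_{-}$ is identical except that $\Sigma$ is now traversed as $-\partial D^{-}$; closing in $D^{-}$ and applying Cauchy's formula there produces the opposite sign, so $F(z)=-f_{-}(z)$ for $z\in D^{-}$ and $F(z)=0$ for $z\in D^{+}$, whence $P^{-}(f_{-})=-f$ and $P^{+}(f_{-})=0$. Combining the two cases gives $P^{\pm}(f_{\pm})=\pm f$ and $P^{\pm}(f_{\mp})=0$, as claimed. (When $\Sigma=\mathbb{R}$ and $D^{\pm}$ are the half-planes this is just the ordinary Plemelj identity.)

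The only genuinely delicate point is the estimate killing the arc at infinity: pointwise decay $f_{\pm}(z)\to0$ must be upgraded to decay that is uniform in $\arg z$ along large circles. In the present application this is automatic, since the functions to which the lemma is applied are the differences of the reconstructed sectionally meromorphic matrix and its explicitly known asymptotic values, which are $O(1/z)$ uniformly by the large-$z$ and small-$z$ asymptotics of the modified eigenfunctions established above; in a purely abstract setting one supplies this uniformity by a Phragm\'en--Lindel\"of argument together with boundedness of $f_{\pm}$. I expect this uniformity/orientation bookkeeping to be the main (and essentially the only) obstacle; once it is in place, the two identities are immediate consequences of Cauchy's integral theorem.
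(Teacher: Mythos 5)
Your proof is correct and is the standard Sokhotski--Plemelj/contour-closing argument; the paper itself states this lemma without any proof (it is imported as a known fact from the references on the focusing NLS with NZBC), so there is nothing in the text to compare against. Your handling of the orientation of $\Sigma$ as $\partial D^{+}$ and your remark that the pointwise decay must be uniform on large arcs are exactly the points one would want made explicit; note also that you silently corrected the typo in the statement, which should read $P^{\pm}(f_{\pm})=\pm f_{\pm}$ rather than $\pm f$.
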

\begin{prop}
The solution of the RH  problem \eqref{Jump} under the simple zeros case for the focusing modified KdV equation can be written as
\begin{align}\label{RHP-jie}
\mathcal{Q}(x,t;z)=I_{2m}-\frac{i}{z}\underline{\sigma}_{3}\underline{Q}+
\sum_{n=1}^{2\mathcal{N}}\frac{\mathop{Res}\limits_{z=\zeta_{n}}
\mathcal{Q}^{+}}{z-\zeta_{n}}+
\sum_{n=1}^{2\mathcal{N}}\frac{\mathop{Res}\limits_{z=\zeta_{n}^{*}}
\mathcal{Q}^{-}}{z-\zeta_{n}^{*}}+
\frac{1}{2\pi i}\int_{\Sigma}\frac{(\mathcal{Q}^{+}G)(x,t;\xi)}
{\xi-z}d\xi,
\end{align}
where $\zeta_{n}=z_{n}$ and $\zeta_{n+\mathcal {N}}=\epsilon k_{0}^{2}/z_{n}^{*}$ for $n=1,2,\cdots,\mathcal {N}$.
\end{prop}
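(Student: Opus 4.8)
The plan is to solve the Riemann--Hilbert problem \eqref{Jump} by the classical regularization method: strip from $\mathcal{Q}(x,t;z)$ all of its explicitly known singular and asymptotic parts, recast the result as a problem with trivial normalization at infinity and an additive jump, and invert it with the Cauchy projectors of Lemma~\ref{lem3.2}. From \eqref{Matr}, the analyticity of the modified eigenfunctions and of the scattering matrix (Theorem~\ref{thm3}) established above, and the asymptotics \eqref{SJ}, in the simple-zeros case $\mathcal{Q}^{+}$ is analytic in $D^{+}$ apart from simple poles at $\zeta_{n}$ ($n=1,\dots,2\mathcal{N}$, with $\zeta_{n}=z_{n}$ and $\zeta_{n+\mathcal{N}}=\epsilon k_{0}^{2}/z_{n}^{*}$), $\mathcal{Q}^{-}$ is analytic in $D^{-}$ apart from simple poles at $\zeta_{n}^{*}$, both pieces tend to $I_{2m}$ as $z\to\infty$ and behave like $-(i/z)\underline{\sigma}_{3}\underline{Q}_{+}+O(1)$ as $z\to 0$. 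The residues at $\zeta_{n}$ and $\zeta_{n}^{*}$ are exactly the ones recorded in the residue conditions \eqref{102a}--\eqref{102d} (equivalently \eqref{2.79}--\eqref{2.80}); note that only the first block column of $\mathop{Res}\limits_{z=\zeta_{n}}\mathcal{Q}^{+}$ is nonzero because the $\mathcal{\hat V}$-column of $\mathcal{Q}^{+}$ is analytic throughout $D^{+}$, and symmetrically for $\mathcal{Q}^{-}$.

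First I would introduce the corrected matrix
\begin{align*}
\mathcal{R}(x,t;z)=\mathcal{Q}(x,t;z)-I_{2m}+\frac{i}{z}\underline{\sigma}_{3}\underline{Q}-\sum_{n=1}^{2\mathcal{N}}\frac{\mathop{Res}\limits_{z=\zeta_{n}}\mathcal{Q}^{+}}{z-\zeta_{n}}-\sum_{n=1}^{2\mathcal{N}}\frac{\mathop{Res}\limits_{z=\zeta_{n}^{*}}\mathcal{Q}^{-}}{z-\zeta_{n}^{*}}
\end{align*}
and verify that it extends analytically across $C\setminus\Sigma$: in $D^{+}$ the finite sum over the $\zeta_{n}$ cancels the poles of $\mathcal{Q}^{+}$, the sum over the $\zeta_{n}^{*}$ is holomorphic there, and the $z^{-1}$ term removes the pole of $\mathcal{Q}$ at the origin in accordance with \eqref{SJ}, with the mirror statement holding in $D^{-}$; moreover $\mathcal{R}(x,t;z)\to0$ as $z\to\infty$. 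Since the subtracted terms constitute a single rational matrix that is continuous across $\Sigma\setminus\{0\}$, the jump of $\mathcal{R}$ coincides with that of $\mathcal{Q}$, and \eqref{Jump} (which gives $\mathcal{Q}^{+}-\mathcal{Q}^{-}=\mathcal{Q}^{+}G$) yields
\begin{align*}
\mathcal{R}^{+}(x,t;z)-\mathcal{R}^{-}(x,t;z)=(\mathcal{Q}^{+}G)(x,t;z),\qquad z\in\Sigma.
\end{align*}

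Next I would invoke the Plemelj relations behind Lemma~\ref{lem3.2}: the Cauchy integral $\tfrac{1}{2\pi i}\int_{\Sigma}\tfrac{(\mathcal{Q}^{+}G)(x,t;\xi)}{\xi-z}\,d\xi$ is analytic on $C\setminus\Sigma$, vanishes at infinity, and has jump $\mathcal{Q}^{+}G$ across $\Sigma$, so its difference with $\mathcal{R}$ is entire and vanishes at infinity, hence is identically $0$ by Liouville's theorem. Therefore $\mathcal{R}(x,t;z)=\tfrac{1}{2\pi i}\int_{\Sigma}\tfrac{(\mathcal{Q}^{+}G)(x,t;\xi)}{\xi-z}\,d\xi$, the integral understood as a principal value near $\xi=0$, and re-adding the terms subtracted above gives precisely \eqref{RHP-jie} (and the argument also shows the representation is unique). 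Convergence of the Cauchy integral is not an issue: on $\Sigma$ one has $\lambda(z)\in R$, so $\theta(x,t;z)$ is real and the exponentials $e^{\pm2i\theta}$ in \eqref{JM} are bounded, while $\rho,\bar\rho$ are integrable on $\Sigma$ under the stated hypotheses.

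The substantive part of the proof is the first paragraph rather than the projection step: one must make sure that the listed poles together with the prescribed behavior at $z=0$ and $z=\infty$ exhaust all singularities of $\mathcal{Q}$, which relies on the analyticity of the modified eigenfunctions and scattering coefficients established earlier, on the finiteness and simplicity of the discrete spectrum, and on the WKB asymptotics of Propositions~\ref{2.14}--\ref{2.15} together with the asymptotic form of $S(z)$. The most delicate point is the origin, which lies on the contour $\Sigma$ and is a genuine pole of $\mathcal{Q}$: one has to check that the $z^{-1}$ correction term removes it consistently from both sides of $\Sigma$ and that, even though the first block column of $\mathcal{Q}^{+}G$ can carry a mild singularity at $\xi=0$, the Cauchy integral remains meaningful there as a principal value.
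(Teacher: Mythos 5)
Your proposal is correct and follows essentially the same route as the paper: subtract the identity, the $-(i/z)\underline{\sigma}_{3}\underline{Q}_{+}$ behaviour at the origin and the pole contributions at $\zeta_{n}$, $\zeta_{n}^{*}$, observe that the two sides of the resulting relation are analytic in $D^{-}$ and $D^{+}$ respectively and are $O(1/z)$ at infinity, and then apply the Cauchy projectors of Lemma~\ref{lem3.2} (your Liouville phrasing is equivalent). Your remarks on the origin lying on $\Sigma$ and on the boundedness of $e^{\pm 2i\theta}$ there are in fact more careful than the paper's own proof, which passes over these points silently.
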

\begin{proof}
Subtracting out the asymptotic behaviors and the pole contributions, we can get
\begin{align}\label{T48}
\begin{split}
&\mathcal{Q}^{-}-I_{2m}+(i/z)\underline{\sigma}_{3}\underline{Q}_{+}-
\sum_{n=1}^{2\mathcal{N}}\frac{\mathop{Res}\limits_{z=\zeta_{n}^{*}}
\mathcal{Q}^{-}}{z-\zeta_{n}^{*}}-
\sum_{n=1}^{2\mathcal{N}}\frac{\mathop{Res}\limits_{z=\zeta_{n}}
\mathcal{Q}^{+}}{z-\zeta_{n}}\\&=\mathcal{Q}^{+}-
I_{2m}+(i/z)\underline{\sigma}_{3}\underline{Q}_{+}-
\sum_{n=1}^{2\mathcal{N}}\frac{\mathop{Res}\limits_{z=\zeta_{n}}
\mathcal{Q}^{+}}{z-\zeta_{n}}-
\sum_{n=1}^{2\mathcal{N}}\frac{\mathop{Res}\limits_{z=\zeta_{n}^{*}}
\mathcal{Q}^{-}}{z-\zeta_{n}^{*}}-\mathcal{Q}^{+}G,
\end{split}
\end{align}
note that the first four terms on the left side of the Eq.\eqref{T48} are analytic in the region $D^{-}$, and the zero point of the fifth term is $\zeta_{n}\in D^{+}$, so the left side of  Eq.\eqref{T48} is analytic in $D^{-}$, and is $O(1/z)$ as $z\rightarrow\infty$. Similarly the first four terms on the right side of  Eq.\eqref{T48} are analytic in the region $D^{+}$, and is $O(1/z)$ as $z\rightarrow\infty$. Combining with the \textbf{Lemma} \eqref{lem3.2}, and applying
the projection operator $P^{-}$ and $P^{+}$ to both ends of the Eq.\eqref{T48}, we
can get the solution Eq.\eqref{RHP-jie}.
\end{proof}

According to the above analysis, the formal solution of RH problem is obtained. Combining with the definition of the meromorphic function $\mathcal {Q}$, we can further recover the modified eigenfunctions by using the asymptotic property of the modified eigenfunctions. Finally, we can get the formal solution of the modified KdV equation \eqref{T1}. Also note that Eq.\eqref{RHP-jie} depends on the residue conditions at zeros $z=z_{n}$ and $z=\epsilon k_{0}^{2}/z_{n}^{*}$ in $D^{+}$, as well as $z=z_{n}^{*}$ and $z=\epsilon k_{0}^{2}/z_{n}$ in $D^{-}$. Therefore using  Eqs.\eqref{2.79} and \eqref{2.80} yield the following relationships
\addtocounter{equation}{1}
\begin{align}
&\mathop{Res}_{z=\zeta_{n}}\mathcal {Q}^{+}=\left(
e^{2i\theta(z_{n})}\mathcal{\hat{V}}(z_{n})C_{n},\quad 0_{2m\times m}\right),\quad
n=1,2,\cdots,2\mathcal {N}, \tag{\theequation a}\label{3.12a}\\
&\mathop{Res}_{z=\zeta_{n}^{*}}\mathcal {Q}^{-}=\left( 0_{2m\times m},\quad e^{-2i\theta(z_{n}^{*})}\mathcal{\bar{V}}(z_{n}^{*})\bar{C}_{n}\right),\quad
n=1,2,\cdots,2\mathcal {N}.\tag{\theequation b}\label{3.12b}
\end{align}
Substituting  Eqs.\eqref{3.12a} and \eqref{3.12b} into the solution Eq.\eqref{RHP-jie} and evaluating the last $m$ columns of the Eq.\eqref{RHP-jie} at the points $z=z_{n}$ and $z=\epsilon k_{0}^{2}/z_{n}^{*}$, we can get the following equation by comparing with the last $m$ columns of $\mathcal {Q}$
\begin{align}\label{jie-1}
\begin{split}
\mathcal{\hat{V}}(x,t;\zeta_{n})=\left(\begin{array}{c}
                       -iQ_{+}/\zeta_{n}\\
                        I_{m}
                     \end{array}\right)
+\sum_{j=1}^{2\mathcal {N}}
\frac{e^{-2i\theta(x,t;\zeta_{j}^{*})}}{\zeta_{n}-\zeta_{j}^{*}}
\mathcal{\bar{V}}(x,t;\zeta_{j}^{*})\bar{C}_{j}
+\frac{1}{2\pi i}\int_{\Sigma}\frac{(\mathcal{Q}^{+}G)_{2}(\zeta)}{\zeta-\zeta_{n}}\,d\zeta,
\end{split}
\end{align}
where $\bar{C}_{i}=\hat{\bar{C}}_{i}$, for $i=\mathcal {N}+1, \mathcal {N}+2,\cdots,2\mathcal {N}$.
Similarly the first $m$ columns of Eq.\eqref{RHP-jie} at the points $z=z_{n}^{*}$ and $z=\epsilon k_{0}^{2}/z_{n}$ can be obtained by
\begin{align}\label{jie-2}
\begin{split}
\mathcal{\bar{V}}(x,t;\zeta_{n}^{*})=\left(\begin{array}{c}
                      I_{m}\\
                      i\epsilon Q_{+}^{\dagger}/\zeta_{n}^{*}
                     \end{array}\right)
+\sum_{j=1}^{2\mathcal {N}}
\frac{e^{2i\theta(x,t;\zeta_{j})}}{\zeta_{n}^{*}-\zeta_{j}}
\mathcal{\hat{V}}(x,t;\zeta_{j})C_{j}
+\frac{1}{2\pi i}\int_{\Sigma}\frac{(\mathcal{Q}^{+}G)_{1}(\zeta)}{\zeta-\zeta_{n}^{*}}\,d\zeta,
\end{split}
\end{align}
where $n=1,2,\cdots,\mathcal {N}$, $(\mathcal{Q}^{+}G)_{j}$ $(j=1,2)$ denote the first and last $m$ columns of the product respectively,  and $C_{i}=\hat{C}_{i}$, for $i=\mathcal {N}+1, \mathcal {N}+2,\cdots,2\mathcal {N}$.
\begin{prop}
The potential $Q(x,t)$ can be determined by
\begin{align}\label{T42}
Q(x,t)=Q_{+}+i\sum_{n=1}^{2\mathcal {N}}e^{-2i\theta(x,t;\zeta_{n}^{*})}
\mathcal{\bar{V}}^{up}(x,t;\zeta_{n}^{*})\bar{C}_{n}
+\frac{1}{2\pi }\int_{\Sigma}e^{-2i\theta(x,t;\zeta_{j})}
\mathcal{\bar{V}}^{up}(x,t;\zeta_{n})\bar{\rho}(\zeta)\,d\zeta.
\end{align}
\end{prop}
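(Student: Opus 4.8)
The plan is to recover $Q(x,t)$ from the behaviour of the Riemann--Hilbert solution \eqref{RHP-jie} as $z\to\infty$. The starting observation is that, by the definition \eqref{Matr}, for $z\in D^{+}$ the last $m$ columns of $\mathcal{Q}(x,t;z)$ are precisely $\mathcal{\hat{V}}(x,t;z)$, whose expansion as $z\to\infty$ is given in Proposition~\ref{2.14}: its upper $m\times m$ block equals $-\frac{i}{z}Q(x,t)+O(1/z^{2})$. Hence the whole proof consists of writing down the last $m$ columns of \eqref{RHP-jie}, expanding them to order $1/z$, comparing the upper block with $-\frac{i}{z}Q(x,t)$, and solving for $Q(x,t)$.

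First I would evaluate the last $m$ columns of each term of \eqref{RHP-jie}. The term $I_{2m}$ contributes the constant column block $(0_{m},I_{m})^{T}$, while the polar piece prescribed at $z=0$ by \eqref{SJ}, namely $-\frac{i}{z}\underline{\sigma}_{3}\underline{Q}_{+}$, contributes a column block whose upper entry is $-\frac{i}{z}Q_{+}$ and whose lower entry is $0_{m}$. For the discrete part I would use \eqref{3.12a}--\eqref{3.12b}: the residues $\mathop{Res}_{z=\zeta_{n}}\mathcal{Q}^{+}$ have vanishing last $m$ columns and drop out, whereas $\mathop{Res}_{z=\zeta_{n}^{*}}\mathcal{Q}^{-}$ has last $m$ columns $e^{-2i\theta(x,t;\zeta_{n}^{*})}\mathcal{\bar{V}}(x,t;\zeta_{n}^{*})\bar{C}_{n}$. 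For the continuous part I would identify the last $m$ columns of $\mathcal{Q}^{+}G$: since $\mathcal{Q}^{+}G=\mathcal{Q}^{+}-\mathcal{Q}^{-}$ by \eqref{Jump}, and the last $m$ columns of $\mathcal{Q}^{+}$ and of $\mathcal{Q}^{-}$ are $\mathcal{\hat{V}}$ and $\mathcal{\bar{W}}\bar{a}^{-1}=\mathcal{\hat{V}}+e^{-2i\theta}\mathcal{\bar{V}}\bar\rho$ (by \eqref{3.5b}), those columns equal $-e^{-2i\theta}\mathcal{\bar{V}}\bar\rho$.

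Next I would let $z\to\infty$. Each rational term behaves as $\frac{1}{z-\zeta_{n}^{*}}=\frac{1}{z}+O(1/z^{2})$, and the Cauchy integral behaves as $\frac{1}{2\pi i}\int_{\Sigma}\frac{g(\xi)}{\xi-z}\,d\xi=-\frac{1}{2\pi i z}\int_{\Sigma}g(\xi)\,d\xi+O(1/z^{2})$ with $g=(\mathcal{Q}^{+}G)_{2}=-e^{-2i\theta}\mathcal{\bar{V}}\bar\rho$. Matching the coefficient of $1/z$ in the upper block against Proposition~\ref{2.14} gives
\begin{align*}
-iQ(x,t)=-iQ_{+}+\sum_{n=1}^{2\mathcal N}e^{-2i\theta(x,t;\zeta_{n}^{*})}\mathcal{\bar{V}}^{up}(x,t;\zeta_{n}^{*})\bar{C}_{n}+\frac{1}{2\pi i}\int_{\Sigma}e^{-2i\theta(x,t;\xi)}\mathcal{\bar{V}}^{up}(x,t;\xi)\bar\rho(\xi)\,d\xi .
\end{align*}
Dividing through by $-i$ (so that the coefficients turn into $i$ and $\frac{1}{2\pi}$ respectively) yields exactly \eqref{T42}.

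The computation is otherwise routine, so I expect the main obstacle to be purely technical. The delicate point is the term-by-term $z\to\infty$ expansion of the Cauchy integral: this requires $\mathcal{Q}^{+}G$ to be integrable on $\Sigma$ and to decay suitably, which follows from the hypotheses $Q-Q_{\pm}\in L^{1}$ together with the decay of $\bar\rho$ and the boundedness of the modified eigenfunctions on $\Sigma$ established earlier. A secondary source of care is the block bookkeeping of $\underline{\sigma}_{3}\underline{Q}_{+}$ — one must keep track that its upper-right block is $Q_{+}$ and its lower-right block is $0_{m}$ (consistent with \eqref{T11}) — and the verification that the limit along $z\to\infty$ inside $D^{+}$ is compatible with the block asymptotics used; both are straightforward once the sign conventions are fixed.
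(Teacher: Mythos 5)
Your proposal is correct and follows essentially the same route as the paper: expand the last $m$ columns of the Riemann--Hilbert solution \eqref{RHP-jie} to order $1/z$, match the upper block against the asymptotics $-\tfrac{i}{z}Q(x,t)$ of $\mathcal{\hat{V}}^{up}$ from Proposition \ref{2.14}, and simplify the jump contribution to $(\mathcal{Q}^{+}G)_{2}=-e^{-2i\theta}\mathcal{\bar{V}}\bar{\rho}$. The only (immaterial) difference is that you obtain this last identity from $\mathcal{Q}^{+}G=\mathcal{Q}^{+}-\mathcal{Q}^{-}$ and \eqref{3.5b}, whereas the paper substitutes \eqref{2.22a} directly into the product with the jump matrix; the two computations are equivalent.
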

\begin{proof}
Taking $\mathcal{Q}=\mathcal{Q}^{+}$ and combining with the expression $\mathcal{Q}^{+}(x,t;z)=\left(\mathcal{\hat{W}}(x,t;z)a^{-1}(z),\right.$
$\left.\mathcal{\hat{V}}(x,t;z)\right)$, one has
\begin{align}
\mathcal{Q}^{+}_{2}=\mathcal{\hat{V}}=\left(\begin{array}{c}
0_{m}\\
I_{m}\end{array}\right)+
\frac{1}{z}\left\{\left(\begin{array}{c}
-iQ_{+}\\
0_{m}\end{array}\right)+\sum_{n=1}^{2\mathcal {N}}e^{-2i\theta(x,t;\zeta_{n}^{*})}
\mathcal{\bar{V}}^{up}(x,t;\zeta_{n}^{*})\bar{C}_{n}
-\frac{1}{2\pi i}\int_{\Sigma}(\mathcal{Q}^{+}G)_{2}(\zeta)\,d\zeta\right\}
\end{align}
as $z\rightarrow\infty$, and comparing with the asymptotic behavior Eq.\eqref{66d} yields
\begin{align}\label{T43}
Q=Q_{+}+i\sum_{n=1}^{2\mathcal {N}}e^{-2i\theta(x,t;\zeta_{n}^{*})}
\mathcal{\bar{V}}^{up}(x,t;\zeta_{n}^{*})\bar{C}_{n}-\frac{1}{2\pi i}\int_{\Sigma}(\mathcal{Q}^{+}G)_{2}(\zeta)\,d\zeta.
\end{align}
Note from   Eqs.\eqref{2.22a}, \eqref{Matr} and the jump matrix \eqref{JM}, one has
\begin{align}\label{T44}
\begin{split}
(\mathcal{Q}^{+}G)_{2}(\zeta)&=-e^{-2i\theta(x,t;\zeta)}
\mathcal{\hat{W}}(x,t;\zeta)a^{-1}(\zeta)\bar{\rho}(\zeta)+\mathcal{\hat{V}}(x,t;\zeta)
\rho(\zeta)\bar{\rho}(\zeta)\\&=-e^{-2i\theta(x,t;\zeta)}
\mathcal{\bar{V}}(x,t;\zeta)\bar{\rho}(\zeta).
\end{split}
\end{align}
Inserting   Eq.\eqref{T44} into Eq.\eqref{T43}, one has  the potential $Q(x,t)$ Eq.\eqref{T42}.

Taking a similar approach to $\mathcal{Q}=\mathcal{Q}^{-}$, we get
\begin{align}\label{T45}
Q^{\dagger}(x,t)=Q_{+}^{\dagger}-i\epsilon\sum_{n=1}^{2\mathcal {N}}e^{2i\theta(x,t;\zeta_{n})}
\mathcal{\hat{V}}^{dn}(x,t;\zeta_{n})C_{n}
+\frac{\epsilon}{2\pi }\int_{\Sigma}e^{2i\theta(x,t;\zeta)}
\mathcal{\hat{V}}^{dn}(x,t;\zeta)\rho(\zeta)\,d\zeta,
\end{align}
where $(\mathcal{Q}G)_{1}(\zeta)=e^{2i\theta(x,t;\zeta)}
\mathcal{\hat{V}}^{dn}(x,t;\zeta)\rho(\zeta)$.
\end{proof}

According to the construction of the above potential functions $Q(x,t)$ and $Q^{\dagger}(x,t)$, and considering the asymptotic behavior of the eigenfunctions Eqs.\eqref{66c} and \eqref{66d}, i.e., $\mathcal {\bar{V}}^{up}\sim I_{m}$ and $\mathcal {\hat{V}}^{dn}\sim I_{m}$ as $x\rightarrow\infty$ for $z\in D^{-}$ and $z\in D^{+}$, we get the result by comparing the Hermite conjugation of the Eq.\eqref{T42} with Eq.\eqref{T45}
\begin{align}\label{T46}
\bar{C}_{n}=\epsilon\bar{C}_{n}^{\dagger}, \quad n=1,2,\cdots,2\mathcal {N}.
\end{align}
In addition, the symmetry of potential function also requires that the norming constants satisfy the symmetries
\begin{align}
C_{n}^{T}=C_{n},\quad \bar{C}_{n}^{T}=\bar{C}_{n},\quad n=1,2,\cdots,2\mathcal {N}.
\end{align}
\subsection{The formal solution of focusing KdV with Reflection-less potential}
In this section we will consider soliton solutions of the focusing modified KdV equation with reflection-less potential, i.e., the reflection coefficients $\rho(z)=0$ and $\bar{\rho}(z)=0$ for $z\in\Sigma$. Note for $\rho(z)=\bar{\rho}(z)=0$, from the jump condition Eq.\eqref{Jump} one has $\mathcal {Q}^{+}=\mathcal {Q}^{-}$, and then the inverse problem can be transformed to a closed algebraic system. Finally the soliton solutions of the modified KdV equation \eqref{T1} can be derived.
\begin{prop}
The $N$-soliton solutions of the focusing modified KdV equation with reflection-less potential can be written as
\begin{align}\label{Qjie}
Q(x,t)=Q_{+}+i\sum_{n=1}^{2\mathcal {N}}e^{-2i\theta(x,t;z_{n}^{*})}X_{n}\bar{C}_{n},
\end{align}
where $X_{n}$ is determined by   Eq.\eqref{T51}.
\end{prop}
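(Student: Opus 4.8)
The plan is to specialize the general reconstruction machinery of the previous two subsections to the reflection-less situation. First I would observe that imposing $\rho(z)=\bar{\rho}(z)=0$ for all $z\in\Sigma$ makes the jump matrix $G(x,t;z)$ in \eqref{JM} vanish identically, so the jump condition \eqref{Jump} collapses to $\mathcal{Q}^{+}(x,t;z)=\mathcal{Q}^{-}(x,t;z)$ on $\Sigma$; hence $\mathcal{Q}(x,t;z)$ is a single sectionally meromorphic matrix on $\mathbb{C}\setminus P$ whose only singularities are the simple poles at the quartet of discrete eigenvalues, normalized as in \eqref{SJ}. Consequently the Cauchy-integral term in the solution formula \eqref{RHP-jie} drops out and we are left with
\begin{align*}
\mathcal{Q}(x,t;z)=I_{2m}-\frac{i}{z}\underline{\sigma}_{3}\underline{Q}
+\sum_{n=1}^{2\mathcal{N}}\frac{\mathop{Res}\limits_{z=\zeta_{n}}\mathcal{Q}^{+}}{z-\zeta_{n}}
+\sum_{n=1}^{2\mathcal{N}}\frac{\mathop{Res}\limits_{z=\zeta_{n}^{*}}\mathcal{Q}^{-}}{z-\zeta_{n}^{*}}.
\end{align*}

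Next I would insert the residue conditions \eqref{3.12a}--\eqref{3.12b}, which express $\mathop{Res}_{z=\zeta_{n}}\mathcal{Q}^{+}$ and $\mathop{Res}_{z=\zeta_{n}^{*}}\mathcal{Q}^{-}$ through the column blocks $\mathcal{\hat{V}}(x,t;\zeta_{n})$, $\mathcal{\bar{V}}(x,t;\zeta_{n}^{*})$ and the norming matrices $C_{n},\bar{C}_{n}$. Evaluating the last $m$ columns of the resulting identity at $z=\zeta_{n}$ and the first $m$ columns at $z=\zeta_{n}^{*}$ reproduces exactly the relations \eqref{jie-1} and \eqref{jie-2} with the $\int_{\Sigma}$ terms deleted; the inhomogeneous parts $(-iQ_{+}/\zeta_{n},\,I_{m})^{T}$ and $(I_{m},\,i\epsilon Q_{+}^{\dagger}/\zeta_{n}^{*})^{T}$ are inherited from the normalization $I_{2m}-(i/z)\underline{\sigma}_{3}\underline{Q}_{+}$ of $\mathcal{Q}$. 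Substituting \eqref{jie-1} into \eqref{jie-2} eliminates the $\mathcal{\hat{V}}$-blocks and produces a closed $2\mathcal{N}\times2\mathcal{N}$ block-linear system for the $2m\times m$ blocks $\mathcal{\bar{V}}(x,t;\zeta_{n}^{*})$; reading off the upper $m\times m$ block and setting $X_{n}:=\mathcal{\bar{V}}^{up}(x,t;\zeta_{n}^{*})$ gives precisely the system \eqref{T51}. In assembling it one uses the quartet identifications $C_{i}=\hat{C}_{i}$, $\bar{C}_{i}=\hat{\bar{C}}_{i}$ for $i=\mathcal{N}+1,\dots,2\mathcal{N}$ from \eqref{CS}, and the symmetries \eqref{T46} (together with \eqref{CS2} in the symmetric-$Q$ case), so that the four families of discrete data reduce to $2\mathcal{N}$ independent ones.

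Finally I would return to the potential reconstruction formula \eqref{T42}: since $\bar{\rho}\equiv0$ its integral term vanishes, leaving $Q(x,t)=Q_{+}+i\sum_{n=1}^{2\mathcal{N}}e^{-2i\theta(x,t;\zeta_{n}^{*})}\mathcal{\bar{V}}^{up}(x,t;\zeta_{n}^{*})\bar{C}_{n}$, and with $\zeta_{n}^{*}$ running over the $z_{n}^{*}$ and their reflected partners and $X_{n}$ substituted this is exactly \eqref{Qjie}. The main obstacle is the middle step --- correctly assembling the closed system \eqref{T51}: one must track the $2m\times m$ block structure of $\mathcal{\hat{V}},\mathcal{\bar{V}}$ against the $m\times m$ norming matrices, evaluate their $up/dn$ blocks at the discrete points consistently with Propositions~\ref{2.14}--\ref{2.15}, and verify that the coefficient matrix is invertible; the latter follows generically from uniqueness of the Riemann--Hilbert solution, with $\det\mathcal{G}=\det\bar{\mathcal{G}}=0$ forcing each norming matrix to have rank $m-1$ (rank one when $m=2$). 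Once \eqref{T51} is established the substitution into \eqref{T42} is immediate.
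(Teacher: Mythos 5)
Your proposal is correct and follows essentially the same route as the paper: drop the jump (hence the Cauchy integral) in \eqref{RHP-jie}, evaluate \eqref{jie-1}--\eqref{jie-2} at the discrete eigenvalues without their integral terms, substitute one into the other to obtain the closed linear system \eqref{T51} for $X_{n}=\mathcal{\bar{V}}^{up}(x,t;\zeta_{n}^{*})$ via the quartet identifications \eqref{CS}, and insert the result into the reconstruction formula \eqref{T42}. Your added remark on the invertibility of $K=I_{p}+\Omega$ is a point the paper passes over silently, but it does not change the argument.
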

\begin{proof}
Recall that the discrete eigenvalues for the focusing case  \eqref{2.76a}, with $\zeta_{\mathcal {N}+j}=-k_{0}^{2}/z_{j}^{*}$ and the symmetries about the norming constants  \eqref{CS2} and \eqref{CS}
\begin{align}
\bar{C}_{n}=-C^{\dagger}_{n},\quad
C_{\mathcal {N}+j}=\frac{Q_{+}^{\dagger}\bar{C}_{j}Q_{+}^{\dagger}}{(z_{j}^{*})^{2}},
\end{align}
for $j=1,2,\cdots,\mathcal {N}$ and $n=1,2,\cdots,2\mathcal {N}$. For simplicity, taking the notation
\begin{align}\label{notation}
c(x,t;z)=\frac{C_{j}}{z-\zeta_{j}}e^{2i\theta(x,t;\zeta_{j})},\quad
j=1,2,\cdots,2\mathcal {N}.
\end{align}
Note that from the Eq.\eqref{T42}, we need to derive the expression of the modified eigenfunctions. Using the Eqs.\eqref{jie-1} and \eqref{jie-2} yield that
\addtocounter{equation}{1}
\begin{align}
&\mathcal{\hat{V}}(x,t;\zeta_{n})=\left(\begin{array}{c}
                       -iQ_{+}/\zeta_{n}\\
                        I_{m}
                     \end{array}\right)
+\sum_{j=1}^{2\mathcal {N}}
\frac{e^{-2i\theta(x,t;\zeta_{j}^{*})}}{\zeta_{n}-\zeta_{j}^{*}}
\mathcal{\bar{V}}(x,t;\zeta_{j}^{*})\bar{C}_{j},\tag{\theequation a}\\
&\mathcal{\bar{V}}(x,t;\zeta_{n})=\left(\begin{array}{c}
                      I_{m}\\
                      i\epsilon Q_{+}^{\dagger}/\zeta_{n}^{*}
                     \end{array}\right)
+\sum_{j=1}^{2\mathcal {N}}
\frac{e^{2i\theta(x,t;\zeta_{j})}}{\zeta_{n}^{*}-\zeta_{j}}
\mathcal{\hat{V}}(x,t;\zeta_{j})C_{j},\tag{\theequation b}
\end{align}
which in turn imply that
\addtocounter{equation}{1}
\begin{align}
&\mathcal{\hat{V}}^{up}(\zeta_{j})=-iQ_{+}/\zeta_{j}-\sum_{l=1}^{2\mathcal {N}}
\mathcal{\bar{V}}^{up}(\zeta_{l}^{*})c_{l}^{\dagger}(\zeta_{j}^{*}),\quad
j=1,2,\cdots,2\mathcal {N},\tag{\theequation a}\label{T47}\\
&\mathcal{\bar{V}}^{up}(\zeta_{n}^{*})=I_{m}+\sum_{j=1}^{2\mathcal {N}}
\mathcal{\hat{V}}^{up}(\zeta_{j})c_{j}(\zeta_{n}^{*}),\qquad\quad
n=1,2,\cdots,2\mathcal {N},\tag{\theequation b}\label{T49}
\end{align}
for brevity, the subscript $x$ and $t$ are omitted. Inserting   Eq.\eqref{T47} into   Eq.\eqref{T49}, one has
\begin{align}\label{T50}
\mathcal{\bar{V}}^{up}(\zeta_{n}^{*})=I_{m}-iQ_{+}\sum_{j=l}^{2\mathcal {N}}
c_{j}(\zeta_{n}^{*})/\zeta_{j}-\sum_{j=l}^{2\mathcal {N}}\sum_{l=1}^{2\mathcal {N}}
\mathcal{\bar{V}}^{up}(\zeta_{n}^{*})c_{l}^{\dagger}(\zeta_{j}^{*})
c_{j}(\zeta_{n}^{*}),\quad n=1,2,\cdots,2\mathcal {N}.
\end{align}
In order to get the eigenfunction $\mathcal{\bar{V}}^{up}(\zeta_{n}^{*})$, we transform Eq.\eqref{T50} into the matrix form. Assume that
\addtocounter{equation}{1}
\begin{align}
&\textbf{X}=(X_{1},X_{2},\cdots,X_{2\mathcal {N}})^{T},\quad
X_{n}=\mathcal{\bar{V}}^{up}(\zeta_{n}^{*}),\tag{\theequation a}\label{T51}\\
&\textbf{Y}=(Y_{1},Y_{2},\cdots,Y_{2\mathcal {N}})^{T},\quad
Y_{n}=I_{m}-iQ_{+}\sum_{j=l}^{2\mathcal {N}}
c_{j}(\zeta_{n}^{*})/\zeta_{j},\tag{\theequation b}\label{T52}\\
&\Omega=(\Omega_{n,l}),\quad \Omega_{n,l}=\sum_{j=l}^{2\mathcal {N}}
c_{l}^{\dagger}(\zeta_{j}^{*})
c_{j}(\zeta_{n}^{*}),\tag{\theequation c}\label{T53}
\end{align}
for $n,j=1,2,\cdots,2\mathcal {N}$. Based on these notations,   Eq.\eqref{T50} can be written as
\begin{align}\label{51}
K\textbf{X}=\textbf{Y},\quad K=I_{p}+\Omega,
\end{align}
where $I_{p}$ is the unity matrix and $p=2m\mathcal {N}$. The solution  \eqref{Qjie} is proved by substituting the solution obtained from solving   Eq.\eqref{T51} by Cramer's law into   Eq.\eqref{T42} without the reflection-less potential.
\end{proof}

In addition, we observe that even if discrete eigenvalues are quartets in the nonzero boundary conditions, the potential function $Q(x,t)$ is still  expressed by $2\mathcal {N}$ unknowns determined by Eq.\eqref{T50}. Moreover we have
\addtocounter{equation}{1}
\begin{align}
&\mathcal{\hat{V}}^{up}(x,t;\zeta_{n})=-\frac{i}{\zeta_{n}}
\mathcal{\bar{V}}^{up}(x,t;\zeta_{n+\mathcal {N}})Q_{+}, \tag{\theequation a}\\
&\mathcal{\hat{V}}^{up}(x,t;\zeta_{n+\mathcal {N}})=-\frac{i\epsilon
\zeta_{n}^{*}}{k_{0}^{2}}\mathcal{\bar{V}}^{up}(x,t;\zeta_{n}^{*})Q_{+},
 \tag{\theequation b}
\end{align}
for all $n=1,2,\cdots,\mathcal {N}$.

\section{Soliton solutions}
In this section, we will give the soliton solutions of the focusing modified KdV equation under the condition that the potential function $Q(x,t)$ is a scalar and   $2\times2$ symmetric matrix.
\begin{thm}
Assume that the eigenvalue  $\zeta_{1}\in D^{+}$ in the upper half plane, namely $|\zeta_{1}|>k_{0}$ and $Im\zeta_{1}>0$.
The potential $Q(x,t)$ can be given by for $m=2$ and $\mathcal {N}=1$
\begin{align}\label{T54}
Q(x,t)=Q_{+}-iX_{1}e^{-2i\theta(x,t;\zeta_{1}^{*})}C_{1}^{\dagger}+
\frac{iX_{2}e^{2i\theta(x,t;\zeta_{1})}Q_{+}C_{1}Q_{+}}{\zeta_{1}^{2}},
\end{align}
where $X_{i}$ ($i=1,2$) are determined by   Eq.\eqref{T51}.
\end{thm}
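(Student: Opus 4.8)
The plan is to obtain \eqref{T54} as the $m=2$, $\mathcal{N}=1$ specialization of the reflectionless $N$-soliton formula \eqref{Qjie}. First I would recall that, under $\rho\equiv\bar{\rho}\equiv 0$, the preceding proposition gives
\begin{align*}
Q(x,t)=Q_{+}+i\sum_{n=1}^{2\mathcal{N}}e^{-2i\theta(x,t;\zeta_{n}^{*})}X_{n}\bar{C}_{n},
\end{align*}
with $\zeta_{n}=z_{n}$, $\zeta_{n+\mathcal{N}}=\epsilon k_{0}^{2}/z_{n}^{*}$, and $X_{n}=\mathcal{\bar{V}}^{up}(x,t;\zeta_{n}^{*})$ the matrix unknowns of the closed linear system \eqref{51}. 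Setting $\mathcal{N}=1$ and $\epsilon=-1$, the quartet \eqref{2.76a} reduces to $\{\zeta_{1},\zeta_{2},\zeta_{1}^{*},\zeta_{2}^{*}\}$ with $\zeta_{1}=z_{1}$ and $\zeta_{2}=-k_{0}^{2}/z_{1}^{*}$; the hypotheses $|\zeta_{1}|>k_{0}$ and $\text{Im}\,\zeta_{1}>0$ ensure $\zeta_{1},\zeta_{2}\in D^{+}$ while $\zeta_{1}^{*},\zeta_{2}^{*}\in D^{-}$, so the four points are distinct and the pole configuration of the Riemann--Hilbert problem is exactly the one assumed in the proof of \eqref{Qjie}.

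Second, I would expand the two-term sum. For $n=1$ the symmetry $\bar{C}_{1}=\epsilon C_{1}^{\dagger}=-C_{1}^{\dagger}$ turns the summand into $-iX_{1}e^{-2i\theta(x,t;\zeta_{1}^{*})}C_{1}^{\dagger}$. For $n=2$ I would use $\zeta_{2}^{*}=\epsilon k_{0}^{2}/z_{1}=-k_{0}^{2}/z_{1}$ together with the identity $\theta(\epsilon k_{0}^{2}/z)=-\theta(z)$ to replace $e^{-2i\theta(x,t;\zeta_{2}^{*})}$ by $e^{2i\theta(x,t;\zeta_{1})}$, and the relation $\bar{C}_{2}=\hat{\bar{C}}_{1}=z_{1}^{-2}Q_{+}C_{1}Q_{+}$ from \eqref{CS}; the summand then becomes $iX_{2}e^{2i\theta(x,t;\zeta_{1})}\zeta_{1}^{-2}Q_{+}C_{1}Q_{+}$. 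Adding both contributions to $Q_{+}$ reproduces \eqref{T54} verbatim.

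Third, I would note that the remaining content is purely the solution of the closed $p\times p$ system $(I_{p}+\Omega)\textbf{X}=\textbf{Y}$ with $p=2m\mathcal{N}=4$, whose blocks are built from $c_{j}(x,t;z)=C_{j}(z-\zeta_{j})^{-1}e^{2i\theta(x,t;\zeta_{j})}$ as in \eqref{notation}, \eqref{T52}, \eqref{T53}; Cramer's rule then yields $X_{1},X_{2}$ explicitly, which is exactly what the statement means by ``$X_{i}$ determined by \eqref{T51}''. The choice $m=2$ is what makes $\det a$ and $\det(\text{cof}\,a)$ vanish to the same order, so a simple zero forces $\text{rank}\,a(\zeta_{1})=1$ and hence rank-one norming constants, consistent with the discrete data of Section~2.7--2.8; together with $\bar{C}_{1}=-C_{1}^{\dagger}$ and $C_{1}=C_{1}^{T}$ (from \eqref{CS2} and \eqref{T46}) this guarantees that the right-hand side of \eqref{T54} is a genuine $2\times 2$ symmetric matrix. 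I expect the only real point requiring attention to be the invertibility of $I_{p}+\Omega$; this is controlled by the positivity-type structure already used in establishing \eqref{Qjie}, so no new estimate is needed, and the rest is sign-bookkeeping through the focusing quartet.
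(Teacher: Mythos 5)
Your proposal is correct and follows essentially the same route as the paper: Theorem 4.1 is obtained by specializing the reflectionless formula \eqref{Qjie} to $m=2$, $\mathcal{N}=1$, using $\bar{C}_{1}=\epsilon C_{1}^{\dagger}=-C_{1}^{\dagger}$, $\bar{C}_{2}=\hat{\bar{C}}_{1}=Q_{+}C_{1}Q_{+}/\zeta_{1}^{2}$ and $\theta(\epsilon k_{0}^{2}/z)=-\theta(z)$ to rewrite the two summands, with $X_{1},X_{2}$ read off from the closed linear system \eqref{51}. Your sign and index bookkeeping (including $\zeta_{2}=-k_{0}^{2}/z_{1}^{*}\in D^{+}$) checks out against the paper's subsequent equations (4.2)--(4.5).
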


In what following, we will use the residue conditions, discrete eigenvalues and arbitrary norming constants to give the specific expressions of $X_{1}$ and $X_{2}$, and then the potential function $Q(x,t)$ can also be represented by these data, so as to select the appropriate parameters to study the propagation behavior of the solution. Note that for $m=2$, from   Eq.\eqref{T51} one has
\addtocounter{equation}{1}
\begin{align}
&X_{1}\left(I_{2}+\frac{i}{(\zeta_{1}^{*})^{2}+k_{0}^{2}}C_{1}^{\dagger}
Q_{+}^{\dagger}e^{-2i\theta(\zeta_{1}^{*})}\right)=I_{2}-\frac{i}{\zeta_{1}}
X_{2}Q_{+}c_{1}(\zeta_{1}^{*}),\tag{\theequation a}\\
&X_{2}\left(I_{2}-\frac{i}{\zeta_{1}^{2}+k_{0}^{2}}Q_{+}C_{1}
e^{2i\theta(\zeta_{1})}\right)=I_{2}+\frac{i\zeta_{1}^{*}}{k_{0}^{2}}
X_{1}Q_{+}c_{2}(\zeta_{2}^{*}),\tag{\theequation b}
\end{align}
and with the notation Eq.\eqref{notation}, $\zeta_{\mathcal {N}+j}=-k_{0}^{2}/\zeta_{j}^{*}$ as well as the associated norming constant such that $C_{\mathcal {N}+j}=Q_{+}^{\dagger}\bar{C}_{j}Q_{+}^{\dagger}$. From the Eq.\eqref{102c} for $j=1,2,\cdots,\mathcal {N}$, we have
\begin{align}
c_{1}(\zeta_{1}^{*})=\frac{C_{1}}{\zeta_{1}^{*}
-\zeta_{1}}e^{2i\theta(x,t;\zeta_{1})},\quad
c_{2}(\zeta_{2}^{*})=\frac{\zeta_{1}}{\zeta_{1}^{*}k_{0}^{2}(\zeta_{1}^{*}-\zeta_{1})}
Q_{+}^{\dagger}C_{1}^{\dagger}Q_{+}^{\dagger}e^{-2i\theta(x,t;\zeta_{1}^{*})}.
\end{align}
Obviously  the expression of $X_{1}$ and $X_{2}$ can be derived by the  Eqs.($4.2$)
\addtocounter{equation}{1}
\begin{align}
&X_{1}=\left[I_{2}-\frac{i}{\zeta_{1}}A_{2}^{-1}Q_{+}c_{1}(\zeta_{1}^{*})\right]
\left[A_{1}-\frac{\zeta_{1}^{*}}{\zeta_{1}k_{0}^{2}}Q_{+}
c_{2}(\zeta_{2}^{*})A_{2}^{-1}Q_{+}c_{1}(\zeta_{1}^{*})\right]^{-1},
\tag{\theequation a}\label{T55}\\
&X_{2}=\left[I_{2}+\frac{i\zeta_{1}^{*}}{k_{0}^{2}}A_{1}^{-1}Q_{+}c_{2}(\zeta_{2}^{*})\right]
\left[A_{2}-\frac{\zeta_{1}^{*}}{\zeta_{1}k_{0}^{2}}Q_{+}
c_{1}(\zeta_{1}^{*})A_{1}^{-1}Q_{+}c_{2}(\zeta_{2}^{*})\right]^{-1},
\tag{\theequation b}
\end{align}
where
\addtocounter{equation}{1}
\begin{align}
&A_{1}=I_{2}+\frac{i}{(\zeta_{1}^{*})^{2}+k_{0}^{2}}C_{1}^{\dagger}
Q_{+}^{\dagger}e^{-2i\theta(\zeta_{1}^{*})},\tag{\theequation a}\\
&A_{2}=A_{1}^{\dagger}=I_{2}-\frac{i}{\zeta_{1}^{2}+k_{0}^{2}}Q_{+}C_{1}
e^{2i\theta(\zeta_{1})},\tag{\theequation b}\label{T56}
\end{align}
and the norming constant $C_{1}$ is
\begin{align}
C_{1}=\left(\begin{array}{cc}
\varrho_{1} & \varrho_{0}\\
\varrho_{0} & \varrho_{2}
\end{array}\right),
\end{align}
with $\varrho_{i}\in\textbf{C}$ for ($i=0,1,2$).
\subsection{The scalar focusing modified KdV equation}
For scalar case, i.e., the $Q(x,t)=q(x,t)$, then from Eq.\eqref{T54}, the NZBCs \eqref{T2} and \eqref{T3},  the solution of the focusing modified KdV equation \eqref{T1}   can be written as for $\mathcal {N}=1$
\begin{align}
q(x,t)=q_{+}-iX_{1}e^{-2i\theta(\zeta_{1}^{*})}\varrho_{1}^{\dagger}+
i X_{2}e^{2i\theta(\zeta_{1})}\frac{q_{+}\varrho_{1}q_{+}}{\zeta_{1}^{2}},
\end{align}
where
\addtocounter{equation}{1}
\begin{align}
&X_{1}=\left[1-\frac{i}{\zeta_{1}}A_{2}^{-1}q_{+}c_{1}(\zeta_{1}^{*})\right]
\left[A_{1}-\frac{\zeta_{1}^{*}}{\zeta_{1}k_{0}^{2}}q_{+}
c_{2}(\zeta_{2}^{*})A_{2}^{-1}q_{+}c_{1}(\zeta_{1}^{*})\right]^{-1},
\tag{\theequation a}\\
&X_{2}=\left[1+\frac{i\zeta_{1}^{*}}{k_{0}^{2}}A_{1}^{-1}q_{+}c_{2}(\zeta_{2}^{*})\right]
\left[A_{2}-\frac{\zeta_{1}^{*}}{\zeta_{1}k_{0}^{2}}q_{+}
c_{1}(\zeta_{1}^{*})A_{1}^{-1}q_{+}c_{2}(\zeta_{2}^{*})\right]^{-1},
\tag{\theequation b}\\
&A_{1}=1+\frac{i}{(\zeta_{1}^{*})^{2}+k_{0}^{2}}\varrho_{1}^{\dagger}
q_{+}^{\dagger}e^{-2i\theta(\zeta_{1}^{*})},\quad
c_{1}(\zeta_{1}^{*})=\frac{\varrho_{1}}{\zeta_{1}^{*}
-\zeta_{1}}e^{2i\theta(x,t;\zeta_{1})},\tag{\theequation c}\\
&A_{2}=1-\frac{i}{\zeta_{1}^{2}+k_{0}^{2}}q_{+}\varrho_{1}
e^{2i\theta(\zeta_{1})},\quad
c_{2}(\zeta_{2}^{*})=\frac{\zeta_{1}}{\zeta_{1}^{*}k_{0}^{2}(\zeta_{1}^{*}-\zeta_{1})}
q_{+}^{\dagger}\varrho_{1}^{\dagger}q_{+}^{\dagger}e^{-2i\theta(x,t;\zeta_{1}^{*})}.
\tag{\theequation d}
\end{align}
\begin{rem}
Similar to the study of nonlinear Schr\"{o}dinger equation \cite{Biondini-2014}, the Tajiri-Watanabe breather solution \cite{Tajiri-1998} will be obtained by taking the eigenvalue $\zeta_{1}\in D^{+}$. When the eigenvalue $\zeta_{1}$ is on the imaginary axis ($Im\zeta_{1}>k_{0}$), the Kuznetsov-Ma solution \cite{Kuznetsov-1977,Ma-1979} can be obtained. When the eigenvalue is on the circle $C_{0}$, the Akhmediev breather solution \cite{Akhmediev-1987} can be obtained.
\end{rem}
The dynamic behavior of the one-soliton solution can be expressed as Figure 3 by selecting appropriate parameters

{\rotatebox{0}{\includegraphics[width=3.6cm,height=3.0cm,angle=0]{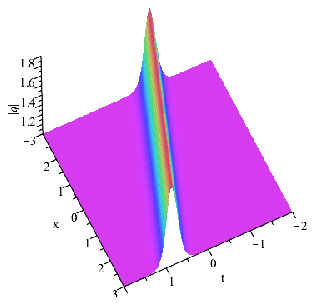}}}
\quad{\rotatebox{0}{\includegraphics[width=3.6cm,height=3.2cm,angle=0]{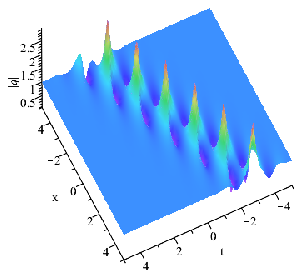}}}
\qquad{\rotatebox{0}{\includegraphics[width=3.6cm,height=3.0cm,angle=0]{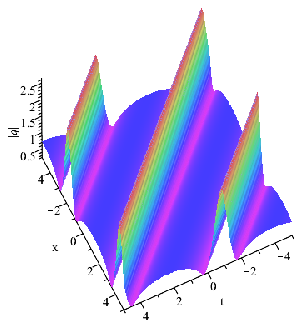}}}

$\qquad\quad\quad(\textbf{a1})\quad \ \qquad\quad\qquad\qquad\qquad(\textbf{b1})
~~\qquad\qquad\qquad\qquad\qquad(\textbf{c1})$\\

{\rotatebox{0}{\includegraphics[width=3.6cm,height=3.0cm,angle=0]{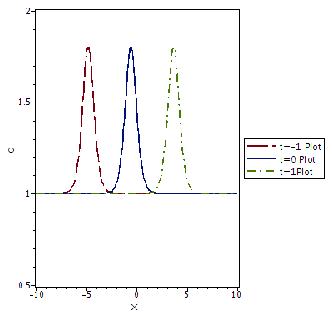}}}
\quad{\rotatebox{0}{\includegraphics[width=3.6cm,height=3.0cm,angle=0]{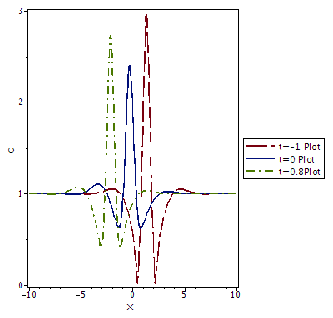}}}
\qquad{\rotatebox{0}{\includegraphics[width=3.6cm,height=3.0cm,angle=0]{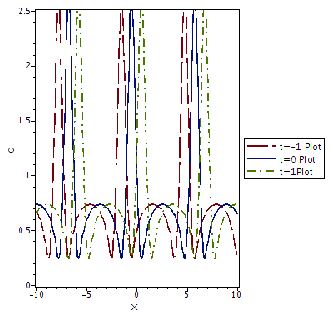}}}

$~\quad\quad\quad(\textbf{a2})\quad \ \quad\qquad\qquad\qquad\qquad(\textbf{b2})
~~\qquad\qquad\qquad\qquad\qquad(\textbf{c2})$\\

\noindent { \small \textbf{Figure 3.} $\textbf{(a1)}$ The bright soliton solution with the parameters $k_{0}=1$, $z_{1}=2i$, $\varrho_{1}=1$; $\textbf{(b1)}$ The breather solution with the parameters $k_{0}=1$, $z_{1}=1+\frac{3}{2}i$, $\varrho_{1}=1$; $\textbf{(c1)}$ The soliton solution with the parameters $k_{0}=1$, $z_{1}=\frac{1}{2}+\frac{\sqrt{3}}{2}i$, $\varrho_{1}=1$; $\textbf{(a1-c1)}$ The propagation view of corresponding solution at different times.}

For $\mathcal {N}=2$, the two-soliton solution can be given by the Eq.\eqref{Qjie}, i.e.,
\begin{align}\label{T59}
\begin{split}
q(x,t)=q_{+}&-ie^{-2i\theta(\zeta_{1}^{*})}X_{1}C_{1}^{\dagger}-
ie^{-2i\theta(\zeta_{2}^{*})}X_{2}C_{2}^{\dagger}\\&+\frac{iq_{+}C_{1}q_{+}}{\zeta_{1}^{2}}
e^{2i\theta(\zeta_{1})}X_{3}+\frac{iq_{+}C_{2}q_{+}}{\zeta_{2}^{2}}
e^{2i\theta(\zeta_{2})}X_{4},
\end{split}
\end{align}
where the $X_{i}$ $(i=1,2,3,4)$ are determined by
\addtocounter{equation}{1}
\begin{align*}
&X_{1}\left(1-\frac{i\zeta_{1}^{*}}{k_{0}^{2}}q_{+}c_{3}(\zeta_{1}^{*})\right)=
1-\frac{1}{\zeta_{1}}X_{3}q_{+}c_{1}(\zeta_{1}^{*})
-\frac{1}{\zeta_{2}}X_{4}q_{+}c_{2}(\zeta_{1}^{*})+
\frac{i\zeta_{2}^{*}}{k_{0}^{2}}X_{2}
q_{+}c_{4}(\zeta_{1}^{*}),\tag{\theequation a}\\
&X_{2}\left(1-\frac{i\zeta_{2}^{*}}{k_{0}^{2}}q_{+}c_{4}(\zeta_{2}^{*})\right)=
1-\frac{1}{\zeta_{1}}X_{3}q_{+}c_{1}(\zeta_{2}^{*})
-\frac{1}{\zeta_{2}}X_{4}q_{+}c_{2}(\zeta_{2}^{*})+
\frac{i\zeta_{1}^{*}}{k_{0}^{2}}X_{1}
q_{+}c_{3}(\zeta_{2}^{*}),\tag{\theequation b}\\
&X_{3}\left(1+\frac{1}{\zeta_{1}}q_{+}c_{1}(\zeta_{3}^{*})\right)=
1-\frac{1}{\zeta_{2}}X_{4}q_{+}c_{2}(\zeta_{3}^{*})+
\frac{i\zeta_{1}^{*}}{k_{0}^{2}}X_{1}q_{+}c_{3}(\zeta_{3}^{*})
+\frac{i\zeta_{2}^{*}}{k_{0}^{2}}X_{2}
q_{+}c_{4}(\zeta_{3}^{*}),\tag{\theequation c}\\
&X_{4}\left(1+\frac{1}{\zeta_{2}}q_{+}c_{2}(\zeta_{3}^{*})\right)=
1-\frac{1}{\zeta_{1}}X_{3}q_{+}c_{1}(\zeta_{4}^{*})+
\frac{i\zeta_{1}^{*}}{k_{0}^{2}}X_{1}q_{+}c_{3}(\zeta_{4}^{*})
+\frac{i\zeta_{2}^{*}}{k_{0}^{2}}X_{2}
q_{+}c_{4}(\zeta_{4}^{*}).\tag{\theequation d}
\end{align*}
The dynamic behavior of the two-soliton solution can be expressed as Figure 4 by selecting appropriate parameters

{\rotatebox{0}{\includegraphics[width=3.75cm,height=3.5cm,angle=0]{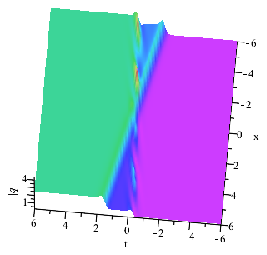}}}
\qquad\qquad\qquad\qquad
{\rotatebox{0}{\includegraphics[width=3.75cm,height=3.3cm,angle=0]{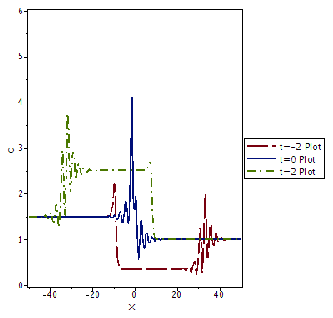}}}

 $\qquad\quad\qquad(\textbf{a})\qquad \ \qquad\qquad~\qquad\qquad \qquad\qquad\qquad(\textbf{b})$\\
\noindent {\small \textbf{Figure 4.} Propagation of the two-soliton solution  with the the parameters $\textbf{(a)}$ $k_{0}=1$, $z_{1}=2i$, $z_{2}=-2+i/2$, $C_{1}=C_{2}=1$. $\textbf{(b)}$ the wave propagation of the two-soliton solution with $t=-2$, $t=0$, $t=2$.}

\subsection{The potential is the symmetric matrix}
Similar to the scalar case, in order to obtain the soliton solution for the potential $Q$ is the $2\times2$ symmetric matrix, we need to accurately calculate the expressions of $X_{1}$ and $X_{2}$. Now for convenience, let's assume $Q_{+}=k_{0}I_{2}$, and from Eqs.($4.6$) and ($4.7$) we have
\addtocounter{equation}{1}
\begin{align}
&X_{1}=\Upsilon_{1} \Gamma_{1}^{-1},\quad X_{2}=\Upsilon_{2} \Gamma_{2}^{-1},
\tag{\theequation a}\\
&\Upsilon_{1}=I_{2}-h_{11}C_{1}-h_{12}I_{2},\quad
\Upsilon_{2}=I_{2}+h_{21}C_{1}-h_{22}I_{2},\tag{\theequation b}\\
&\Gamma_{1}=I_{2}+e^{-2i\theta{(\zeta_{1}^{*})}}\left(g_{11}C_{1}^{\dagger}
-g_{12}C_{1}^{\dagger}C_{1}\right),\tag{\theequation c}\label{T57}\\
&\Gamma_{2}=I_{2}+e^{2i\theta{(\zeta_{1})}}\left(g_{21}C_{1}
-g_{22}C_{1}C_{1}^{\dagger}\right),\tag{\theequation d}\label{T58}
\end{align}
where
\addtocounter{equation}{1}
\begin{align}
&h_{11}=\frac{ik_{0}}{\zeta_{1}(\zeta_{1}^{*}-\zeta_{1})}
\frac{e^{2i\theta{(\zeta_{1})}}}{\varpiup_{1}^{*}},\quad
h_{12}=\frac{k_{0}^{2}\det C_{1}}{\zeta_{1}(\zeta_{1}^{*}-\zeta_{1})
(\zeta_{1}^{2}+k_{0}^{2})}\frac{e^{4i\theta{(\zeta_{1})}}}{\varpiup_{1}^{*}},
\tag{\theequation a}\\
&h_{21}=\frac{i\zeta_{1}}{k_{0}(\zeta_{1}^{*}-\zeta_{1})}\frac
{e^{-2i\theta{(\zeta_{1}^{*})}}}{\varpiup_{1}},\quad
h_{22}=\frac{\zeta_{1}\det C_{1}^{\dagger}}{(\zeta_{1}^{*}-\zeta_{1})
((\zeta_{1}^{*})^{2}+k_{0}^{2})}\frac{e^{-4i\theta{(\zeta_{1}^{*})}}}{\varpiup_{1}},
\tag{\theequation b}\\
&g_{11}=\frac{ik_{0}}{(\zeta_{1}^{*})^{2}+k_{0}^{2}}\left(
1+\frac{(\zeta_{1}^{*})^{2}+k_{0}^{2}}{\zeta_{1}^{2}+k_{0}^{2}}
\frac{\det C_{1}}{(\zeta_{1}^{*}-\zeta_{1})^{2}}\frac{e^{4i\theta{(\zeta_{1})}}}
{\varpiup_{1}^{*}}\right),\tag{\theequation c}\\
&g_{21}=-\frac{ik_{0}}{\zeta_{1}^{2}+k_{0}^{2}}\left(
1+\frac{\zeta_{1}^{2}+k_{0}^{2}}{(\zeta_{1}^{*})^{2}+k_{0}^{2}}
\frac{\det C_{1}^{\dagger}}{(\zeta_{1}^{*}-\zeta_{1})^{2}}
\frac{e^{-4i\theta{(\zeta_{1}^{*})}}}{\varpiup_{1}^{*}}\right),
\tag{\theequation d}\\
&g_{12}=\frac{1}{(\zeta_{1}^{*}-\zeta_{1})^{2}}
\frac{e^{2i\theta{(\zeta_{1})}}}{\varpiup_{1}^{*}},\quad
g_{22}=\frac{1}{(\zeta_{1}^{*}-\zeta_{1})^{2}}
\frac{e^{-2i\theta{(\zeta_{1}^{*})}}}{\varpiup_{1}},\tag{\theequation e}
\end{align}
with
\addtocounter{equation}{1}
\begin{align}
&A_{1}^{-1}=\frac{1}{\varpiup_{1}}\left(I_{2}+
\frac{ik_{0}}{(\zeta_{1}^{*})^{2}+k_{0}^{2}}e^{-2i\theta
{(\zeta_{1}^{*})}}\text{cof}(C_{1}^{\dagger})\right),\tag{\theequation a}\\
&A_{2}^{-1}=\frac{1}{\varpiup_{1}^{*}}\left(I_{2}-
\frac{ik_{0}}{\zeta_{1}^{2}+k_{0}^{2}}e^{2i\theta
{(\zeta_{1})}}\text{cof}(C_{1})\right),\tag{\theequation b}\\
&\varpiup_{1}=\det A_{1}=1+\frac{ik_{0}}{(\zeta_{1}^{*})^{2}+k_{0}^{2}}
e^{-2i\theta{(\zeta_{1}^{*})}}trace(C_{1}^{\dagger})-\frac{k_{0}^{2}}
{((\zeta_{1}^{*})^{2}+k_{0}^{2})}e^{-4i\theta
{(\zeta_{1}^{*})}}\det C_{1}^{\dagger}.\tag{\theequation c}
\end{align}
Obviously when the matrix $C_{1}$ is rank $1$, i.e., $\det C_{1}=0$, the above equations can be written as
\addtocounter{equation}{1}
\begin{align}
&\Upsilon_{1}=I_{2}-h_{11}C_{1},\quad \Upsilon_{2}=I_{2}+h_{21}C_{1}^{\dagger},
\tag{\theequation a}\\
&\Gamma_{1}=I_{2}+e^{-2i\theta{(\zeta_{1}^{*})}}\left(
\frac{ik_{0}}{(\zeta_{1}^{*})^{2}+k_{0}^{2}}C_{1}^{\dagger}
-g_{12}C_{1}^{\dagger}C_{1}\right), \tag{\theequation b}\\
&\Gamma_{2}=I_{2}+e^{2i\theta{(\zeta_{1})}}\left(-
\frac{ik_{0}}{(\zeta_{1})^{2}+k_{0}^{2}}C_{1}-g_{22}C_{1}C_{1}^{\dagger}\right).
\tag{\theequation c}
\end{align}

The various soliton solutions can be obtained by using the above expressions.
In Fig. 5, for the general discrete eigenvalue $z_{1}\in D^{+}$, $(a1-c1)$ show a soliton solution of   Eq.\eqref{T1} with the $2\times2$ symmetric matrix $Q$ Eq.\eqref{Q} when the norming  constant is a full rank matrix, and $(a2-c2)$ are the propagation views of the corresponding soliton solution at different times. The difference between Fig. 6 and Fig. 5 is that the norming  constant is not a full rank matrix but is equal to one. In Appendix D, we discuss the large $x$ asymptotic behavior when the  norming  constant $\det C_{1}=0$ and $\det C_{1}\neq0$.

{\rotatebox{0}{\includegraphics[width=3.6cm,height=3.0cm,angle=0]{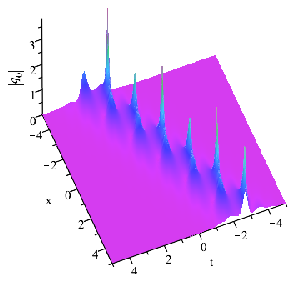}}}
\quad{\rotatebox{0}{\includegraphics[width=3.6cm,height=3.2cm,angle=0]{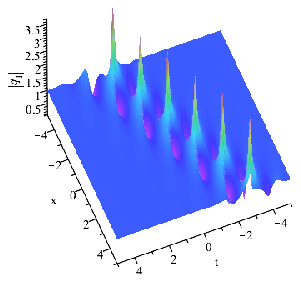}}}
\qquad{\rotatebox{0}{\includegraphics[width=3.6cm,height=3.0cm,angle=0]{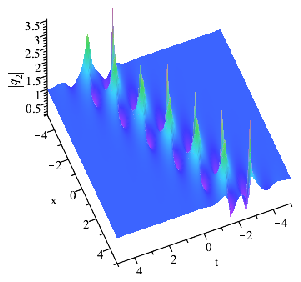}}}

$\qquad\quad\quad(\textbf{a1})\quad \ \qquad\qquad\qquad\qquad\qquad(\textbf{b1})
~~\qquad\qquad\qquad\qquad\qquad(\textbf{c1})$\\

{\rotatebox{0}{\includegraphics[width=3.6cm,height=3.0cm,angle=0]{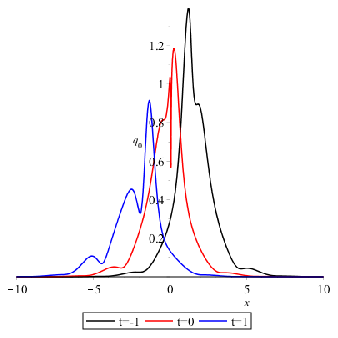}}}
\quad{\rotatebox{0}{\includegraphics[width=3.6cm,height=3.2cm,angle=0]{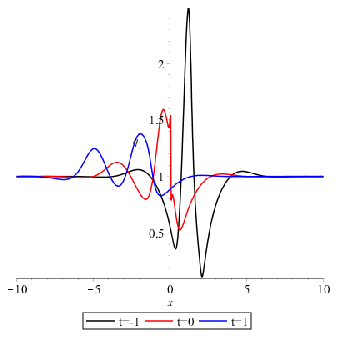}}}
\qquad{\rotatebox{0}{\includegraphics[width=3.6cm,height=3.0cm,angle=0]{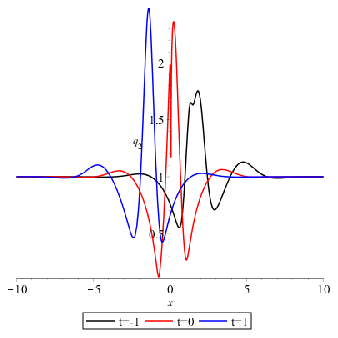}}}

$\qquad\quad\quad(\textbf{a2})\quad \ \qquad\qquad\qquad\qquad\qquad(\textbf{b2})
~~\qquad\qquad\qquad\qquad\qquad(\textbf{c2})$\\

\noindent { \small \textbf{Figure 5.} $\textbf{(a1-c1)}$ The breather wave of the solution Eq.\eqref{T54}, i.e., three components $(q_{0},q_{1},q_{2})$ with $Q_{+}=I_{2}$, $\zeta_{1}=1+3i/2$, $\varrho_{1}=\varrho_{0}\equiv1$, $\varrho_{2}=3$, then $\det C_{1}\neq0$. $\textbf{(a2-c2)}$ The propagation view of soliton solution in different time.}

{\rotatebox{0}{\includegraphics[width=3.6cm,height=3.0cm,angle=0]{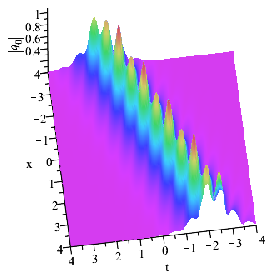}}}
\quad{\rotatebox{0}{\includegraphics[width=3.6cm,height=3.2cm,angle=0]{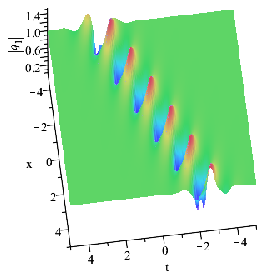}}}
\qquad{\rotatebox{0}{\includegraphics[width=3.6cm,height=3.0cm,angle=0]{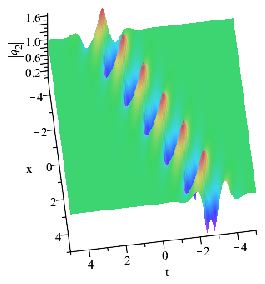}}}

$\qquad\quad\quad(\textbf{a1})\quad \ \qquad\qquad\qquad\qquad\qquad(\textbf{b1})
~~\qquad\qquad\qquad\qquad\qquad(\textbf{c1})$\\

{\rotatebox{0}{\includegraphics[width=3.6cm,height=3.0cm,angle=0]{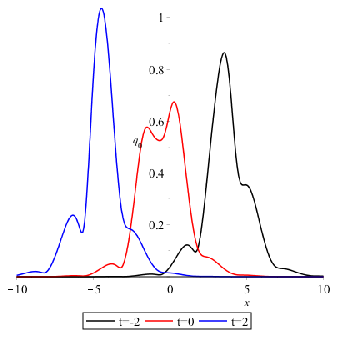}}}
\quad{\rotatebox{0}{\includegraphics[width=3.6cm,height=3.2cm,angle=0]{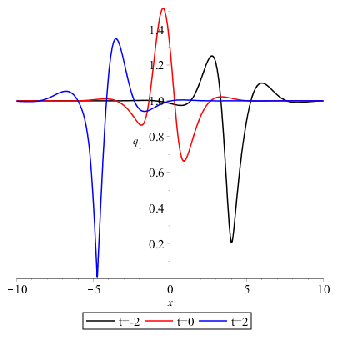}}}
\qquad{\rotatebox{0}{\includegraphics[width=3.6cm,height=3.0cm,angle=0]{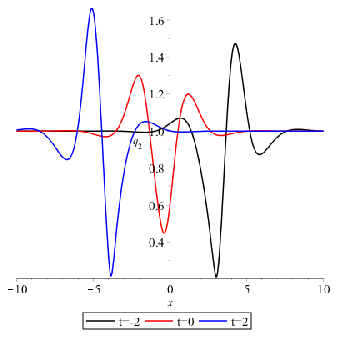}}}

$\qquad\quad\quad(\textbf{a2})\quad \ \qquad\qquad\qquad\qquad\qquad(\textbf{b2})
~~\qquad\qquad\qquad\qquad\qquad(\textbf{c2})$\\

\noindent { \small \textbf{Figure 6.} $\textbf{(a1-c1)}$ The breather wave of the solution Eq.\eqref{T54}, i.e., three components $(q_{0},q_{1},q_{2})$ with $Q_{+}=I_{2}$, $\zeta_{1}=1+3i/2$, $\varrho_{1}=1,$ $\varrho_{0}=i$, $\varrho_{2}=-1$, then $\det C_{1}=0$. $\textbf{(a2-c2)}$ the propagation view of soliton solution in different time.}

In Figs 7 and 8, we present the soliton solutions when the discrete eigenvalue $z_{1}\in D^{+}$  is purely imaginary ($z_{1}=iZ, Z>k_{0}$). Similar to the Figs 5 and 6, the two cases are discussed including $\det C_{1}=0$ and $\det C_{1}\neq0$.

{\rotatebox{0}{\includegraphics[width=3.6cm,height=3.0cm,angle=0]{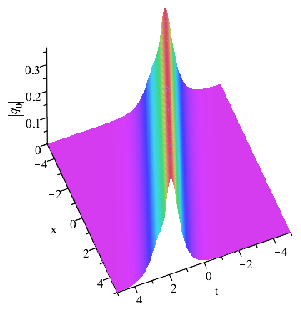}}}
\quad{\rotatebox{0}{\includegraphics[width=3.6cm,height=3.2cm,angle=0]{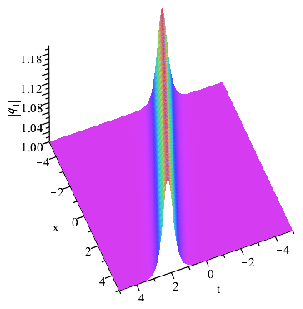}}}
\qquad{\rotatebox{0}{\includegraphics[width=3.6cm,height=3.0cm,angle=0]{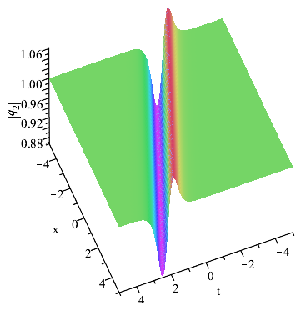}}}

$\qquad\quad\quad(\textbf{a1})\quad \ \qquad\qquad\qquad\qquad\qquad(\textbf{b1})
~~\qquad\qquad\qquad\qquad\qquad(\textbf{c1})$\\

{\rotatebox{0}{\includegraphics[width=3.6cm,height=3.0cm,angle=0]{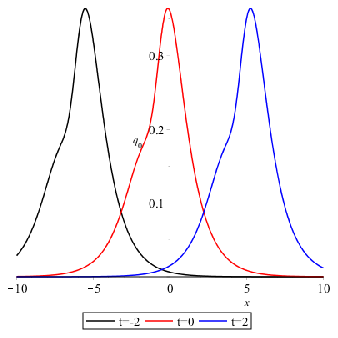}}}
\quad{\rotatebox{0}{\includegraphics[width=3.6cm,height=3.2cm,angle=0]{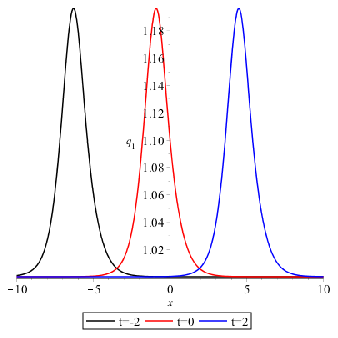}}}
\qquad{\rotatebox{0}{\includegraphics[width=3.6cm,height=3.0cm,angle=0]{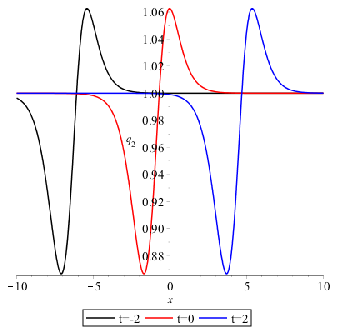}}}

$\qquad\quad\quad(\textbf{a2})\quad \ \qquad\qquad\qquad\qquad\qquad(\textbf{b2})
~~\qquad\qquad\qquad\qquad\qquad(\textbf{c2})$\\

\noindent { \small \textbf{Figure 7.} $\textbf{(a1-b1)}$ The bright soliton solution, $\textbf{(c1)}$ the bright-dark soliton solution  i.e., three components $(q_{0},q_{1}$ $q_{2})$ with $Q_{+}=I_{2}$, $\zeta_{1}=3i/2$, $\varrho_{1}=1,$ $\varrho_{0}=i$, $\varrho_{2}=-1$, then $\det C_{1}=0$. $\textbf{(a2-c2)}$ the propagation view of soliton solution in different time.}

{\rotatebox{0}{\includegraphics[width=3.6cm,height=3.0cm,angle=0]{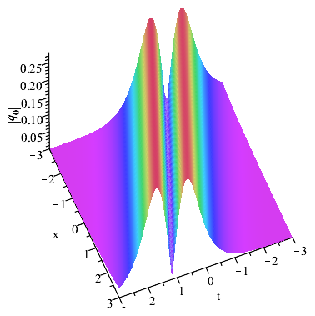}}}
\quad{\rotatebox{0}{\includegraphics[width=3.6cm,height=3.2cm,angle=0]{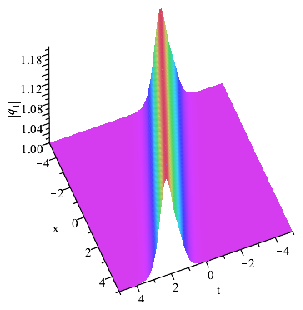}}}
\qquad{\rotatebox{0}{\includegraphics[width=3.6cm,height=3.0cm,angle=0]{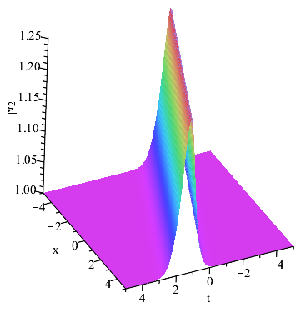}}}

$\qquad\quad\quad(\textbf{a1})\quad \ \qquad\qquad\qquad\qquad\qquad(\textbf{b1})
~~\qquad\qquad\qquad\qquad\qquad(\textbf{c1})$\\

{\rotatebox{0}{\includegraphics[width=3.6cm,height=3.0cm,angle=0]{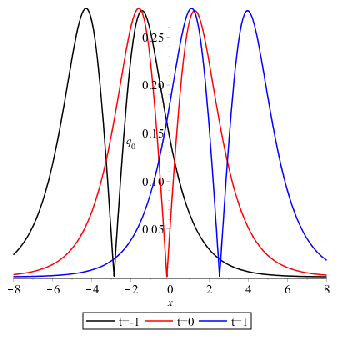}}}
\quad{\rotatebox{0}{\includegraphics[width=3.6cm,height=3.2cm,angle=0]{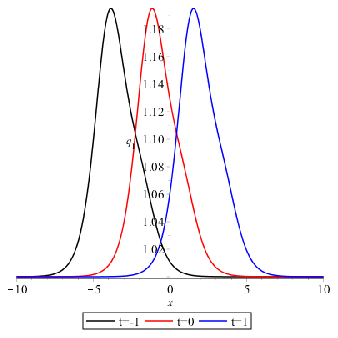}}}
\qquad{\rotatebox{0}{\includegraphics[width=3.6cm,height=3.0cm,angle=0]{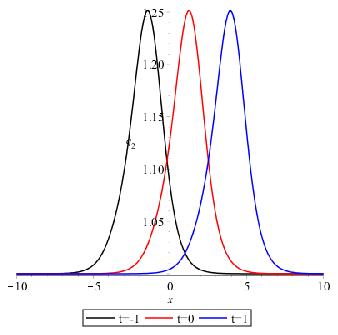}}}

$\qquad\quad\quad(\textbf{a2})\quad \ \qquad\qquad\qquad\qquad\qquad(\textbf{b2})
~~\qquad\qquad\qquad\qquad\qquad(\textbf{c2})$\\
\noindent { \small \textbf{Figure 8.} $\textbf{(a1)}$ The $M$-type soliton solution, $\textbf{(b1-c1)}$ the bright soliton solution, i.e.,  three components $(q_{0},q_{1},q_{2})$ with $Q_{+}=I_{2}$, $\zeta_{1}=3i/2$, $\varrho_{1}=1,$ $\varrho_{0}=1$, $\varrho_{2}=2$, then $\det C_{1}\neq0$. $\textbf{(a2-c2)}$ the propagation view of soliton solution in different time.}

In what follows, the soliton solution of   Eq.\eqref{T1} will be derived based on   Eq.\eqref{T59} for $\mathcal {N}=2$ shown in Figure 9.
Note that the $X_{i}$ $(i=1,2,3,4)$ are determined by
\addtocounter{equation}{1}
\begin{align}
&X_{1}\left(1-\frac{i\zeta_{1}^{*}}{k_{0}^{2}}Q_{+}c_{3}(\zeta_{1}^{*})\right)=
1-\frac{X_{3}Q_{+}}{\zeta_{1}}c_{1}(\zeta_{1}^{*})
-\frac{X_{4}Q_{+}}{\zeta_{2}}c_{2}(\zeta_{1}^{*})+
\frac{iX_{2}Q_{+}\zeta_{2}^{*}}{k_{0}^{2}}c_{4}(\zeta_{1}^{*}),
\tag{\theequation a}\\
&X_{2}\left(1-\frac{i\zeta_{2}^{*}}{k_{0}^{2}}Q_{+}c_{4}(\zeta_{2}^{*})\right)=
1-\frac{X_{3}Q_{+}}{\zeta_{1}}c_{1}(\zeta_{2}^{*})
-\frac{X_{4}Q_{+}}{\zeta_{2}}c_{2}(\zeta_{2}^{*})+
\frac{iX_{1}Q_{+}\zeta_{1}^{*}}{k_{0}^{2}}c_{3}(\zeta_{2}^{*}),
\tag{\theequation b}\\
&X_{3}\left(1+\frac{1}{\zeta_{1}}Q_{+}c_{1}(\zeta_{3}^{*})\right)=
1-\frac{X_{4}Q_{+}}{\zeta_{2}}c_{2}(\zeta_{3}^{*})+
\frac{iX_{1}Q_{+}\zeta_{1}^{*}}{k_{0}^{2}}c_{3}(\zeta_{3}^{*})
+\frac{iX_{2}Q_{+}\zeta_{2}^{*}}{k_{0}^{2}}c_{4}(\zeta_{3}^{*}),
\tag{\theequation c}\\
&X_{4}\left(1+\frac{1}{\zeta_{2}}q_{+}c_{2}(\zeta_{3}^{*})\right)=
1-\frac{X_{3}Q_{+}}{\zeta_{1}}c_{1}(\zeta_{4}^{*})+
\frac{iX_{1}Q_{+}\zeta_{1}^{*}}{k_{0}^{2}}c_{3}(\zeta_{4}^{*})
+\frac{iX_{2}Q_{+}\zeta_{2}^{*}}{k_{0}^{2}}c_{4}(\zeta_{4}^{*}).
\tag{\theequation d}
\end{align}
with
\begin{align*}
&c_{1}(\zeta_{1}^{*})=\frac{C_{1}}{\zeta_{1}^{*}-\zeta_{1}}e^{2i\theta(\zeta_{1})},
\quad
c_{1}(\zeta_{2}^{*})=\frac{C_{1}}{\zeta_{1}^{*}-\zeta_{1}}e^{2i\theta(\zeta_{1})},\\
&c_{1}(\zeta_{3}^{*})=-\frac{\zeta_{1}C_{1}}{k_{0}^{2}+\zeta_{1}^{2}}
e^{2i\theta(\zeta_{1})},\quad
c_{1}(\zeta_{4}^{*})=-\frac{\zeta_{2}C_{1}}{k^{2}_{1}+\zeta_{1}\zeta_{2}}
e^{2i\theta(\zeta_{1})},\\
&c_{2}(\zeta_{1}^{*})=\frac{C_{2}}{\zeta_{1}^{*}-\zeta_{2}}e^{2i\theta(\zeta_{2})},
\quad
c_{2}(\zeta_{2}^{*})=\frac{C_{2}}{\zeta_{2}^{*}-\zeta_{2}}e^{2i\theta(\zeta_{2})},\\
&c_{2}(\zeta_{3}^{*})=-\frac{\zeta_{1}C_{2}}{k^{2}_{0}+\zeta_{1}\zeta_{2}}
e^{2i\theta(\zeta_{2})},\qquad
c_{2}(\zeta_{4}^{*})=-\frac{\zeta_{2}C_{2}}{k^{2}_{1}+\zeta_{2}^{2}}
e^{2i\theta(\zeta_{2})},\\
&c_{3}(\zeta_{1}^{*})=-\frac{Q_{+}^{\dagger}C_{1}^{\dagger}Q_{+}^{\dagger}}
{\zeta_{1}^{*}[k^{2}_{0}+(\zeta_{1}^{*})^{2}]}e^{-2i\theta(\zeta_{1}^{*})},\quad
c_{3}(\zeta_{2}^{*})=-\frac{Q_{+}^{\dagger}C_{1}^{\dagger}Q_{+}^{\dagger}}
{\zeta_{1}^{*}[k^{2}_{0}+\zeta_{1}^{*}\zeta_{2}^{*}]}e^{-2i\theta(\zeta_{1}^{*})},\\
&c_{3}(\zeta_{3}^{*})=\frac{\zeta_{1}Q_{+}^{\dagger}C_{1}^{\dagger}Q_{+}^{\dagger}}
{\zeta_{1}^{*}k^{2}_{0}(\zeta_{1}^{*}-\zeta_{1})}e^{-2i\theta(\zeta_{1}^{*})},\quad
c_{3}(\zeta_{4}^{*})=\frac{\zeta_{2}Q_{+}^{\dagger}C_{1}^{\dagger}Q_{+}^{\dagger}}
{\zeta_{1}^{*}k^{2}_{0}(\zeta_{1}^{*}-\zeta_{2})}e^{-2i\theta(\zeta_{1}^{*})},
\end{align*}
\begin{align*}
&c_{4}(\zeta_{1}^{*})=-\frac{Q_{+}^{\dagger}C_{2}^{\dagger}Q_{+}^{\dagger}}
{\zeta_{2}^{*}[k^{2}_{0}+\zeta_{1}^{*}\zeta_{2}^{*}]}e^{-2i\theta(\zeta_{2}^{*})},\quad
c_{4}(\zeta_{2}^{*})=-\frac{Q_{+}^{\dagger}C_{2}^{\dagger}Q_{+}^{\dagger}}
{\zeta_{2}^{*}[k^{2}_{0}+(\zeta_{2}^{*})^{2}]}e^{-2i\theta(\zeta_{2}^{*})},\\
&c_{4}(\zeta_{3}^{*})=\frac{\zeta_{1}Q_{+}^{\dagger}C_{2}^{\dagger}Q_{+}^{\dagger}}
{\zeta_{2}^{*}k^{2}_{0}(\zeta_{2}^{*}-\zeta_{1})}e^{-2i\theta(\zeta_{2}^{*})},\quad
c_{4}(\zeta_{4}^{*})=\frac{\zeta_{2}Q_{+}^{\dagger}C_{1}^{\dagger}Q_{+}^{\dagger}}
{\zeta_{2}^{*}k^{2}_{0}(\zeta_{2}^{*}-\zeta_{2})}e^{-2i\theta(\zeta_{2}^{*})}.
\end{align*}
Now we just discuss the special case for the norming constants $C_{1}=\left(\begin{array}{cc}
0 & 1\\
1 & 0
\end{array}\right)$ and $C_{2}=\left(\begin{array}{cc}
1 & 0\\
0 & 0
\end{array}\right)$, and give the soliton solution $q_{1}$ as example for $\mathcal {N}=2$ shown in Fig 9 due to under the conditions $C_{1}$ and $C_{2}$, the solution of the equations about the $q_{0}$ and $q_{2}$ are  not related to eigenvalue $z_{2}$, namely

{\rotatebox{0}{\includegraphics[width=3.75cm,height=3.5cm,angle=0]{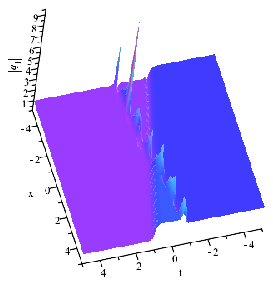}}}
\qquad\qquad\qquad\qquad
{\rotatebox{0}{\includegraphics[width=3.75cm,height=3.3cm,angle=0]{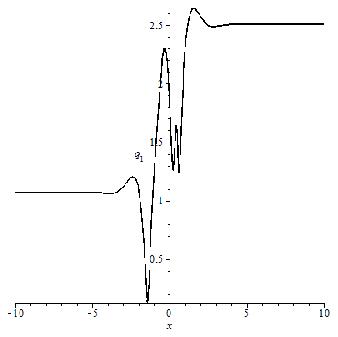}}}

 $\qquad\quad\qquad(\textbf{a})\qquad \ \qquad\qquad\quad\qquad\qquad \qquad\qquad\qquad(\textbf{b})$\\
\noindent {\small \textbf{Figure 9.} Propagation of the solution  with the the parameters  $k_{0}=1$, $z_{1}=-2i$, $z_{2}=2+2i$. $\textbf{(a)}$ the two-soliton solution. $\textbf{(b)}$ the wave propagation of the two-soliton solution with  $t=0$.}

\section{Conclusions and discussions}
We have systematically established the inverse scattering theory of the focusing and defocusing mmKdV equation with nonzero boundary conditions at infinity. In the direct scattering problem,the properties of the Jost eigenfunctions and  the scattering matrix have established, including analytical, asymptotic and symmetrical. According to the analysis of Jost eigenfunctions and scattering data, we have introduced the meromorphic functions to establish a suitable RH problem. We have used Cauchy projection operator and Plemelj's formula to solve this RH problem to reconstruct the modified eigenfunction, and then used these results and the asymptotic behaviours of the modified eigenfunctions to construct the potential function of mmKdV equation. Finally, we have given the exact solution of the mmKdV equation with no reflection.

In order to study the dynamic behavior of the solution of the mmKdV equation, we have studied in detail in two cases, including the scalar form and the potential function is a $2\times2$ symmetric matrix. In addition, we have analyzed that the value of NZBCs at infinity can be arbitrarily selected under the condition that the Eq.\eqref{T1} is satisfied.

\section*{Acknowledgements}
This work was supported by the Natural Science Foundation of Jiangsu Province under Grant No. BK20181351, the Qinglan Project of Jiangsu Province of China, the National Natural Science Foundation of China under Grant No. 11975306, and the Six Talent Peaks Project in Jiangsu Province under Grant No. JY-059, the Fundamental Research Fund for the Central Universities under the Grant Nos. 2019ZDPY07 and 2019QNA35, and the Future Outstanding Talent Assistance project for Postgraduate Research \& Practice Innovation Program of Jiangsu Province under Grant No.
2020WLJCRCZL031.

\section*{Appendix A. Proof of  Eqs 4.11:}
From the Eqs.\eqref{T55}-\eqref{T56}, one has
\begin{align*}
X_{1}=\left[I_{2}-\frac{i}{\zeta_{1}}A_{2}^{-1}Q_{+}c_{1}(\zeta_{1}^{*})\right]
\left[A_{1}-\frac{\zeta_{1}^{*}}{\zeta_{1}k_{0}^{2}}Q_{+}
c_{2}(\zeta_{2}^{*})A_{2}^{-1}Q_{+}c_{1}(\zeta_{1}^{*})\right]^{-1}\triangleq
\Upsilon_{1} \Gamma_{1}^{-1},
\end{align*}
where
\begin{align*}
\Upsilon_{1}&=I_{2}-\frac{i}{\zeta_{1}}A_{2}^{-1}Q_{+}c_{1}(\zeta_{1}^{*})\\&=
I_{2}-\frac{i}{\zeta_{1}}\frac{1}{\varpiup_{1}^{*}}\left(I_{2}-
\frac{ik_{0}}{\zeta_{1}^{2}+k_{0}^{2}}e^{2i\theta
{(\zeta_{1})}}\text{cof}(C_{1})\right)k_{0}I_{2}\frac{C_{1}}{\zeta_{1}^{*}
-\zeta_{1}}e^{2i\theta(\zeta_{1})}\\&=
I_{2}-\frac{ik_{0}}{\zeta_{1}(\zeta_{1}^{*}-\zeta_{1})}
\frac{e^{2i\theta{(\zeta_{1})}}}{\varpiup_{1}^{*}}C_{1}-\frac{k_{0}^{2}\det C_{1} }{\zeta_{1}(\zeta_{1}^{*}-\zeta_{1})
(\zeta_{1}^{2}+k_{0}^{2})}\frac{e^{4i\theta{(\zeta_{1})}}}{\varpiup_{1}^{*}}I_{2},\\
\Gamma_{1}&=A_{1}-\frac{\zeta_{1}^{*}}{\zeta_{1}k_{0}^{2}}Q_{+}
c_{2}(\zeta_{2}^{*})A_{2}^{-1}Q_{+}c_{1}(\zeta_{1}^{*})\\&=
I_{2}+\frac{ik_{0}^{2}C_{1}^{\dagger}}{(\zeta_{1}^{*})^{2}+k_{0}^{2}}
e^{-2i\theta(\zeta_{1}^{*})}I_{2}-\frac{C_{1}^{\dagger}e^{-2i\theta(\zeta_{1}^{*})}}
{\zeta_{1}^{*}-\zeta_{1}}I_{2}\frac{1}{\varpiup_{1}^{*}}\left(I_{2}-
\frac{ik_{0}}{\zeta_{1}^{2}+k_{0}^{2}}e^{2i\theta
{(\zeta_{1})}}\text{cof}(C_{1})\right)\frac{C_{1}}{\zeta_{1}^{*}
-\zeta_{1}}e^{2i\theta(\zeta_{1})}\\&=
I_{2}+C_{1}^{\dagger}e^{-2i\theta(\zeta_{1}^{*})}\left[
\frac{ik_{0}}{(\zeta_{1}^{*})^{2}+k_{0}^{2}}\left(
1+\frac{(\zeta_{1}^{*})^{2}+k_{0}^{2}}{\zeta_{1}^{2}+k_{0}^{2}}
\frac{\det C_{1}}{(\zeta_{1}^{*}-\zeta_{1})^{2}}\frac{e^{4i\theta{(\zeta_{1})}}}
{\varpiup_{1}^{*}}\right)I_{2}-\frac{1}{(\zeta_{1}^{*}-\zeta_{1})^{2}}
\frac{e^{2i\theta{(\zeta_{1})}}}{\varpiup_{1}^{*}}C_{1}\right].
\end{align*}
The $X_{2}$ can be proved in the same way.

\section*{Appendix B. The boundary condition $Q_{+}$}
In this Appendix, we will explain the reason why the boundary value condition $Q_{+}$ can be chosen $k_{0}I_{m}$ as $x\rightarrow+\infty$ generally.

Note that whether the solution $Q(x,t)$ of the Eq.\eqref{T1} is left-multiplication or right-multiplication by an arbitrary constant unitary matrix, it is still the solution of the Eq.\eqref{T1}. Thus for the potential $Q$ with an arbitrary condition $Q_{+}$ as $x\rightarrow+\infty$, without loss of generality we can introduce a new potential $\tilde{Q}=BQC$ satisfying $\tilde{Q}=k_{0}I_{m}$, where $B$ and $C$ are the arbitrary constant unitary matrix. It's not difficult to verify that $\tilde{Q}$ is the solution of the Eq.\eqref{T1} and means that $BQC=k_{0}I_{m}$. Obviously one has
\begin{align*}
\tilde{Q}(x,t)=BQ(x,t)(Q_{+}^{\dagger}/k_{0})B^{\dagger}.
\end{align*}
In addition, the new potential $\tilde{Q}$ is symmetric matrix, which leads to
\begin{equation*}
Q(x,t)=B^{\dagger}B^{*}(Q_{+}^{\dagger}/k_{0})Q(x,t)B^{T}B(Q_{+}/k_{0}).\eqno(\text{B}1)
\end{equation*}
Note $(\text{B}1)$ is satisfied when
\begin{align*}
B^{T}B=Q_{+}^{\dagger}/k_{0}
\end{align*}
and the Takagi's factorization algorithm can guarantee the existence of $B$.

\section*{Appendix C. The trace formula and theta condition}
The so-called trace formula is to use the scattering data including the discrete eigenvalues and reflection coefficients to represent the scattering coefficients  for the scalar equation as shown in Ref.\cite{ZS-1972}. Moreover, when under non-zero boundary conditions, the trace formula can also show the asymptotic phase difference of the potential function and the scattering data, that is, theta condition in Ref.\cite{Faddeev-1987,Biondini-2014}. For vector or matrix type partial differential equations (PDE), there are similar trace formula  and theta condition, but it should be noted that for matrix PDE, reconstruction of the scattering data $a(z)$ and $\bar{a}(z)$ is more difficult based on this fact  the Eqs.\eqref{SSS1} and \eqref{T34} need  to be decomposed. Therefore we next derive the trace formula for $\det a(z)$, which gives a weak version of theta condition for the focusing modified KdV equation. The defocusing modified KdV equation can be shown in a similar way.

 \textbf{Simple zeros:}  Assume that the set Eq.\eqref{2.76a} is the simple zeros of $\det a(z)$. Recalling the \textbf{Proposition} $2.4$, one has that $z=z_{n}$ and $z=-k_{0}^{2}/z_{n}^{*}$ are the zeros of $\det a(z)$, which is analytic in $D^{+}$, and that $z=z_{n}^{*}$ and $z=-k_{0}^{2}/z_{n}$ are the zeros of $\det \bar{a}(z)$, which is analytic in $D^{-}$. Then we introduce
\begin{align}
\vartheta^{+}(z)=\det a(z)\prod_{n=1}^{\mathcal {N}}\frac{(z-z_{n}^{*})
(z+q_{0}^{2}/z_{n})}{(z-z_{n})(z+q_{0}^{2}/z_{n}^{*})},\\
\vartheta^{-}(z)=\det\bar{a}(z)\prod_{n=1}^{\mathcal {N}}\frac{(z-z_{n})
(z+q_{0}^{2}/z_{n}^{*})}{(z-z_{n}^{*})(z+q_{0}^{2}/z_{n})},
\end{align}
which are analytic in $D^{+}$ and $D^{-}$ respectively, and have no zeros  point in their respective regions $D^{\pm}$. Note that $\det S(z)=1$ and the asymptotic behavior of $S(z)$ as $z\rightarrow\infty$ we have
\begin{align*}
\vartheta^{+}(z)\vartheta^{-}(z)=\det\left(I_{m}+
\rho^{\dagger}(z)\rho(z)\right)^{-1},\quad z\in\Sigma,
\end{align*}
which can be written as a scalar RH problem
\begin{align*}
\log\vartheta^{+}(z)-\log(1/\vartheta^{-}(z))=-\log
\det\left(I_{m}+\rho^{\dagger}(z)\rho(z)\right),\quad z\in\Sigma.
\end{align*}
Combining \textbf{Lemma} 3.2, we can get the  solution of the RH problem
\begin{align*}
\log \vartheta^{\pm}(z)=\mp\frac{1}{2\pi i}
\int_{\Sigma}\frac{\det\left(I_{m}+\rho^{\dagger}(z)\rho(z)\right)}{\zeta-z}
d\zeta.
\end{align*}
Then the trace formula can be given
\begin{equation*}
\det a(z)=exp\left(-\frac{1}{2\pi i}
\int_{\Sigma}\frac{\det\left(I_{m}+\rho^{\dagger}(z)\rho(z)\right)}{\zeta-z}
d\zeta\right)\prod_{n=1}^{\mathcal {N}}\frac{(z-z_{n}^{*})
(z+q_{0}^{2}/z_{n})}{(z-z_{n})(z+q_{0}^{2}/z_{n}^{*})}.
\end{equation*}
Using the asymptotic behavior of the scattering matrix Eq.\eqref{SJJ} as $z\rightarrow0$ one has that
\begin{align*}
\det a(z)=\frac{1}{k_{0}^{2m}}\det Q_{+}\det Q_{-}^{\dagger},
\end{align*}
and that
\begin{align*}
\det Q_{+}\det Q_{-}^{\dagger}={k_{0}^{2m}}exp\left(-\frac{1}{2\pi i}
\int_{\Sigma}\frac{\det\left(I_{m}+\rho^{\dagger}(z)\rho(z)\right)}{\zeta}
d\zeta\right)\prod_{n=1}^{\mathcal {N}}e^{4i\delta_{n}},
\end{align*}
where the $\delta_{n}$ represents the phase of $z_{n}$, namely $z_{n}=|z_{n}|e^{i\delta_{n}}$.

In addition from the condition Eq.\eqref{T3}, we known that $\det Q_{\pm}=k_{0}^{m}$. When we take the notation
\begin{align*}
\det Q_{+}=k_{0}^{m}e^{i\theta_{+}},\quad
\det Q_{-}=k_{0}^{m}e^{i\theta_{-}},
\end{align*}
then the $\theta$-condition can be derived
\begin{align*}
\det Q_{+}\det Q_{-}^{\dagger}&=k_{0}^{2m}e^{i\theta_{+}}e^{-i\theta_{-}}\\&=
k_{0}^{2m}exp\left(-\frac{1}{2\pi i}
\int_{\Sigma}\frac{\det\left(I_{m}+\rho^{\dagger}(z)\rho(z)\right)}{\zeta}
d\zeta\right)\prod_{n=1}^{\mathcal {N}}e^{4i\delta_{n}},
\end{align*}
which can reduce to
\begin{align*}
\theta_{+}-\theta_{-}=\frac{1}{2\pi}\int_{\Sigma}\det\left(I_{m}+
\rho^{\dagger}(z)\rho(z)\right)\frac{d\zeta}{\zeta}+4\sum_{n=1}^{\mathcal {N}}\delta_{n}
\end{align*}

 \textbf{Double zeros:}   Assume that the discrete spectrum are the double zeros, which means that $\det a(z_{n})=\det a'(z_{n})=0$ and $\det a''(z_{n})\neq0$. Similar to the simple zeros case, introducing two analytic functions  $\vartheta^{\pm}(z)$ which have no zero point in $D^{\pm}$ respectively
\begin{align*}
\vartheta^{+}(z)=\det a(z)\prod_{n=1}^{\mathcal {N}}\frac{(z-z_{n}^{*})^{2}
(z+q_{0}^{2}/z_{n})^{2}}{(z-z_{n})^{2}(z+q_{0}^{2}/z_{n}^{*})^{2}},\\
\vartheta^{-}(z)=\det\bar{a}(z)\prod_{n=1}^{\mathcal {N}}\frac{(z-z_{n})^{2}
(z+q_{0}^{2}/z_{n}^{*})^{2}}{(z-z_{n}^{*})^{2}(z+q_{0}^{2}/z_{n})^{2}},
\end{align*}
we finally get the trace formula and $\theta$-condition
\begin{align*}
\det a(z)=exp\left(-\frac{1}{2\pi i}
\int_{\Sigma}\frac{\det\left(I_{m}+\rho^{\dagger}(z)\rho(z)\right)}{\zeta-z}
d\zeta\right)\prod_{n=1}^{\mathcal {N}}\frac{(z-z_{n}^{*})^{2}
(z+q_{0}^{2}/z_{n})^{2}}{(z-z_{n})^{2}(z+q_{0}^{2}/z_{n}^{*})^{2}}
\end{align*}
and
\begin{align*}
\theta_{+}-\theta_{-}=\frac{1}{2\pi}\int_{\Sigma}\det\left(I_{m}+
\rho^{\dagger}(z)\rho(z)\right)\frac{d\zeta}{\zeta}+8\sum_{n=1}^{\mathcal {N}}\delta_{n}
\end{align*}

\section*{Appendix D. Phase difference of one-soliton solution as $x\rightarrow\pm\infty$}
In this appendix, we will consider the asymptotic behavior of  the one-soliton solution of the equation under the condition $\det C_{1}=0$ or $\det C_{1}\neq0$ as
$x\rightarrow\pm\infty$. Resorting to the Eq.\eqref{T14} $\theta(x,t;z)=\lambda(z)\left(x+(4k^{2}(z)+2k_{0}^{2})t\right)$, and the Eq.\eqref{T10} for $\epsilon=-1$, one has
\begin{align*}
\lambda(z_{n})&=\frac{1}{2}\left(z_{n}+\frac{k_{0}^{2}}{z_{n}}\right)=\frac{1}{2}
\frac{|z_{n}|^{2}z_{n}+z_{n}^{*}k_{0}^{2}}{|z_{n}|^{2}}=\frac{1}{2}
\left(z_{n}+\frac{z_{n}^{*}k_{0}^{2}}{|z_{n}|^{2}}\right)\\&\Rightarrow
Im\lambda(z_{n})=\frac{1}{2}\left(Imz_{n}\right)\left(1-\frac{k_{0}^{2}}
{|z_{n}|^{2}}\right)=-Im\lambda(z_{n}^{*}).
\end{align*}
where $z_{n}\in\Sigma$ and we know that $e^{2i\theta(z_{n})}$ and $e^{-2i\theta(z_{n}^{*})}$ grow exponentially as $x\rightarrow-\infty$, and decay
exponentially as $x\rightarrow+\infty$. In addition, as $x\rightarrow+\infty$ using the Eqs.(4.11), (4.12) and (4.13) yield that $\varpiup_{1}\rightarrow1$, and that $\Upsilon_{1}, \Upsilon_{2}, \Gamma_{1}, \Gamma_{2}, X_{1}$ and $X_{2}\rightarrow I_{2}$, so $Q(x,t)\rightarrow Q_{+}$. In what follows, we discuss the asymptotic behavior as $x\rightarrow-\infty$. Note that there are two cases at this time, i.e., $\det C_{1}=0$, and $\det C_{1}\neq0$.\\
For the case $\det C_{1}=0$:
\begin{align*}
\left\{\begin{aligned}
&\varpiup_{1}=\frac{ik_{0}}{(\zeta_{1}^{*})^{2}+k_{0}^{2}}
e^{-2i\theta{(\zeta_{1}^{*})}}trace(C_{1}^{\dagger}),\\
&\Upsilon_{1}=I_{2}+\frac{\zeta_{1}^{2}+k_{0}^{2}}{\zeta_{1}
(\zeta_{1}^{*}-\zeta_{1})trace(C_{1})}C_{1},\\
&\Upsilon_{2}=I_{2}+\frac{\zeta_{1}((\zeta_{1}^{*})^{2}+k_{0}^{2})^{2}}
{k_{0}^{2}(\zeta_{1}^{*}-\zeta_{1})trace(C_{1}^{\dagger})},\\
&\Gamma_{1}=\frac{ik_{0}}{(\zeta_{1}^{*})^{2}+k_{0}^{2}} C_{1}^{\dagger}\left(
I_{2}-\frac{|\zeta_{1}^{2}+k_{0}^{2}|^{2}}{k_{0}^{2}(\zeta_{1}^{*}-\zeta_{1})
trace(C_{1})}C_{1}\right)e^{-2i\theta(z_{n}^{*})},\\
&\Gamma_{2}=-\frac{ik_{0}}{\zeta_{1}^{2}+k_{0}^{2}}C_{1}\left(
I_{2}-\frac{|\zeta_{1}^{2}+k_{0}^{2}|^{2}}{k_{0}^{2}(\zeta_{1}^{*}-\zeta_{1})
trace(C_{1}^{\dagger})}C_{1}^{\dagger}\right)e^{2i\theta(z_{n})}.
\end{aligned}\right.
\end{align*}
For the case $\det C_{1}\neq0$:
\begin{align*}
\left\{\begin{aligned}
&\varpiup_{1}=-\frac{k_{0}^{2}}{(\zeta_{1}^{*})^{2}+k_{0}^{2}}
e^{-4i\theta(z_{n}^{*})}\det(C_{1}^{\dagger}),\\
&\Upsilon_{1}=\frac{|\zeta_{1}|^{2}+k_{0}^{2}}{\zeta_{1}
(\zeta_{1}^{*}-\zeta_{1})}I_{2},\\
&\Upsilon_{2}=\frac{\zeta_{1}^{*}(|\zeta_{1}|^{2}+k_{0}^{2})}
{k_{0}^{2}(\zeta_{1}^{*}-\zeta_{1})}I_{2},\\
&\Gamma_{1}=-\frac{i(|\zeta_{1}|^{2}+k_{0}^{2})^{2}}{k_{0}(\zeta_{1}^{*}-\zeta_{1})^{2}
((\zeta_{1}^{*})^{2}+k_{0}^{2})}C_{1}^{\dagger}e^{-2i\theta{(\zeta_{1}^{*})}},\\
&\Gamma_{2}=\frac{i(|\zeta_{1}|^{2}+k_{0}^{2})^{2}}{k_{0}(\zeta_{1}^{*}-\zeta_{1})^{2}
(\zeta_{1}^{2}+k_{0}^{2})}C_{1}e^{2i\theta{(\zeta_{1})}}.
\end{aligned}\right.
\end{align*}
For the case $\det C_{1}\neq0$, according to the asymptotic property obtained above, we can deduce the asymptotic property of $X_{1}=\Upsilon_{1}\Gamma_{1}^{-1}$ and $X_{2}=\Upsilon_{2}\Gamma_{2}^{-1}$ and judge that they are exponentially decaying so as to further analyze the phase difference between potential function $Q(x,t)$ and boundary value condition $Q_{-}$ as $x\rightarrow-\infty$. From the Eq.\eqref{T54}
\begin{align*}
Q(x,t)&=k_{0}I_{2}-i\Upsilon_{1}\Gamma_{1}^{-1}e^{-2i\theta
(x,t;\zeta_{1}^{*})}C_{1}^{\dagger}+
\frac{ik_{0}^{2}}{\zeta_{1}^{2}}\Upsilon_{2}\Gamma_{2}^{-1}
C_{1}e^{2i\theta(x,t;\zeta_{1})}\\
&=k_{0}I_{2}+I_{2}\frac{k_{0}(\zeta_{1}^{*}-\zeta_{1})
\left[(\zeta_{1}^{*})^{2}+k_{0}^{2}+\zeta_{1}^{-1}\zeta_{1}^{*}
(\zeta_{1}^{2}+k_{0}^{2})\right]}{k_{0}(|\zeta_{1}|^{2}+k_{0}^{2})}\\
&=e^{-4i\tau}k_{0}I_{2},
\end{align*}
which in turn implies
\begin{align*}
Q(x,t)\sim Q_{-}=e^{-4i\tau}k_{0}I_{2},
\end{align*}
where $\tau$ denotes the phase of the $\zeta_{1}$.

For the case $\det C_{1}=0$, from the Eqs.\eqref{T57} and \eqref{T58} we have
\begin{align*}
&\Gamma_{1}^{-1}=\frac{1}{\Delta_{1}}\left[I_{2}+e^{-2i\theta{(\zeta_{1}^{*})}}\text{cof}
\left(\frac{ik_{0}}{(\zeta_{1}^{*})^{2}+k_{0}^{2}}I_{2}
-\frac{1}{(\zeta_{1}^{*}-\zeta_{1})^{2}}\frac{e^{2i\theta{(\zeta_{1})}}}
{\varpiup_{1}^{*}}C_{1}\right)\text{cof}(C_{1}^{\dagger})\right],\\
&\Gamma_{2}^{-1}=\frac{1}{\Delta_{1}^{*}}\left[I_{2}+e^{2i\theta{(\zeta_{1})}}
\text{cof}\left(-\frac{ik_{0}}{\zeta_{1}^{2}+k_{0}^{2}}I_{2}
-\frac{1}{(\zeta_{1}^{*}-\zeta_{1})^{2}}\frac{e^{-2i\theta{(\zeta_{1}^{*})}}}
{\varpiup_{1}}C_{1}^{\dagger}\right)\text{cof}(C_{1})\right],\\
&\Delta_{1}=1+e^{-2i\theta{(\zeta_{1}^{*})}}\text{trace}\left(
\frac{ik_{0}}{(\zeta_{1}^{*})^{2}+k_{0}^{2}}C_{1}^{\dagger}
-\frac{1}{(\zeta_{1}^{*}-\zeta_{1})^{2}}\frac{e^{2i\theta{(\zeta_{1})}}}
{\varpiup_{1}^{*}}C_{1}^{\dagger}C_{1}\right).
\end{align*}
Note that $\det C_{1}=0$ yields that $\text{cof}(C_{1})C_{1}=\text{cof}(C_{1}^{\dagger})C_{1}^{\dagger}=0$. Therefore as $x\rightarrow-\infty$, one has
\begin{align*}
&\varpi_{1}\thicksim\frac{ik_{0}}{(\zeta_{1}^{*})^{2}+k_{0}^{2}}
e^{-2i\theta(\zeta_{1}^{*})}\text{trace}(C_{1}^{\dagger}),\\
&\varpi_{1}^{*}\thicksim-\frac{ik_{0}}{\zeta_{1}^{2}+k_{0}^{2}}
e^{2i\theta(\zeta_{1})}\text{trace}(C_{1}),\\
\end{align*}
and then
\begin{align*}
\Delta_{1}\thicksim e^{-2i\theta{(\zeta_{1}^{*})}}\text{trace}\left[
\frac{ik_{0}}{(\zeta_{1}^{*})^{2}+k_{0}^{2}}C_{1}^{\dagger}+\frac{1}{(\zeta_{1}^{*}-
\zeta_{1})^{2}}\frac{\zeta_{1}^{2}+k_{0}^{2}}{ik_{0}\text{trace}
(C_{1})}C_{1}^{\dagger}C_{1}\right]\doteq e^{-2i\theta{(\zeta_{1}^{*})}}\Delta_{1}^{-}.
\end{align*}
As a consequence, the asymptotic behaviour of the one-soliton solution can be derived by as $x\rightarrow-\infty$ for $\det C_{1}=0$, i.e.,
\begin{align*}
&Q(x,t)=k_{0}I_{2}-i\Upsilon_{1}\Gamma_{1}^{-1}C_{1}^{\dagger}
e^{-2i\theta(\zeta_{1}^{*})}+
\frac{ik_{0}^{2}}{\zeta_{1}^{2}}\Upsilon_{2}\Gamma_{2}^{-1}
C_{1}e^{2i\theta(\zeta_{1})}\\ &\thicksim Q_{+}-\frac{1}{\Delta_{1}}\left[
I_{2}+\frac{\zeta_{1}^{2}+k_{0}^{2}C_{1}}{\zeta_{1}(\zeta_{1}^{*}-\zeta_{1})}\right]
C_{1}^{\dagger}e^{-2i\theta(\zeta_{1}^{*})}+\frac{1}{\Delta_{1}^{*}}\left[
I_{2}+\frac{\zeta_{1}((\zeta_{1}^{*})^{2}+k_{0}^{2})}{k_{0}^{2}(\zeta_{1}^{*}
-\zeta_{1})\text{trace}(C_{1}^{\dagger})}C_{1}^{\dagger}\right]
\frac{k_{0}^{2}C_{1}}{\zeta_{1}^{2}}e^{2i\theta(\zeta_{1})}\\
&\thicksim Q_{+}-\frac{iC_{1}^{\dagger}}{\Delta_{1}^{-}}+\frac{i(\zeta_{1}^{*})^{2}
+k_{0}^{2}}{\zeta_{1}(\zeta_{1}^{*}-\zeta_{1})\text{trace}(C_{1}^{\dagger})
(\Delta_{1}^{-})^{*}}(C_{1}C_{1}^{\dagger}+C_{1}^{\dagger}C_{1})+\frac{ik_{0}^{2}
C_{1}}{\zeta_{1}^{2}(\Delta_{1}^{-})^{*}}.
\end{align*}

\section*{References}

\end{document}